\def\@endtheorem{\endtrivlist}
\newtheoremstyle{definition}
{}{}{}{}{\bfseries}{.}{.5em}{{\thmname{#1 }}{\thmnumber{#2}}{\thmnote{ (#3)}}}
\theoremstyle{definition}
\newtheorem{defn}{Definition}
\newtheoremstyle{remboldstyle}
{}{}{\itshape}{}{\bfseries}{.}{.5em}{{\thmname{#1 }}{\thmnumber{#2}}{\thmnote{ (#3)}}}
\theoremstyle{remboldstyle}
\newtheorem{theor}[defn]{Theorem} 
\newtheorem{pro}[defn]{Proposition} 
\newtheorem{lemma}[defn]{Lemma} 
\newtheorem{coro}[defn]{Corollary}
\numberwithin{defn}{section}
\newcommand{\C}{\mathcal{CS}}
\newcommand{\D}{\mathcal{D}}
\newcommand{\W}{\mathcal{W}}
\newcommand{\R}{\mathcal{R}}
\newcommand{\M}{\mathcal{M}}
\newcommand{\E}{\mathcal{E}}
\newcommand{\I}{\mathcal{I}}
\newcommand{\p}{^{\prime}} 
\newcommand{\pp}{^{\prime\prime}}
\newcommand{\bl}{\langle}
\newcommand{\br}{\rangle}
\newcommand{\vvarphi}{\vec{\varphi}}
\newcommand{\tp}{\textbf{p}}
\newcommand{\tq}{\textbf{q}}
\newcommand{\tr}{\textbf{r}}
\newcommand{\tvp}{\vec{\textbf{p}}}
\newcommand{\Fj}{\mathit{Fml}_{J}}
\newcommand{\Tj}{\mathit{Term}_{J}}
\newcommand{\Fjv}{\mathit{Fml}_{J}(\textbf{V})}
\newcommand{\Cv}{\C(\textbf{V})}
\newcommand{\turn}{\vdash}
\newcommand{\turnc}{\vdash_{\C}}
\newcommand{\turncv}{\vdash_{\Cv}}
\newcommand{\nmodels}{\not\Vdash}
\newcommand{\model}{\bl\W,\R,\D,\I,\E \br}
\newcommand{\jtff}{\textsf{JT45}}
\newcommand{\msf}{\textsf{S4}}
\newcommand{\msfi}{\textsf{S5}}
\newcommand{\fomsf}{\textsf{FOS4b}}
\newcommand{\jlp}{\textsf{LP}}
\newcommand{\fojtff}{\textsf{FOJT45}}
\newcommand{\folp}{\textsf{FOLP}}
\newcommand{\fosf}{\textsf{FOS4}}
\newcommand{\folpb}{\textsf{FOLPb}}
\newcommand{\folpbn}{\textsf{FOLPb}_{0}}
\newcommand{\gen}{\mathsf{gen}}
\newcommand{\jb}{\mathsf{b}}
\newcommand{\lc}[1]{#1\!\!:\!\!}
\newcommand{\nec}{\square}
\title{First-order justification logic with constant domain semantics}
\author{Melvin Fitting\thanks{Department of Computer Science (Emeritus), The Graduate School and University Center (CUNY).} \and Felipe Salvatore\thanks{PhD candidate at  the Institute of Mathematics and Statistics. S\~{a}o Paulo University (USP).} \thanks{From $2013$ to $2015$, this author was supported by the S\~{a}o Paulo Research Foundation (FAPESP) Grant $2012/24619-7$. This collaboration was made possible by a visit of this author to New York, thanks again to the financial support of S\~{a}o Paulo Research Foundation Grant $2014/11951-9$.}}
\begin{document}

\maketitle

\begin{abstract}
Justification logic is a term used to identify a relatively new family of modal-like logics. There is an established literature about propositional justification logic, but incursions on the first-order case are scarce. In this paper we present a constant domain semantics for the first-order logic of proofs with the Barcan Formula (\folpb); then we prove Soundness and Completeness Theorems.  A monotonic semantics for a version of this logic without the Barcan Formula is already in the literature, but constant domains requires substantial new machinery, which may prove useful in other contexts as well.  Although we work mainly with one system, we also indicate how to generalize these results for the quantified version of \jtff, the justification counterpart of the modal logic \msfi.  We believe our methods are more generally applicable, but initially examining specific cases should make the work easier to follow.\\

Keywords: Justification logic, first-order logic of proofs, completeness, epistemic logic.
\end{abstract}

\section{Introduction}
\label{intro}
Propositional justification logics are similar to modal logics, except that instead of a modal necessity operator $\nec$ one has an infinite family of \emph{justification terms}.  These are built up from variables and constants using basic operators that depend on the particular justification logic.  In place of $\nec\varphi$ one has $\lc{t}\varphi$ where $t$ is a justification term.  One can think of $\lc{t}\varphi$ epistemically, asserting that `$t$ is a justification for $\varphi$'.  Then not only we can reason about what an agent knows but moreover we can reason about the ground of his/her knowledge: justifications.  The first justification logic \jlp \ (Logic of Proofs) was introduced as part of a solution to a long-standing question about an arithmetic semantics for propositional intuitionistic logic \cite{Artemov01}.  On one side, \jlp\ embeds into formal arithmetic, and on the other side the modal logic \msf\ embeds into \jlp.  For the embedding of \jlp\ into arithmetic, $\lc{t}\varphi$ can be understood as `$t$ is a proof for $\varphi$'; propositional variables can be interpreted as sentences of arithmetic, justification variables and constants can be associated with natural numbers and the basic operators can be related to specific recursive functions. Thus $\lc{t}\varphi$ can be translated in first-order arithmetic as the sentence asserting that the number denoted by $t$ is a proof of the G\"odel number of $\varphi$.  Since the early work, the number of justification logic, modal logic pairings has grown enormously, \cite{Fitting16}.  The connection with formal arithmetic does not extend widely, but the epistemic justification/modal connection remains central.  

In \cite{Artemov01} the following was proved. Given an \jlp\ theorem $\varphi$ if we just forget the justification terms and write boxes instead we will get an \msf\ theorem $\varphi^{\circ}$; and for any \msf\ theorem $\varphi$ there is a procedure $r$ (a \emph{realization}) that enables us to replace different occurrences of boxes with specific justification terms in order to get a  \jlp\ theorem $\varphi^{r}$.  The theorem that establishes this result is known as the \emph{Realization Theorem} for \jlp.  The  notions are defined so that for any modal formula $\varphi$, $(\varphi^r)^\circ = \varphi$. Realization results started with \msf\ but nowadays there are Realization Theorems for an infinite family of modal logics, \cite{Fitting16}.

A quantified version of the justification logic \jlp\ has been created, \cite{Artemov11,Fitting13,Fitting14}, but it has not yet been extensively studied.  In particular, justification versions of possible world models have all had \emph{monotonic} domain functions, and constant domain versions have not been considered.  It is the purpose of this paper to examine a quantified version of \jlp\ with a justification counterpart of the \emph{Barcan Formula}.  An axiomatic system together with a constant domain semantics is presented.  Soundness and Completeness are established.  We also show how to extend the results obtained to the quantified version of justification logic \jtff, the justification counterpart of \msfi.  We believe our work will extend to other quantified justification logics having the Barcan Formula as an axiom, but here we confine ourselves to very concrete instances.  Also, we do not consider corresponding Realization Theorems here, though this will be central to further work.

\section{\folpb: Language and axiom system}

\label{LanguageAxiomSystem}

Above we mentioned two informal interpretations of justification formulas, one epistemic, the other related to formal proofs.  It is the second that provides the best motivation for the machinery of first-order justification logics.  Recall, in the propositional case we use $\lc{t}\varphi$ to, informally, denote that $t$ is a derivation of $\varphi$. When we move to first-order we need to take individual variables into account.  In a first-order derivation, free variables play two roles: a variable can be a \emph{formal symbol} that can be subject to generalization, or it can be a \emph{place-holder} that can be substituted for.  These two roles are not compatible with each other, and distinguishing between them is actually the key to the fact that first-order justification logic can internalize its own proofs, a fundamental and essential property.  We will see formulas of the form $\lc{t}_X\varphi$ where $X$ is a set of individual variables.  Informally we can think of this as expressing that the variables occurring in $X$ have the role of place-holders in the derivation $t$ of $\varphi$, while variables not in $X$ can be universally generalized. This informal motivation will appear only incidentally in this paper, but it is important to be aware of it in order to understand the design of first-order justification logic.  In this new scenario we will deal with formulas such as $\lc{t}_{\{x\}}\varphi(x)$. The role of $\{x\}$ in $\lc{t}_{\{x\}}\varphi(x)$ is to indicate that \emph{$x$ is free throughout the derivation $t$} of $\varphi(x)$, and so if $e$ is an individual term of  the language and we substitute $e$ for $x$ in $t$ we will obtain a derivation $s$ of $\varphi(e)$. 

\begin{defn}[Basic vocabulary]
The symbols used to formulate the language of first-order justification logic have a familiar part common to classical first-order logic, and a part specifically related to justifications. The classical part consists of individual variables $ x_{0}, \, x_{1}, \, x_{2}, \, \dots$; predicate variables of all arities, $ P, \, Q, \, P\p, \, Q\p, \, \dots $; and the logical operations $\rightarrow, \, \bot, \, \forall $.  Other operations are understood as defined, and will be used as needed. As in the \jlp\ propositional case we have justification variables $ p_{0}, \, p_{1}, \, p_{2}, \, \dots $, justification constants $c_{0}, \, c_{1}, \, c_{2}, \, \dots $  and the operators $ +, \, \cdot, \, !$. What is new for the first-order case is that for every individual variable $x$ a new operator $\gen_{x}$ is added; and for every finite set of individual variables $X$ we add the constructor $(\cdot):_{X} (\cdot)$. This was introduced in \cite{Artemov11}. Here we add a new justification term $\jb(\cdot)$. It will play a role in the formulation of the Barcan Formula in justification logic. When working with constant domain first-order \jlp\ $\jb$ will be a primitive symbol; but when we move to first-order \jtff\ this symbol will be replaced with an expression defined from the other formal machinery of justification terms.
\end{defn}

\begin{defn}[Language]
The language of constant domain first-order \jlp\ consists of the set $\Tj$ of justification terms $t$ and of the set $\Fj$ of formulas $\varphi$, formed by the following grammar in which $X$ is a finite set of individual variables:        
\begin{align*} 
t &::= p_{i} \; | \; c \; | \; (t \cdot t) \; | \; (t + t) \; | \; !t \; | \; \jb(t) \; | \; \gen_{x}(t) \; , \\[0.1cm] 
\varphi &::=  Px_1 \dots x_n  \; |  \; \bot  \;|  \; (\varphi \rightarrow \varphi)  \; |  \; \forall x \varphi  \;|  \; \lc{t}_{X}\varphi \; .
\end{align*}
\end{defn}

We are assuming that the set of individual variables, justification variables and justification constants are all countable sets. Then, both $\Tj$ and $\Fj$ are countable sets. 

\begin{defn}[Free Variable Occurrence]\label{freevarocc}
Free variable occurrences are defined recursively as usual, but with one new clause.  The free variable occurrences of $\lc{t}_{X}\psi$, are those of $\psi$, provided the variables also occur in $X$, together with all occurrences in $X$ itself. We will use ${\it fv(\varphi)}$ to denote the set of free variables of $\varphi$. Hence, by the definition of free variable occurrence, ${\it fv(\lc{t}_{X}\varphi) = X}$.
\end{defn}

\begin{defn}[Substitution]
We define the notion of an individual variable $y$ being free for $x$ in the formula $\varphi$. The definition is the same as in the classical case, with the addition of the following clause: $y$ is free for $x$ in $\lc{t}_{X}\varphi$ if two conditions are met, i) $y$ is free for $x$ in $\varphi$, ii) if $y \in fv(\varphi)$, then $y \in X$.

For a formula $\varphi(x_{1}, \dots, x_{n})$ and individual variables $y_{1}, \dots, y_{n}$ such that $y_i$ is free for $x_i$ in $\varphi$, we write $\varphi(y_{1}/x_{1}, \dots, y_{n}/x_{n})$ to denote the formula obtained by substitution of $y_{1}, \dots, y_{n}$ for all the free occurrences of $x_{1}, \dots, x_{n}$ in $\varphi$,  respectively. When it is clear from the context which variables are being substituted for in $\varphi$ we will simply write $\varphi(y_{1}, \dots, y_{n})$.
\end{defn}

We adopt some useful notational conventions concerning finite sets of individual variables.  We write $Xy$ instead of $X \cup \{y\} $, and we assume that $y \notin X$.  We use $\lc{t}\varphi$ as an abbreviation for $\lc{t}_{\emptyset}\varphi$.

\begin{defn} [Basic axiom system]
We begin with an auxiliary axiom system called $\folpbn$ that can be found in Figure~\ref{basicaxioms}. 
\vspace{0.5cm}

\begin{framed}
\begin{tabular}{l l}

\textbf{A1} & Classical axioms of first-order logic\\[0.1cm]

\textbf{A2} & $\lc{t}_{Xy}\varphi \rightarrow\lc{t}_{X}\varphi$, provided $y$ does not occur free in $\varphi$ \\[0.1cm]

\textbf{A3} & $\lc{t}_{X}\varphi \rightarrow \lc{t}_{Xy}\varphi$\\[0.1cm]

\textbf{B1} & $\lc{t}_{X}\varphi \rightarrow \varphi$ \\[0.1cm]

\textbf{B2} & $\lc{t}_{X}(\varphi \rightarrow \psi) \rightarrow (\lc{s}_{X}\varphi \rightarrow \lc{[t\cdot s]}_{X}\psi)$ \\[0.1cm] 

\textbf{B3} & $\lc{t}_{X}\varphi \rightarrow \lc{[t+s]}_{X}\varphi, \, \lc{s}_{X}\varphi \rightarrow \lc{[t+s]}_{X}\varphi$ \\[0.1cm]

\textbf{B4} & $\lc{t}_{X}\varphi \rightarrow \lc{!t}_{X}\lc{t}_{X}\varphi$ \\[0.1cm]

\textbf{B5} & $\lc{t}_{X}\varphi \rightarrow \lc{\gen_{x}(t)}_{X}\forall x \varphi$, provided $x \notin X$ \\[0.1cm]                

\textbf{Bb} & $\forall y \lc{t}_{Xy}\varphi(y) \rightarrow \lc{\jb(t)}_{X} \forall y \varphi(y)$ \\[0.1cm]

\textbf{R1} (\emph{Modus ponens}) & $\turn \varphi, \, \turn \varphi\rightarrow\psi \, \Rightarrow \; \turn \psi$ \\[0.1cm]

\textbf{R2} (\emph{Generalization}) & $\turn \varphi \Rightarrow \; \turn \forall x \varphi$\\[0.1cm]

\end{tabular}
\end{framed}
\captionof{figure}{Axioms schemes and inference rules of $\folpbn$.\protect\label{basicaxioms}}
\vspace{0.5cm}  
\end{defn}

There is no necessitation rule given in Figure~\ref{basicaxioms}.  A necessitation rule can be derived using the machinery of constant specifications, which we discuss next.  The idea is, axioms are not further analyzed and are simply (and arbitrarily) assigned constants that represent justifications for them.  Details may depend, for instance, on what axioms for classical logic are chosen for \textbf{A1} in Figure~\ref{basicaxioms}.

\begin{defn} [Constant specification]   
A set $\C \subseteq \Fj$ is called a \emph{constant specification} if every member of $\C$ is of the form $ \lc{c}\psi$, where $\psi$ is an axiom and $c$ is a constant. We say that $\C$ is \emph{axiomatically appropriate}, if for every axiom $\psi$ there is a constant $c$ such that $\lc{c}\psi \in \C$.
\end{defn}  

\begin{defn} [\folpb] 
The subscript $0$ in the name $\folpbn$ is intended to tell us this axiom system has an empty constant specification.  We write $\folpb_{\C}$ to denote the axiom system that extends $\folpbn$ by adding the formulas from $\C$ as new axioms. By \emph{constant domain first-order \jlp}  \ (\folpb) we mean the family of axiom systems $\folpb_{\C}$ where $\C$ is a constant specification. The notion of derivation, $\Gamma \turn \varphi$, is defined as usual --- it must be noted that, if $\Gamma$ deduces $\varphi$ using the generalization rule, then this rule was not applied to a variable that occurs free in the formulas of $\Gamma$. We use $\Gamma \turn_{\C} \varphi$ to indicate that $\varphi$ is provable in $\folpb_{\C}$ by assuming $\Gamma$.
\end{defn}  

The axiom \textbf{Bb} is a justification analog of the Barcan formula. This axiom is indeed independent of the other ones, by the following argument. By \cite{Artemov11} \folp\ is a counterpart of \fosf, thus if every instance of the axiom \textbf{Bb} were provable in \folp, then every instance of the Barcan formula would be provable in \fosf; and this is not the case.     

\begin{lemma}[Deduction]\label{deductionlemma}
For every constant specification $\C$,
\[
\Gamma,\varphi \turn_\C \psi\; \mbox{ iff } \; \Gamma \turn_\C \varphi \rightarrow \psi \; .
\]
\end{lemma}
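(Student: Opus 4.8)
The plan is to prove the two implications separately, with essentially all the work in the left-to-right direction; nothing here is peculiar to justification logic beyond the bookkeeping forced by the restriction on generalization built into $\turn_{\C}$. The right-to-left direction is routine: assuming $\Gamma \turn_{\C} \varphi \rightarrow \psi$, weakening gives $\Gamma, \varphi \turn_{\C} \varphi \rightarrow \psi$, and since $\Gamma, \varphi \turn_{\C} \varphi$ holds trivially, a single use of \textbf{R1} yields $\Gamma, \varphi \turn_{\C} \psi$. The only point deserving attention is that weakening must respect the restriction on \textbf{R2}: a witnessing derivation of $\varphi \rightarrow \psi$ from $\Gamma$ generalizes only on variables not free in $\Gamma$, whereas we now need it to generalize only on variables not free in $\Gamma \cup \{\varphi\}$. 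This is the routine matter of renaming the offending generalized variables --- all of which lie outside $\mathit{fv}(\Gamma)$ --- to fresh ones, after which appending $\varphi$ and one modus ponens step produces the required derivation of $\psi$ over $\Gamma \cup \{\varphi\}$.

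For the left-to-right direction I would fix a derivation of $\psi$ from $\Gamma \cup \{\varphi\}$ and induct on its length, establishing $\Gamma \turn_{\C} \varphi \rightarrow \psi$ at every line. Throughout, \textbf{A1} supplies the propositional tautologies $\psi \rightarrow (\varphi \rightarrow \psi)$ and $(\varphi \rightarrow (\chi \rightarrow \psi)) \rightarrow ((\varphi \rightarrow \chi) \rightarrow (\varphi \rightarrow \psi))$. The base and propositional cases are standard: if the current line is an axiom --- a classical axiom from \textbf{A1}, one of the justification schemes \textbf{A2}, \textbf{A3}, \textbf{B1}--\textbf{B5}, \textbf{Bb}, or a constant-specification formula of $\C$ --- or a member of $\Gamma$, then it is already derivable from $\Gamma$, and the first tautology with \textbf{R1} gives $\Gamma \turn_{\C} \varphi \rightarrow \psi$; if the line is $\varphi$ itself, one uses the theorem $\varphi \rightarrow \varphi$; and if it comes by modus ponens, the second tautology combines the two induction hypotheses. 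The key simplification is that every principle specific to $\folpb$ enters only as an axiom scheme, never as a rule, so all of them collapse into the single axiom case: the argument never inspects the internal structure of justification terms, annotation sets, or the operator $\jb$, and the classical proof carries over unchanged.

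The one genuinely delicate step --- and where I expect the real care to be needed --- is the generalization case, in which the line is $\psi = \forall x \chi$ obtained from an earlier line $\chi$ by \textbf{R2}. Here I invoke the restriction recorded in the definition of $\turn_{\C}$: since the given derivation from $\Gamma \cup \{\varphi\}$ generalizes on $x$, that $x$ is free in no formula of $\Gamma \cup \{\varphi\}$, and in particular $x \notin \mathit{fv}(\varphi)$. The induction hypothesis gives $\Gamma \turn_{\C} \varphi \rightarrow \chi$; since $x$ is not free in $\Gamma$, generalization is legitimate and yields $\Gamma \turn_{\C} \forall x (\varphi \rightarrow \chi)$; and since $x \notin \mathit{fv}(\varphi)$, the classical quantifier-shift axiom $\forall x (\varphi \rightarrow \chi) \rightarrow (\varphi \rightarrow \forall x \chi)$ from \textbf{A1}, with \textbf{R1}, delivers $\Gamma \turn_{\C} \varphi \rightarrow \forall x \chi$, as required. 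I would then confirm that the derivation assembled for $\Gamma \turn_{\C} \varphi \rightarrow \psi$ itself respects the \textbf{R2} restriction relative to $\Gamma$: each generalization introduced is on the same $x \notin \mathit{fv}(\Gamma)$, so no illegitimate generalization arises and the induction closes.
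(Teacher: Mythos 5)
Your proof is correct: it is the standard deduction-theorem induction, and the observation that every \folpb-specific principle (\textbf{A2}--\textbf{A3}, \textbf{B1}--\textbf{B5}, \textbf{Bb}, and the members of $\C$) enters only as an axiom scheme --- so that the only cases needing real work are modus ponens and the generalization step, the latter handled exactly by the restriction on \textbf{R2} built into the definition of $\Gamma \turn_{\C} \varphi$ --- is precisely why the lemma holds here as in classical first-order logic. The paper itself states the lemma without any proof, so there is nothing to compare against; your argument, including the care taken over the \textbf{R2} restriction in both directions, fills that gap correctly.
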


And now, the result that takes the place of a primitive necessitation rule.

\begin{theor} [Internalization]\label{internalizationtheorem}
Let $\C$ be an axiomatically appropriate constant specification; $p_{0}, \, \dots, \, p_{k}$ be justification variables; $X_{0}, \, \dots, \, X_{k}$ be finite sets of individual variables, and $X =X_{0} \cup \,\dots \, \cup X_{k}$. If  $\lc{p_0}_{X_0}\varphi_0, \, \ldots, \, \lc{p_k}_{X_k}\varphi_k \turn_\C \psi$, then there is a justification term $t(p_{0}, \, \dots, \, p_{k}) \in \Tj$ such that $\lc{p_0}_{X_0}\varphi_0, \, \ldots, \, \lc{p_k}_{X_k}\varphi_k \turn_\C \lc{t}_X\psi$.
\end{theor}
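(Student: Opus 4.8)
The plan is to argue by induction on the length of a derivation witnessing $\lc{p_0}_{X_0}\varphi_0, \ldots, \lc{p_k}_{X_k}\varphi_k \turn_{\C}\psi$. Write $\Gamma$ for the set of hypotheses $\{\lc{p_0}_{X_0}\varphi_0, \ldots, \lc{p_k}_{X_k}\varphi_k\}$. For every line $\psi$ of the derivation I will produce a justification term $t$, built only from $p_0, \ldots, p_k$ and constants, such that $\Gamma \turn_{\C}\lc{t}_{X}\psi$, proceeding by cases on how $\psi$ enters the derivation. Since $\C$ is axiomatically appropriate, the base cases will have the constants of $\C$ available as justifications for axioms, and the inductive cases will assemble $t$ compositionally from the terms supplied by the induction hypothesis, so that the final $t$ depends on exactly the advertised variables.

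For the base cases: if $\psi$ is an axiom, then by axiomatic appropriateness there is a constant $c$ with $\lc{c}\psi \in \C$, so $\Gamma \turn_{\C}\lc{c}_{\emptyset}\psi$, and repeatedly applying \textbf{A3} to adjoin the variables of $X$ one at a time gives $\Gamma \turn_{\C}\lc{c}_{X}\psi$; here $t = c$, and the subscript adjustment leaves the term untouched. If $\psi$ is one of the hypotheses, say $\psi = \lc{p_i}_{X_i}\varphi_i$, then from this hypothesis together with \textbf{B4} I obtain $\lc{!p_i}_{X_i}\lc{p_i}_{X_i}\varphi_i$, and since $X_i \subseteq X$, iterating \textbf{A3} on the outer subscript yields $\lc{!p_i}_{X}\psi$, so $t = !p_i$ works.

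For the inductive cases: if $\psi$ is obtained by modus ponens from $\chi$ and $\chi \rightarrow \psi$, the induction hypothesis supplies terms $s, u$ with $\Gamma \turn_{\C}\lc{s}_{X}\chi$ and $\Gamma \turn_{\C}\lc{u}_{X}(\chi \rightarrow \psi)$; instantiating \textbf{B2} and applying modus ponens twice gives $\Gamma \turn_{\C}\lc{[u\cdot s]}_{X}\psi$, so $t = u\cdot s$. If $\psi = \forall x\chi$ is obtained from $\chi$ by generalization, the induction hypothesis gives $s$ with $\Gamma \turn_{\C}\lc{s}_{X}\chi$, and then \textbf{B5} provides $\lc{s}_{X}\chi \rightarrow \lc{\gen_x(s)}_{X}\forall x\chi$, so modus ponens yields $\lc{\gen_x(s)}_{X}\psi$ with $t = \gen_x(s)$.

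The one place that requires genuine care, and the crux of the argument, is the proviso $x \notin X$ needed to invoke \textbf{B5} in the generalization case. I rely on the restriction built into the notion of derivation: a generalization step in deriving $\psi$ from $\Gamma$ never generalizes a variable that occurs free in a formula of $\Gamma$. By Definition~\ref{freevarocc} the free variables of $\lc{p_i}_{X_i}\varphi_i$ are exactly $X_i$, so $x \notin X_i$ for every $i$, whence $x \notin X = X_0 \cup \cdots \cup X_k$ and \textbf{B5} applies. This is precisely the point for which the free-variable convention and the indexing sets $X_i$ were designed. I note finally that the constructed derivation of $\lc{t}_{X}\psi$ uses only \textbf{A3}, \textbf{B2}, \textbf{B4}, \textbf{B5}, the constants of $\C$, and modus ponens, with every generalization step absorbed into $\gen_x$; in particular the side condition on generalization never arises in the new derivation, and the Barcan axiom \textbf{Bb} is not needed for internalization at all.
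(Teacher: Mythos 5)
Your proof is correct and follows essentially the same route as the paper, which simply defers to the standard lifting-lemma induction of Artemov and Yavorskaya: constants for axioms, $!p_i$ for hypotheses, $\cdot$ for modus ponens, $\gen_{x}$ for generalization, with the side condition of \textbf{B5} discharged exactly as you describe via the free-variable convention $fv(\lc{p_i}_{X_i}\varphi_i)=X_i$. The only case you leave implicit is a derivation line that is a member of $\C$ itself; it is covered either by reading axiomatic appropriateness as extending to these added axioms or directly by the same \textbf{B4}-plus-\textbf{A3} move you use for the hypotheses.
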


\begin{proof}
The same proof as presented in \cite[p.\ 7]{Artemov11}.
\end{proof}

\begin{theor}[Converse Barcan and Buridan]
\label{cbarcanburidan}
Let $\C$ be an axiomatically appropriate constant specification and $y$ be an individual variable. For every finite set of individual variables $X$ such that $y \notin X$, every formula $\varphi(y)$, and every justification term $t$, there are justification terms $f(t)$ and $s(t)$ such that:
\begin{enumerate} 
\item $ \turn_{\C} \lc{t}_{X} \forall y \varphi(y) \rightarrow \forall y \lc{f(t)}_{Xy}\varphi(y) \; .$
\item $\turn_{\C} \exists y \lc{t}_{Xy} \varphi(y) \rightarrow \lc{s(t)}_{X} \exists y \varphi(y) \; .$
\end{enumerate}
Item 1 is the explicit counterpart of the converse Barcan Formula; item 2 is the explicit counterpart of the converse Buridan Formula.    
\end{theor}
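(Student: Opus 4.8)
The plan is to reproduce, inside $\folpb_\C$, the familiar classical derivations of the converse Barcan and converse Buridan formulas, replacing each appeal to necessitation by a justification term and each appeal to the normality of $\nec$ by the application operator of axiom \textbf{B2}. Because $\C$ is axiomatically appropriate, every theorem of $\folpb_\C$ can be internalized (Theorem~\ref{internalizationtheorem}), so the classical quantifier laws used below come with justifying terms. The only genuine work is the bookkeeping of the annotations $X$ and $Xy$, which is handled by \textbf{A2} and \textbf{A3}.

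For item 1 I would start from the classical instantiation theorem $\forall y \varphi(y) \rightarrow \varphi(y)$. Internalizing it yields a term $c$ with $\turn_\C \lc{c}_{\emptyset}(\forall y \varphi(y) \rightarrow \varphi(y))$, and repeated use of \textbf{A3} lifts the annotation to $Xy$. Applying \textbf{B2} and modus ponens gives $\turn_\C \lc{t}_{Xy}\forall y\varphi(y) \rightarrow \lc{[c\cdot t]}_{Xy}\varphi(y)$; composing this with the instance $\lc{t}_{X}\forall y\varphi(y) \rightarrow \lc{t}_{Xy}\forall y\varphi(y)$ of \textbf{A3} produces
\[
\turn_\C \lc{t}_{X}\forall y\varphi(y) \rightarrow \lc{[c\cdot t]}_{Xy}\varphi(y).
\]
Now I would generalize on $y$ by \textbf{R2} and push the quantifier inward using the classical law $\forall y(A \rightarrow B) \rightarrow (A \rightarrow \forall y\, B)$. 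This is licensed because $y$ is not free in the antecedent $A = \lc{t}_{X}\forall y\varphi(y)$: by Definition~\ref{freevarocc} we have $\mathit{fv}(A) = X$ and $y \notin X$. This delivers item 1 with $f(t) = c\cdot t$.

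Item 2 is dual. I would begin from $\varphi(y) \rightarrow \exists y\varphi(y)$, internalize it and pad by \textbf{A3} to obtain $\turn_\C \lc{d}_{Xy}(\varphi(y)\rightarrow\exists y\varphi(y))$, then apply \textbf{B2} to get $\turn_\C \lc{t}_{Xy}\varphi(y) \rightarrow \lc{[d\cdot t]}_{Xy}\exists y\varphi(y)$. The key move is then \textbf{A2}: since $y$ is bound in $\exists y\varphi(y)$ it does not occur free there, so \textbf{A2} drops the annotation, $\lc{[d\cdot t]}_{Xy}\exists y\varphi(y) \rightarrow \lc{[d\cdot t]}_{X}\exists y\varphi(y)$, whence
\[
\turn_\C \lc{t}_{Xy}\varphi(y) \rightarrow \lc{[d\cdot t]}_{X}\exists y\varphi(y).
\]
Generalizing on $y$ and applying the classical law $\forall y(A(y) \rightarrow B) \rightarrow (\exists y\,A(y) \rightarrow B)$---legitimate because $\mathit{fv}(\lc{[d\cdot t]}_{X}\exists y\varphi(y)) = X$ and $y \notin X$---yields item 2 with $s(t) = d\cdot t$.

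The propositional and first-order reasoning (transitivity of implication, the two quantifier-shift laws, and the generalization steps, all covered by \textbf{A1}, \textbf{R1}, \textbf{R2}) is routine. The part demanding care is the annotation arithmetic: at each step one must check the side condition of \textbf{A2} (that $y$ is not free in the inner formula) and of \textbf{A3} ($y \notin X$), and verify the free-variable facts justifying the two quantifier shifts. All of these hinge on the nonstandard clause of Definition~\ref{freevarocc}, namely that $\mathit{fv}(\lc{t}_{X}\psi) = X$ regardless of which variables occur in $\psi$; this is precisely what makes $y$ non-free in $\lc{t}_{X}\forall y\varphi(y)$ and in $\lc{[d\cdot t]}_{X}\exists y\varphi(y)$, and hence what makes the argument go through. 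I expect this bookkeeping to be the only thing requiring attention; there is no deeper obstacle.
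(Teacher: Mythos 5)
Your proposal is correct and follows essentially the same derivation as the paper: internalize the relevant classical quantifier theorem, lift the annotation with \textbf{A3}, apply \textbf{B2}, adjust the subscript with \textbf{A2}, and finish with generalization plus the quantifier-shift laws, justified exactly as you say by $\mathit{fv}(\lc{t}_{X}\chi)=X$ and $y\notin X$. The only (harmless) difference is in item 2, where the paper internalizes $\forall y(\varphi(y)\rightarrow\exists y\varphi(y))$ and then strips the quantifier using item 1, whereas you internalize $\varphi(y)\rightarrow\exists y\varphi(y)$ directly --- a slightly shorter route to the same term $s(t)$.
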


\begin{proof}
First, the derivation of the explicit converse Barcan formula.
\begin{center}  
\begin{tabular}{l l}
1.  $ \forall y \varphi(y) \rightarrow \varphi(y)$ &  \emph{classical axiom.} \\[0.3cm]
2.  $\lc{c_{1}}(\forall y \varphi(y) \rightarrow \varphi(y))$ &  \emph{constant specification.} \\[0.3cm]       
3.   $\lc{c_{1}}_{Xy}(\forall y \varphi(y) \rightarrow \varphi(y))$ &  \emph{ from 2 by \textbf{A3}.} \\[0.3cm]
4.  $\lc{c_{1}}_{Xy}(\forall y \varphi(y) \rightarrow \varphi(y)) \rightarrow  (\lc{t}_{Xy}\forall y \varphi(y)  \rightarrow   \lc{[c_{1} \cdot t]}_{Xy} \varphi(y))$ &  \emph{\textbf{B2}.} \\[0.3cm]
5.  $\lc{t}_{Xy}\forall y \varphi(y)  \rightarrow   \lc{[c_{1} \cdot t]}_{Xy} \varphi(y)$ &   \emph{from 4 using modus ponens.}  \\[0.3cm]
6.  $\lc{t}_{X}\forall y \varphi(y)  \rightarrow   \lc{t}_{Xy}\forall y \varphi(y)$ &  \emph{ \textbf{A3}.} \\[0.3cm]       
7.  $\lc{t}_{X}\forall y \varphi(y)  \rightarrow    \lc{[c_{1} \cdot t]}_{Xy} \varphi(y)$ &  \emph{from 5 and 6.} \\[0.3cm]
8.  $\forall y(\lc{t}_{X}\forall y \varphi(y)  \rightarrow    \lc{[c_{1} \cdot t]}_{Xy} \varphi(y))$ &  \emph{generalization.} \\[0.3cm]
9.  $\lc{t}_{X}\forall y \varphi(y)  \rightarrow    \forall y\lc{[c_{1} \cdot t]}_{Xy} \varphi(y)$ &  \emph{$y \notin X$ and classical reasoning.} \\[0.3cm]        
\end{tabular}
\end{center}
Next the explicit converse Buridan Formula.

\begin{center}  
\begin{tabular}{l l}
1.  $ \forall y (\varphi(y)  \rightarrow \exists y \varphi(y))$ &  \emph{classical validity.} \\[0.3cm]
2.  $\lc{r}\forall y (\varphi(y) \rightarrow \exists y \varphi(y))$   & \emph{using internalization.} \\[0.3cm]
3. $\lc{r}_{X}\forall  y (\varphi(y) \rightarrow \exists y \varphi(y))$ & \emph{from 2 by \textbf{A3}.}  \\[0.3cm]
4. $\forall  y \lc{f(r)}_{Xy} (\varphi(y) \rightarrow \exists y \varphi(y))$  & \emph{from 3 using conv.\ Barcan.} \\[0.3cm]
5. $\lc{f(r)}_{Xy}(\varphi(y) \rightarrow \exists y \varphi(y))$   & \emph{from 4 and classical axioms.}\\[0.3cm]   
6. $\lc{t}_{Xy}\varphi(y) \rightarrow \lc{[f(r) \cdot t]}_{Xy}\exists y \varphi(y)$   &  \emph{from 5 using \textbf{B2}.}  \\[0.3cm]
7. $\lc{[f(r) \cdot t]}_{Xy}\exists y \varphi(y) \rightarrow \lc{[f(r) \cdot t]}_{X}\exists y \varphi(y)$ & \emph{\textbf{A2}.}  \\[0.3cm]
8. $\lc{t}_{Xy}\varphi(y) \rightarrow \lc{[f(r) \cdot t]}_{X}\exists y \varphi(y)$     &  \emph{from 6 and 7.}\\[0.3cm]
9. $\forall  y (\lc{t}_{Xy}\varphi(y) \rightarrow \lc{[f(r) \cdot t]}_{X}\exists y \varphi(y))$  &  \emph{generalization.} \\[0.3cm]
10. $\exists y \lc{t}_{Xy}\varphi(y) \rightarrow \lc{[f(r) \cdot t]}_{X}\exists y \varphi(y)$  & \emph{$y \notin X$ and classical reasoning.} \\[0.3cm]
\end{tabular}
\end{center}
\end{proof} 

As the reader can see, the terms $f$ and $s$ constructed above depend not only on $t$ but also on the formula $\varphi$ and the constant specification $\C$ being considered. For simplicity's sake, we decided to suppress some details in the notation.  Also it should be noted that axiom \textbf{Bb} was not used in these proofs. 

\section{Semantics}\label{semantics}

In propositional modal logic one uses the notion of \emph{frame}. In the quantified case we have the analogous notion of \emph{skeleton}. Skeletons are also used in the semantics of  first-order justification logic.

\begin{defn}[Skeleton]
\label{defskeleton}
A \emph{skeleton} is a triple $\bl \W, \R, \D \br$, where $\W\neq \emptyset$, $\R \subseteq \W \times \W$ and $\D$ is a non-empty set, the \emph{domain} of the skeleton.\footnote{Since we will be working with constant domain models only we do not need to consider different domains associated to each world $w \in \W$, so $\D$ will be always a set and not a function.} Given a skeleton $\bl \W, \R, \D \br$, we call it an \emph{\folpb-skeleton} if $\R$ is a reflexive and transitive relation on $\W$. Similarly, we call it an \emph{\fojtff-skeleton} if $\R$ is an equivalence relation.   
\end{defn}  

\noindent\textbf{Conventions and Notation} \label{conventionandnotation:} For the rest of the paper we will use $\vec{x}, \vec{y}, \vec{z}, \dots $ as  finite sequences or vectors of individual variables where the length can be inferred from context. Likewise we write $\vec{a}$, $\vec{b}$, $\vec{c}$, \ldots for vectors of members of the domain $\D$ of a model ($c$ is also used to denote a justification constant; the two distinct uses of this symbol will  be clear from context).  For the sake of simplicity, if $\vec{a}$ is a $n$-uple of elements of $\D$ we will write `$\vec{a} \in \D$' instead of $\vec{a} \in \D^{n}$. In the same manner, we will write `$\{\vec{a}\}$' instead of `$\{a_{1}, \dots , a_{n}\}$'. As before, for a formula $\varphi(\vec{x})$ and domain members $\vec{a}$, we use $\varphi (\vec{a})$ as a shorthand for the obvious substitution.  For term $t$ and $\vec{x} = \langle x_{1}, x_{2}, \dots, x_{n}\rangle$ we will write  `$\gen_{\vec{x}}(t)$' to abbreviate the term `$\gen_{x_{1}}(\gen_{x_{2}} \dots (\gen_{x_{n}}(t))\ldots)$'.

Ordinarily in dealing with models one uses \emph{valuations}, mapping variables to members of some domain $\D$.  Thus one talks about the behavior of a formula $\varphi$ with respect to valuation $v$.  In this paper things are already somewhat complicated, so we follow a common procedure of allowing members of the domain $\D$ to appear directly in formulas as if they were variables---we call such things \emph{$\D$ formulas}.  Technically the result is not a formula, but any assertion about it can easily be rephrased using proper formulas and valuations.  

\begin{defn} [$\D$-formulas]
\label{DFormulas}
Let $\D$ be a non-empty set.  If $\vec{x} = \langle x_1, \ldots, x_n\rangle$ is a vector of variables, $\vec{d} = \langle d_1, \ldots, d_n\rangle$ is a vector of members of domain $\D$, and $\varphi(\vec{x})$ is a formula, we write $\varphi(\vec{d})$ for what we call a \emph{$\D$ formula} that results by having $d_i$ in place of each free occurrence of $x_i$, $i=1, \ldots, n$, Definition~\ref{freevarocc}.  We use $\D$-$\Fj$ for the set of all \emph{$\D$-formulas}.  Free individual variables can occur in $\D$ formulas. For a $\D$-formula $\varphi$, we say that $\varphi$ is closed if $\varphi$ has no individual free variable occurrences, though it may contain members of $\D$.
\end{defn}

We emphasize that the use of $\D$ formulas can easily be replaced with valuations.  It is simply a notational convenience.  Thus for $\vec{x} = \langle x_1, \ldots, x_n\rangle$ a vector of variables, $\vec{d} = \langle d_1, \ldots, d_n\rangle$ a vector of members of domain $\D$, and $\varphi(\vec{x})$ a formula, instead of speaking about the $\D$ formula $\varphi(\vec{d})$ we could speak about $\varphi(\vec{x})$ with respect to the valuation $v$ that is such that $v(x_i) = d_i$.  These conventions were also used in \cite{Fitting14}.  Note that while domain members in $\D$ formulas have some of the aspects of individual variables, they are never quantified and do not appear as subscripts of $\gen$.

Here is an illustrative example (consulting Definition~\ref{freevarocc} may be useful).  Suppose the set $\D$ contains $a$, $b$, $c$, and $d$, and possibly other things.  Suppose valuation $v$ is such that $v(x) = a$, $v(y)=b$, $v(z)=c$, and $v(w) = d$.  And suppose $P$ is a three-place predicate symbol.  Then $P(x,z,w)$, considered with respect to $v$, would be written simply as $P(a, c,d)$, and $\lc{t}_{\{x,y\}}P(x,z,w)$, considered with respect to $v$, would be written $\lc{t}_{\{a,b\}}P(a,z,w)$, since $z$ and $w$ are not free in $\lc{t}_{\{x,y\}}P(x,z,w)$ while $x$ and $y$ are.  

In Section~\ref{LanguageAxiomSystem} we discussed the informal meaning of variables occurring in a subscript set: $\lc{t}_{\{x\}}\varphi(x)$ says that $x$ is free throughout the derivation $t$ of $\varphi(x)$.  This understanding carries over to $\D$ formulas quite naturally.  For a domain constant $a$, $t(a)$ informally is a derivation of $\varphi(a)$ and, since $a$ is a domain constant, it is not subjected to generalization.  The effect is that it must behave like a free occurrence throughout the derivation $t(a)$ of $\varphi(a)$.

\begin{defn} [Fitting models]
A \emph{Fitting model} is a structure $\M = \model$ with the following constituents and terminology.  $\bl \W, \R, \D \br$ is a skeleton; we refer to $\D$ as the domain of the model, and similarly for $\W$ and $\R$.  $\I$ is an \emph{interpretation function}, assigning to each $n$-ary relational symbol $P$ and each possible world $w$ an $n$-ary relation $\I(P,w)$ on $\D$.  $\E$ is an \emph{evidence function}; for any justification term $t$ and \emph{$\D$-formula} $\varphi$, $\E(t,\varphi) \subseteq \W$.
\end{defn}

The standard informal understanding of evidence functions is: $\E(t,\varphi)$ is the set of possible worlds at which $t$ can be considered to be relevant (not necessarily correct) evidence for $\varphi$.

\begin{defn}[Evidence function conditions] 
\label{evidencefunctinconditions}
Let $\M = \model$ be a Fitting model. We require the evidence function to meet the following conditions:
\begin{itemize} 
\item[] \textbf{$\cdot$ Condition} $\E(t, \varphi \rightarrow \psi) \cap \E(s, \varphi) \subseteq \E([t\cdot s], \psi).$
\item[] \textbf{$+$ Condition} $\E (s, \varphi) \cup \E(t, \varphi) \subseteq \E([s+t], \varphi).$
\item[] \textbf{$!$ Condition} $\E (t, \varphi) \subseteq \E(!t, \lc{t}_{X} \varphi)$, where $X$ is a set of members of $\D$ including those occurring in $\varphi$.
\item[] \textbf{$\R$ Closure Condition} If $w \in \E (t, \varphi)$ and $w \R w\p$, then $w\p \in \E (t, \varphi)$.
\item[] \textbf{Instantiation Condition} If $w \in \E (t, \varphi(x))$ and $a \in \D$, then $w \in \E (t, \varphi(a))$.
\item[] \textbf{$\gen_{x}$ Condition} $\E (t, \varphi) \subseteq \E(\gen_{x}(t),\forall x\varphi)$.
\item[] \textbf{$\jb$ Condition} If for every $a \in \D$ we have that $w \in \E (t, \varphi(a))$, then $w \in \E (\jb(t), \forall y \varphi(y))$.                   
\end{itemize}

We say that a model $\M = \model$ \emph{meets constant specification $\C$} iff whenever $\lc{c}\varphi \in \C$, then $\E (c, \varphi) = \W$.
\end{defn}

The evidence function conditions reflect the axioms of \folpb\ into the structure of the skeleton. The instantiation condition codifies the axiom \textbf{A3}. And as the name suggests, the $\jb$ Condition reflects the \textbf{Bb} axiom. Although the $\jb$ and the $\gen_{x}$ conditions have a somewhat similar form, they have a very different meaning. Recalling the provability reading of justification statements, we can read the axiom \textbf{B5} as follows: if $t$ is a derivation of $\varphi(x)$ in which $x$ is not free, then $\gen_{x}(t)$ is a derivation of $\forall x \varphi$; informally we extend the derivation $t$ of $\varphi(x)$ with one application of the generalization rule to get a derivation $\gen_{x}(t)$ of $\forall x\varphi$. But axiom \textbf{Bb} express a different idea.  Fix a domain $\D$. Let $t(x)$ be a derivation and $\varphi(x)$ a formula. If for every $a \in \D$, $t(a)$ is a derivation of $\varphi(a)$, then $\jb(t)$ is a derivation of $\forall x \varphi$. This axiom embodies a kind of infinitary inference rule resembling the $\omega$-rule. In a first-order language expanded with a new set of constants $\{ a_{\xi} \;|\; \xi < \kappa\}$ (where $\kappa = |\D|$) this rule can be formulated as follows: $\vdash \varphi (a_{\xi}) \; \emph{for every }\xi < \kappa \; \Rightarrow \;\vdash \forall x \varphi (x)$.  On the one hand the justification term $\gen_{x}$ captures the introduction of the universal quantifier via the generalization rule. On the other hand, the new justification term $\jb$ internalizes the introduction of the universal quantification via an infinitary inference rule, an $\omega$-rule.

\begin{defn}[Truth at words]
Let $\M = \model$ be a Fitting model, $\varphi$ a closed $\D$-formula and $w \in \W$. The notion that \emph{$\varphi$ is true at world $w$ of $\M$}, in symbols $\M,w \Vdash  \varphi$, is defined recursively as follows: 

\begin{itemize} 
\item $\M,w \Vdash  P(\vec{a})$ iff $\vec{a} \in \I(P,w)$. 
\item $\M,w \nmodels \bot$. 
\item $\M,w \Vdash  \psi \rightarrow \theta$ iff $\M,w \nmodels \psi$ or $\M,w \Vdash  \theta$.
\item $\M,w \Vdash  \forall x \psi(x)$ iff for every $a \in \D$, $\M,w \Vdash  \psi(a)$.        
\item Assume $\lc{t}_{X}\psi(\vec{x})$ is closed and $\vec{x}$ are all the free variables of $\psi$. Then, $\M,w \Vdash  \lc{t}_{X}\psi(\vec{x})$ iff
\begin{enumerate}[(a)]
\item $w \in \E (t, \psi(\vec{x}))$ and
\item for every $w\p \in \W$ such that $w\R w\p$, $\M,w\p \Vdash  \psi(\vec{a})$ for every $\vec{a} \in \D$.
\end{enumerate}
\end{itemize}
\end{defn}

The definition above covers only closed formulas that can contain members of $\D$.   The following is what we finally are interested in, validity for formulas that can contain free variables but no members of $\D$.

\begin{defn} [Validity] 
Let $\varphi \in \Fj$ be a closed formula. We say that $\varphi$ is \emph{valid in the Fitting model} $\M = \model$ provided for every $w \in \W$, $\M,w \Vdash  \varphi$. A formula with free individual variables is valid if its universal closure is valid.
\end{defn}

\begin{defn}[Fitting model for \folpb]
A \emph{Fitting model for \folpb} is a Fitting model $\M = \model$ where $\bl \W, \R, \D \br$ is an \folpb-skeleton.
\end{defn}

For a formula $\varphi$ and constant specification $\C$, we write $\Vdash _{\C}\varphi$ if $\varphi$ is valid in $\M$ for every Fitting model for \folpb\ $\M$ meeting $\C$.

\section{Soundness}

We prove soundness with respect to an arbitrary constant specification.  Completeness, shown in Section~\ref{canonicalmodelsection}, will be a narrower result, requiring axiomatic appropriateness.

\begin{theor}[Soundness]
Let $\C$ be a constant specification. For every formula $\varphi \in \Fj$, if $\turn_{\C} \varphi$, then $\Vdash _{\C}\varphi$.
\end{theor}    

\begin{proof}
The proof is by induction on proof length. The argument is almost the same as presented in \cite{Fitting14}. In that paper, the case for the \textbf{B4} axiom is not properly proved.  We will show validity for the \textbf{B4} axiom, and also for the axiom \textbf{Bb}.
\smallskip

Suppose $\varphi$ is a representative special case of \textbf{B4}; take $\varphi$ to be $\lc{t}_{\{x,z\}}\psi(x,y) \rightarrow \lc{!t}_{\{x,z\}}\lc{t}_{\{x,z\}}\psi(x,y)$. Let $\M = \model$ be a Fitting model for \folpb\ meeting $\C$ with $w \in \W$ arbitrary.  We show the universal closure of $\varphi$ is true at $w$ and, since $x$ and $y$ have free occurrences but $z$ does not, it is enough to show $\lc{t}_{\{a,b\}}\psi(a,y) \rightarrow \lc{!t}_{\{a,b\}}\lc{t}_{\{a,b\}}\psi(a,y)$ is true at $w$ for arbitrary $a,b\in\D$.
\smallskip

Assume $\M, w \Vdash \lc{t}_{\{a,b\}}\psi(a,y)$. Then we must have $w \in \E(t, \psi(a,y))$, so by the $!$ Condition, $w \in \E(!t, \lc{t}_{\{a,b\}}\psi(a,y))$.  Now let $v$ and $u$ be arbitrary members of $\D$ with $w\R v$ and $v\R u$. By the transitivity of $\R$, $w\R u$. Since $\M, w \Vdash \lc{t}_{\{a,b\}}\psi(a,y)$, we have that $\M, u \Vdash   \psi(a,d)$ for every $d \in \D$. Also $w \in \E(t, \psi(a,y))$ so by the $\R$ Closure Condition, $v \in \E(t, \psi(a,y))$. Then, since $u$ is arbitrary, $\M, v \Vdash \lc{t}_{\{a,b\}}\psi(a,y)$. And since the choice of $v$ was arbitrary, $\M, w \Vdash \lc{!t}_{\{a,b\}}\lc{t}_{\{a,b\}}\psi(a,y)$.     
\smallskip    

Next, suppose $\varphi$ is an instance of \textbf{Bb}, say $\varphi$ is $\forall y \lc{t}_{Xy}\psi(y) \rightarrow \lc{\jb(t)}_{X} \forall y \psi(y)$, and for simplicity assume $X= \{x\}$ and $\psi = \psi(x,y)$, so our particular instance is $\forall y \lc{t}_{\{x,y\}}\psi(x,y) \rightarrow \lc{\jb(t)}_{\{x\}} \forall y \psi(x,y)$.  We show validity of its universal closure.  Let $\M = \model$ be a Fitting model for \folpb\ meeting $\C$, let $w \in \W$ be arbitrary, and let $a\in \D$ be arbitrary.  Assume $\M, w \Vdash  \forall y \lc{t}_{\{a,y\}}\psi(a,y)$; we show $\M, w \Vdash \lc{\jb(t)}_{\{a\}} \forall y \psi(a,y)$

By our assumption, $\M, w \Vdash   \lc{t}_{\{a,b\}}\psi(a,b)$ for every $b \in \D$. Then $w \in \E(t,\psi(a,b))$ for every $b\in \D$, so by the $\jb$ Condition, $w \in \E(\jb(t),\forall y\psi(a,y))$.  Now let $v \in \W$ be arbitrary such that $w\R v$. Since for every $b\in \D$ we have $\M, w \Vdash   \lc{t}_{\{a,b\}}\psi(a,b)$, then $\M, v \Vdash \psi(a,b)$ for every $b\in\D$, and hence  $\M, v \Vdash \forall y \psi(a,y)$. Since $v$ was arbitrary, $\M, w \Vdash \lc{\jb(t)}_{\{a\}} \forall y \psi(a,y)$.   
\end{proof}

\section{Language Extension}\label{langextsection}

In order to prove a Completeness Theorem, in Sections~\ref{henkinsection} and \ref{canonicalmodelsection} we will construct a canonical model using a Henkin style argument. As in Section~\ref{semantics} we allow additional things to appear in formulas, but now they come from a new countable set of variables $\textbf{V}$ that we call `witness variables'. These witness variables can appear in formulas, but are never quantified and do not appear as subscripts on $\gen$.  This way of extending the language has some differences from the one presented in \cite{Fitting14}.  Although here we are using a new type of variable, it essentially has a role similar to a domain constant.  It is made a variable partly to help with readability, since we also work with justification constants.

From now on we use the phrase \emph{individual variables} to denote the members of $\textbf{V}\cup \{x_{0}, \, x_{1}, \, \dots \}$, \emph{basic variables} to denote the members of $\{x_{0}, \, x_{1}, \, \dots \}$, our original variables, and \emph{witness variables} to denote the newly added members of $\textbf{V}$. In typical Henkin style we are interested in taking $\textbf{V}$ to be the domain $\D$ of the canonical model we will construct, so for the course of the completeness proof we will call a \emph{Henkin formula} a first-order justification formula in which basic variables may occur free or bound, but variables from $\textbf{V}$ may only occur free (a Henkin formula is just a $\D$-formula when $\D = \textbf{V}$).  Symbolically we write $\Fjv$ for this language.  The set of justification terms is not enarged since witness variables do not appear as subscripts on $\gen$, so both $\Fj$ and $\Fjv$ have $\Tj$ as their set of justification terms.  We call a Henkin formula closed if no \emph{basic} variable occurrences are free. Using this enlarged language we construct a new axiomatic system for \folpb\ based on the formulas from $\Fjv$. By \emph{basic system} we mean the language and the axiomatic system presented earlier, without the extension to Henkin formulas.

\begin{defn} [Variable variants]
Two Henkin formulas are \emph{variable variants} provided each can be turned into the other by a  renaming of free individual variables. 
In other words, there is a 1-1, onto map from the free variables of one formula to the feee variables of the other.
\end{defn}

We will be interested in variable variants within the basic language, and also in the language with witness variables added.  In either case, this does not concern quantified variables or those that appear as $\gen$ subscripts.

\begin{defn}[Variant closed] A constant specification $\C$ is \emph{variant closed} iff whenever $\varphi$ and $\psi$ are variable variants, then $\lc{c}\varphi \in \C$ iff $\lc{c}\psi \in \C$.
\end{defn}

\begin{defn}\label{csextensiondef}
Let $\C$ be a variant closed constant specification \emph{for the basic system}. By $\Cv$ we mean the smallest set satisfying the condition: If $\lc{c}\varphi \in \C$, $\psi \in \Fjv$ and $\psi$ is the result of replacing some free basic variables in $\varphi$ with distinct witness variables, then $\lc{c}\psi \in \Cv$ (In other words, $\psi$ is a variable variant of $\varphi$ but in the language allowing witness variables).
\end{defn}

There are several items concerning this that are easy to establish, but are fundamental.

\begin{pro}
\label{basicstuff}
Assume $\C$ is a variant closed constant specification \emph{for the basic system}.
\begin{enumerate}
\item $\C \subseteq \Cv$.
\item $\Cv$ is variant closed.
\item If $\C$ is axiomatically appropriate with respect to $\Fj$, then $\Cv$ is axiomatically appropriate with respect to $\Fjv$. 
\item If $\lc{c}\psi\in\Cv$  and $\psi$ contains no witness variables, then $\lc{c}\psi\in\C$.
\item The Deduction Lemma~\ref{deductionlemma}, the Internalization Theorem~\ref{internalizationtheorem}, and Theorem \ref{cbarcanburidan} hold for the enlarged axiom system allowing witness variables.
\end{enumerate}
\end{pro}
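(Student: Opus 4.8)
The plan is to establish each of the five items in turn, treating items 1--4 as direct set-theoretic unwindings of Definitions~\ref{csextensiondef} and the variant-closure/axiomatic-appropriateness definitions, and item 5 as a matter of observing that every earlier proof is insensitive to the distinction between basic and witness variables. For item 1, I would argue that each $\lc{c}\varphi \in \C$ is itself the result of replacing \emph{zero} free basic variables with witness variables (the empty replacement), so $\lc{c}\varphi$ falls under the closure condition defining $\Cv$; hence $\C \subseteq \Cv$. For item 2, the key observation is that the defining condition of $\Cv$ is phrased in terms of variable variants ``up to renaming of free variables,'' so if $\lc{c}\psi \in \Cv$ and $\psi\p$ is a variable variant of $\psi$, then chasing $\psi$ back to some $\lc{c}\varphi\in\C$ and composing the two renamings exhibits $\psi\p$ as the result of replacing free basic variables of a (possibly different, but variant-closed-available) member of $\C$ with distinct witness variables; here I would lean on the variant closure of $\C$ itself to supply the needed member of $\C$.

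For item 3, I would take an arbitrary axiom $\psi \in \Fjv$ and let $\psi^{\ast}$ be obtained from $\psi$ by replacing each witness variable occurring in it with a fresh basic variable, so that $\psi^{\ast}\in\Fj$ is again an axiom of the same schematic form and $\psi$ is recovered from $\psi^{\ast}$ by the witness-replacement of Definition~\ref{csextensiondef}. Axiomatic appropriateness of $\C$ with respect to $\Fj$ yields a constant $c$ with $\lc{c}\psi^{\ast}\in\C$, and then the defining condition of $\Cv$ delivers $\lc{c}\psi\in\Cv$, as required. Item 4 is the converse-direction sanity check: if $\lc{c}\psi\in\Cv$ and $\psi$ has no witness variables, then by the definition of $\Cv$ there is some $\lc{c}\varphi\in\C$ with $\psi$ obtained from $\varphi$ by replacing free basic variables with distinct witness variables; but if $\psi$ has \emph{no} witness variables, no replacement actually occurred on those positions, so $\psi$ and $\varphi$ differ only by a renaming of free basic variables, i.e.\ they are variable variants within the basic language, and variant closure of $\C$ then gives $\lc{c}\psi\in\C$.

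For item 5, the plan is not to re-run the proofs but to point out that the Deduction Lemma, the Internalization Theorem, and Theorem~\ref{cbarcanburidan} all go through verbatim once one checks that the enlarged system is a genuine axiom system of the same shape. Concretely, I would note three things: the set of justification terms $\Tj$ is unchanged (witness variables never appear as $\gen$ subscripts, as stressed before Definition~\ref{csextensiondef}), so axioms \textbf{B5} and \textbf{Bb}, whose side conditions mention $\gen$ subscripts and bound quantifiers, behave identically; the axiom schemes \textbf{A1}--\textbf{Bb} are closed under the enlarged notion of free variable, since witness variables behave exactly like free basic variables for the purposes of substitution and the side conditions ``$y$ does not occur free in $\varphi$,'' ``$x\notin X$,'' etc.; and the constant-specification machinery needed by Internalization is supplied by item 3, which guarantees $\Cv$ is axiomatically appropriate with respect to $\Fjv$ whenever $\C$ is with respect to $\Fj$. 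The main obstacle, and the one point deserving care, is item 5's appeal to Internalization: that proof assigns justification constants to axioms via the constant specification, and one must confirm that the instances arising in the enlarged language are exactly the instances whose constants $\Cv$ is designed (by item 3) to provide, so that the term-building construction of Theorem~\ref{internalizationtheorem} closes up without ever needing a constant that lies outside $\Cv$.
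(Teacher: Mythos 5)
Your proposal is correct and follows essentially the same route as the paper, which disposes of items 1--4 with the same observations (empty replacement for item 1, composition of renamings plus variant closure for items 2 and 4, schematic axiomatization for item 3) and simply omits the argument for item 5. Your write-up is more detailed than the paper's --- in particular your careful handling of item 2 via variant closure of $\C$ and your checklist for item 5 --- but the underlying ideas are identical.
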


\begin{proof} We take them in order.
\begin{enumerate}
\item In constructing $\Cv$, identity replacement of basic variables is allowed.
\item Suppose $\lc{c}\psi_1\in\Cv$ and $\psi_1$ and $\psi_2$ are variable variants, allowing witness variables.  Since $\lc{c}\psi_1\in\Cv$, there is some $\varphi$ in the basic language that is a variable variant of $\psi_1$ with $\lc{c}\varphi\in\C$.  Since composition of 1-1, onto maps is 1-1 and onto, $\psi_2$ and $\varphi$ are variable variants, and hence $\lc{c}\psi_2\in\Cv$.  

\item Our axiomatization is by \emph{schemes}.
\item Assume $\lc{c}\psi\in\Cv$, where $\psi$ is in the basic language.  There must be some $\lc{c}\varphi\in\C$ where $\varphi$ and $\psi$ are variable variants.  Then $\lc{c}\psi\in\C$ since $\C$ is variant closed.
\item We omit the arguments.
\end{enumerate}
\end{proof}

The next Proposition has some corollaries that are important in our completeness proof.

\begin{pro}
\label{preconservativ}
Assume the following.
\begin{enumerate}
\item $\C$ is a variant closed constant specification for the basic system and $\Cv$ is its extension to $\Fjv$.
\item $\psi_{1}, \, \psi_{2}, \, \dots , \, \psi_{n}$ is an \folpb\ proof in the language of $\Fjv$ using $\Cv$.
\item $a$ is a witness variable that occurs in the proof.
\item $y$ is a basic variable that does not occur in the proof (free, bound, or as a subscript of $\gen$).
\item $(\psi_{i})^{-}$ is the result of replacing $a$ with $y$.
\end{enumerate}
Then $(\psi_{1})^{-}, \, (\psi_{2})^{-}, \, \dots, \, (\psi_{n})^{-}$ is also a proof.
\end{pro}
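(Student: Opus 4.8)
The plan is to verify that $(\psi_{1})^{-}, \ldots, (\psi_{n})^{-}$ is a proof by induction on the position $i$, checking in each case that $(\psi_{i})^{-}$ is justified in exactly the way $\psi_{i}$ was: an instance of the same axiom scheme, a member of $\Cv$, or a consequence of earlier (already transformed) lines by the same rule. Two structural observations make the replacement $(\cdot)^{-}$ well behaved. First, witness variables never occur in justification terms---the only individual variables appearing in a term are $\gen$-subscripts, and witness variables are barred from those---so $(\cdot)^{-}$ fixes every term and acts only on the subscript sets $X$ and on the predicate arguments. Second, since $a$ is a witness variable it has only free occurrences, while $y$ occurs nowhere in the proof; hence replacing $a$ by $y$ is a capture-free renaming that commutes with $\rightarrow$, $\forall$, and the constructor $\lc{t}_{X}(\cdot)$, and sends a subscript $X$ to $(X\setminus\{a\})\cup\{y\}$ when $a\in X$ and to $X$ otherwise. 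By Definition~\ref{freevarocc} the free variables of a justification formula are exactly its subscript, so this also describes the effect of $(\cdot)^{-}$ on free-variable sets.

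Granting these observations, the rule cases are immediate. Modus ponens (\textbf{R1}) is preserved because $(\cdot)^{-}$ commutes with $\rightarrow$: if $\psi_{j}$ is $\psi_{k}\rightarrow\psi_{i}$, then $(\psi_{j})^{-}$ is $(\psi_{k})^{-}\rightarrow(\psi_{i})^{-}$. For generalization (\textbf{R2}), the generalized variable is a basic variable distinct from both $a$ (a witness variable, never quantified) and $y$ (which does not occur in the proof), so $(\cdot)^{-}$ leaves it untouched and the rule applies to the transformed line unchanged; since the sequence is a proof with no premises, no freeness side condition on generalization is at stake. For the constant-specification case, suppose $\psi_{i}=\lc{c}\chi\in\Cv$. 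The renaming $a\mapsto y$ is a bijection between the free variables of $\chi$ and those of $\chi^{-}$, so $\chi$ and $\chi^{-}$ are variable variants in the language with witness variables; by variant closure of $\Cv$ (Proposition~\ref{basicstuff}, item 2) we conclude $\lc{c}\chi^{-}\in\Cv$.

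The step I expect to be the main obstacle is showing that each axiom scheme is closed under $(\cdot)^{-}$ together with its side condition. For the classical axioms \textbf{A1} and for \textbf{B1}--\textbf{B4} this is automatic, since those schemes impose no constraints that the renaming could disturb. The care is needed for \textbf{A2}, \textbf{A3}, \textbf{B5}, and \textbf{Bb}, whose side conditions restrict which variables may occur free or appear in a subscript; to discuss them I rename the schemes' own bound variables to $z$ so as not to clash with the fresh variable $y$ of the statement. For \textbf{A2}, read as $\lc{t}_{Xz}\varphi\rightarrow\lc{t}_{X}\varphi$ with $z$ not free in $\varphi$, the condition still holds after the renaming: substituting the fresh $y$ for $a$ cannot create a free occurrence of $z$, and in the subcase $z=a$ the variable $a$ is absent from $\varphi$ altogether, so $\varphi^{-}=\varphi$ and $y$ remains fresh. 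For \textbf{A3}, \textbf{B5}, and \textbf{Bb}, the conditions ``$z\notin X$'', ``$x\notin X$'', and the quantifier restrictions are preserved because the relevant quantified or generalized variable is basic and different from both $a$ and $y$, and because $(\cdot)^{-}$ preserves distinctness of variables (the image of a subscript set has the same cardinality). Once these side-condition checks are in place, every line of the transformed sequence is justified and the induction closes.
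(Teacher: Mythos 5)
Your proposal is correct and follows essentially the same route as the paper's own proof: induction on proof steps, with the same case split into axioms, members of $\Cv$ (handled via variant closure), modus ponens, and generalization. The only difference is one of detail --- you verify explicitly that the side conditions of \textbf{A2}, \textbf{A3}, \textbf{B5}, and \textbf{Bb} survive the renaming, where the paper simply notes that the axioms are schemes and $y$ is new; your extra care here is sound and does not change the argument.
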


\begin{proof}
The argument is by induction on proof step.

If $\psi_{i}$ is an axiom, since we are using axiom schemes and $y$ is new, then $(\psi_{i})^{-}$ is also an axiom.

Suppose $\psi_{i}$ is a member of $\Cv$.  Then it is of the form $\lc{c}\phi$, and $(\psi_i)^-$ is $\lc{c}(\phi)^{-}$ where $(\phi)^{-}$ is the result of replacing $a$ by $y$ in $\phi$.  Since $\phi$ and $(\phi)^{-}$ will be variable variants, $\lc{c}(\phi)^{-}\in\Cv$ since it is variant closed.

If $\psi_{i}$ is deduced from $\psi_{i_1}$ and $\psi_{i_2} = \psi_{i_1}\rightarrow \psi_{i}$ by modus ponens, then $(\psi_{i_2})^{-}$ is $(\psi_{i_1})^{-} \rightarrow (\psi_{i})^{-}$. So $(\psi_{i})^{-}$ also follows from $(\psi_{i_2})^{-}$ and $(\psi_{i_1})^{-}$ by modus ponens. 

If $\psi_{i}$ is deduced from $\psi_{l}$ by generalization, then $\psi_{i}$ is $\forall x \psi_{l}$, where $x$ muxt be a basic variable since witness variables are not quantified.  Then $\forall x (\psi_{l})^{-}$ is also deduced from $(\psi_{l})^{-}$ by generalization, since $y$ is new and hence does not affect $x$.   
\end{proof}

\begin{coro}[Conservativity]
\label{conservativ} 
Let $\C$ be a variant closed constant specification for the basic system and $\Cv$ be its extension to $\Fjv$.  For every $\varphi \in \Fj$, if $\turncv \varphi$, then $\turnc \varphi$.
\end{coro}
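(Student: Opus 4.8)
The plan is to take any given \folpb\ proof of $\varphi$ in the language $\Fjv$ using $\Cv$ and to purge every witness variable from it, one at a time, via repeated application of Proposition~\ref{preconservativ}. Because $\varphi \in \Fj$ contains no witness variables, the purging will leave the conclusion fixed, and what survives will be a proof in the basic system using $\C$.

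First I would fix a proof $\psi_{1}, \dots, \psi_{n}$ witnessing $\turncv \varphi$, so $\psi_{n} = \varphi$. A proof is a finite object, hence only finitely many witness variables $a_{1}, \dots, a_{m}$ occur anywhere in it. I would eliminate them successively. Having already removed $a_{1}, \dots, a_{k-1}$, to remove $a_{k}$ I pick a basic variable $y_{k}$ occurring nowhere in the current proof---possible since the basic variables are infinite while the proof is finite, and this also forces $y_{k}$ to differ from the previously chosen $y_{1}, \dots, y_{k-1}$---and apply Proposition~\ref{preconservativ} with $a = a_{k}$ and $y = y_{k}$. The proposition guarantees that the result is again a proof in $\Fjv$ using $\Cv$. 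After $m$ such steps I obtain a sequence $\psi_{1}\p, \dots, \psi_{n}\p$ in which no witness variable occurs. Since none of the $a_{k}$ occur in the witness-free formula $\varphi = \psi_{n}$, every replacement fixes the last line, so $\psi_{n}\p = \varphi$.

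It then remains to check that $\psi_{1}\p, \dots, \psi_{n}\p$ is in fact a proof in $\folpb_{\C}$, not merely in the system using $\Cv$. Running through the justification of each step, exactly as in the proof of Proposition~\ref{preconservativ}: axiom-scheme instances remain axiom-scheme instances; steps obtained by modus ponens or generalization remain so, since the replacements commute with those rules. The only delicate case is a step $\psi_{i}$ that is a member of $\Cv$, say $\lc{c}\phi$; each replacement keeps it in $\Cv$ because $\Cv$ is variant closed (Proposition~\ref{basicstuff}(2)), so the final $\psi_{i}\p$ lies in $\Cv$. But $\psi_{i}\p$ now contains no witness variable, whence Proposition~\ref{basicstuff}(4) places it in $\C$. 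Thus $\psi_{1}\p, \dots, \psi_{n}\p$ is a $\folpb_{\C}$ proof of $\varphi$, giving $\turnc \varphi$.

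The argument is routine given the machinery already in place; the one point requiring care is that the passage from $\Cv$-membership to $\C$-membership through Proposition~\ref{basicstuff}(4) must be postponed until \emph{all} witness variables have been deleted, since a single constant-specification step may carry several of them and (4) applies only to genuinely witness-free formulas.
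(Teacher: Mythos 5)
Your proof is correct and takes essentially the same route as the paper's: eliminate the finitely many witness variables one at a time via Proposition~\ref{preconservativ}, note that the conclusion $\varphi$ is untouched since it contains no witness variables, and then invoke Proposition~\ref{basicstuff}(4) to convert the constant-specification steps from $\Cv$ to $\C$. Your closing observation, that item (4) can only be applied once \emph{all} witness variables have been purged, is a point the paper leaves implicit but is handled the same way.
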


\begin{proof}
If $\turncv \varphi$, then $\varphi$ has a proof that may contain witness variables.  One by one, replace these with new basic variables.  Using Proposition~\ref{preconservativ}, the result must be a proof, and will contain no witness variables.  By item~4 of Proposition~\ref{basicstuff}, this will be a proof using constant specification $\C$, and since $\varphi$ contained no witness variables, their replacement does not affect it.  Hence there is a proof of $\varphi$ using $\C$. 
\end{proof}

\begin{coro}[Generalization]
\label{gencorollary}
Suppose $\C$ is a variant closed constant specification for the basic language and $\Cv$ is its extension.  Let $\varphi(a)$ be a formula containing witness variable $a$ (and possibly other witness variables).  If $\turncv \varphi(a)$ then for some new basic variable $y$, $\turncv \forall y\varphi(y)$.
\end{coro}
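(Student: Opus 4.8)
The plan is to leverage Proposition~\ref{preconservativ} directly and then close with a single application of the generalization rule \textbf{R2}. The underlying idea is that a proof of $\varphi(a)$ in which $a$ is a witness variable can be transformed, by renaming $a$ to a fresh \emph{basic} variable, into a proof of $\varphi(y)$; once $y$ is a genuine basic variable occurring free with no assumptions in play, ordinary generalization becomes available. Nothing deeper than this renaming-plus-generalization maneuver should be needed.

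First I would fix an \folpb\ proof $\psi_{1}, \ldots, \psi_{n}$ of $\varphi(a)$ using $\Cv$, so that $\psi_{n} = \varphi(a)$. Since a proof is a finite object and there are infinitely many basic variables, I can choose a basic variable $y$ that occurs nowhere in the proof---neither free, nor bound, nor as a subscript of $\gen$. The witness variable $a$ does occur in the proof, since it occurs in the final line $\varphi(a)$, so all the hypotheses of Proposition~\ref{preconservativ} are met for this choice of $a$ and $y$.

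Next I would apply Proposition~\ref{preconservativ}: replacing $a$ by $y$ throughout yields a sequence $(\psi_{1})^{-}, \ldots, (\psi_{n})^{-}$ that is again a proof using $\Cv$. Its last line $(\psi_{n})^{-}$ is exactly $\varphi(y)$, the result of replacing the witness variable $a$ by $y$ in $\varphi(a)$, so $\turncv \varphi(y)$. Because $y$ is new, no accidental binding occurs and $\varphi(y)$ is the intended formula. Finally, since $y$ is a basic variable and the derivation has no undischarged assumptions, the side condition on generalization is vacuously satisfied, and one application of \textbf{R2} gives $\turncv \forall y \varphi(y)$, as required.

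The step that requires the most care---though it is more a matter of bookkeeping than of difficulty---is verifying that the renaming genuinely lands us in a situation where generalization is legitimate. I must ensure both that $y$ is basic (witness variables are never quantified, so generalizing on a witness variable is not even syntactically permitted) and that $y$ is truly fresh, so that substituting it for $a$ introduces no capture and $\varphi(y)$ is the expected formula. Both points are guaranteed by the freshness requirement inherited from Proposition~\ref{preconservativ}, so the corollary follows with essentially no further computation.
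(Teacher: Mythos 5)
Your proof is correct and follows exactly the paper's own argument: pick a basic variable $y$ fresh to the given proof, use Proposition~\ref{preconservativ} to rename $a$ to $y$ throughout and obtain a proof of $\varphi(y)$, then apply Rule \textbf{R2}. The additional bookkeeping you supply (that $a$ indeed occurs in the proof, and that generalization is unconditional here because there are no undischarged assumptions) is accurate but not a departure from the paper.
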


\begin{proof}
Assume $\turncv \varphi(a)$, so it has a proof in the language $\Fjv$.  Let $y$ be any basic variable that does not occur in the proof, and replace occurrences of $a$ throughout that proof with occurrences of $y$.  By Proposition~\ref{preconservativ}, this will still be a proof, but of $\varphi(y)$.  Now Generalization can be applied, Rule \textbf{R2} from Figure~\ref{basicaxioms}, and $\forall y\varphi(y)$ is provable.
\end{proof}

\begin{defn}
Let $\C$ be a variant closed constant specification for the basic language and let $\Gamma \subseteq \Fj$. We say that $\Gamma $ is $\C$-\emph{inconsistent} iff $\Gamma \turn_{\C} \bot$. By the Deduction Lemma~\ref{deductionlemma}, $\Gamma$ is $\C$-inconsistent iff there is a finite subset $\{\psi_{1}, \, \dots, \,  \psi_{n}\}$ of $\Gamma$ such that $\turn_{\C} (\psi_{1} \wedge \,\dots \,\wedge \psi_{n}) \rightarrow \bot$. A set $\Gamma$ is $\C$-\emph{consistent} if it is not $\C$-inconsistent. And we say that $\Gamma$ is $\C$-\emph{maximally consistent} whenever $\Gamma$ is $\C$-consistent and $\Gamma$ has no proper extension that is $\C$-consistent. There are similar notions for $\C(\textbf{V})$.
\end{defn}

It follows from Corollary~\ref{conservativ} that for every set of basic formulas $\Gamma$, if $\Gamma$ is $\C$-consistent, then $\Gamma$ is $\C(\textbf{V})$-consistent.

\begin{pro}[Lindenbaum]
\label{lind}
Let $\C$ be a variant closed constant specification and $\Cv$ its extension. If $\Gamma \subseteq \Fjv$ is $\Cv$-consistent then there is a $\Gamma\p \subseteq \Fjv$ such that $\Gamma \subseteq \Gamma\p$ and $\Gamma\p$ is a $\Cv$-maximally consistent set.
\end{pro}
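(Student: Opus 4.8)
The plan is to run the standard Lindenbaum construction: enumerate the formulas of $\Fjv$ and greedily add each one whenever doing so preserves $\Cv$-consistency. First I would note that $\Fjv$ is countable, since the witness variables $\textbf{V}$ form a countable set, the rest of the basic vocabulary is countable, and the set of justification terms $\Tj$ is unchanged in the enlarged language; so the set of Henkin formulas is countable. Fix an enumeration $\varphi_{0}, \varphi_{1}, \varphi_{2}, \dots$ of $\Fjv$.

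Next I would define an increasing chain by recursion. Set $\Gamma_{0} = \Gamma$, and for each $n$ put
\[
\Gamma_{n+1} = \begin{cases} \Gamma_{n} \cup \{\varphi_{n}\} & \text{if } \Gamma_{n} \cup \{\varphi_{n}\} \text{ is } \Cv\text{-consistent,} \\ \Gamma_{n} & \text{otherwise.} \end{cases}
\]
A trivial induction shows each $\Gamma_{n}$ is $\Cv$-consistent: $\Gamma_{0}$ is consistent by hypothesis, and the recursion enlarges the set only when consistency has been explicitly checked. Let $\Gamma\p = \bigcup_{n} \Gamma_{n}$; then $\Gamma \subseteq \Gamma\p \subseteq \Fjv$.

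I would then verify that $\Gamma\p$ is $\Cv$-consistent. The crucial point is that derivations are finite, so by the finite-subset characterization of inconsistency recorded just before this Proposition, if $\Gamma\p$ were $\Cv$-inconsistent there would be a finite $\{\psi_{1}, \dots, \psi_{k}\} \subseteq \Gamma\p$ with $\turncv (\psi_{1} \wedge \dots \wedge \psi_{k}) \rightarrow \bot$. Each $\psi_{j}$ enters the union at some finite stage, so taking $N$ to be the largest of these stages gives $\{\psi_{1}, \dots, \psi_{k}\} \subseteq \Gamma_{N}$, which would make $\Gamma_{N}$ $\Cv$-inconsistent, contradicting the induction above. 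For maximality, suppose $\Gamma\p \subseteq \Delta \subseteq \Fjv$ with $\Delta$ $\Cv$-consistent and $\Delta \neq \Gamma\p$, and pick $\varphi_{n} \in \Delta \setminus \Gamma\p$. Since $\varphi_{n} \notin \Gamma\p \supseteq \Gamma_{n+1}$, the construction must have rejected $\varphi_{n}$ at stage $n$, meaning $\Gamma_{n} \cup \{\varphi_{n}\}$ was $\Cv$-inconsistent. But $\Gamma_{n} \cup \{\varphi_{n}\} \subseteq \Delta$, and any superset of a $\Cv$-inconsistent set is $\Cv$-inconsistent, so $\Delta$ would be $\Cv$-inconsistent, a contradiction. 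Hence $\Gamma\p$ has no proper $\Cv$-consistent extension.

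I expect no serious obstacle here; the only step needing care is the consistency of the union, which rests entirely on the finiteness of proofs, equivalently on the finite-subset characterization of $\Cv$-inconsistency. The remainder is the routine greedy argument, and the fact that $\Cv$ rather than $\C$ is the operative constant specification changes nothing, since consistency is taken relative to $\Cv$ throughout.
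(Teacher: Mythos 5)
Your proof is correct and is exactly the standard Lindenbaum construction that the paper relies on; the paper in fact states Proposition~\ref{lind} without proof, treating it as routine. Your appeal to the finite-subset characterization of $\Cv$-inconsistency (recorded in the definition immediately preceding the Proposition) is precisely the right way to get consistency of the union, and the maximality argument is standard and sound.
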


\section{Templates}
Starting with Section~\ref{henkinsection} we will prove completeness for \folpb\ with respect to a justification logic analog of constant domain Kripke models for \msf. For this we follow the general outline of the constant domain completeness proof in \cite{Hughes96}, but complexities arise in transferring the proof from a modal setting to a justification logic one.  The modal completeness argument makes much use of \emph{finite} sets of modal formulas.  This is important because a finite set of formulas acts like its conjunction --- a single formula.  But if modal operators in a finite set of formulas are replaced with justification terms, an infinite set can result.  There is only one necessity operator, but there are infinitely many justification terms that can replace it.  Somehow finiteness must be restored for a completeness argument to go through.  The key idea is that justification formulas having the same underlying propositional modal structure can be grouped together.  We introduce the notion of \emph{template} as a way of making precise what we need about the modal structure of justification formulas.  Formal details follow after some motivating informal remarks.

Consider the modal formula $\Box(\tp \wedge \Box \psi)$, where $\tp$ is a propositional letter. In any normal modal logic, if $\varphi \rightarrow \psi$ is provable then $\Box(\tp \wedge \Box \varphi) \rightarrow \Box(\tp \wedge \Box \psi)$ will also be provable. The only strictly modal tools needed for this are the rule of necessitation and the K axiom. But now consider a justification logic analog of $\Box(\tp \wedge \Box \psi)$, say $\lc{t}(\tp \wedge \lc{u} \psi)$. It is not the case that if $\varphi \rightarrow \psi$ is provable in \jlp, so is $\lc{t}(\tp \wedge \lc{u}\varphi) \rightarrow \lc{t}(\tp \wedge \lc{u}\psi)$, but one can easily show that the following is provable,
\begin{equation} 
\label{example1}
\lc{t}(\tp \wedge \lc{u}\varphi) \rightarrow  \lc{[s_{2}\cdot t]}(\tp \wedge \lc{[s_{1} \cdot u]}\psi) \; .
\end{equation}
Here $s_{1}$ is a justification term such that $\lc{s_{1}}(\varphi \rightarrow \psi)$ is provable, provided by the Internalization Theorem~\ref{internalizationtheorem}, and similarly $s_{2}$ is a justification term such that $\lc{s_{2}}((\tp \wedge \lc{u} \varphi) \rightarrow (\tp \wedge \lc{[s_{1} \cdot u]}\psi))$ is provable, again provided by the Internalization Theorem. This is considered in much detail in \cite{Fitting07}, where algorithms are given to compute justification terms involved in such formula manipulations. But for purposes of the present paper we observe the following key fact about (\ref{example1}): both the antecedent and the consequent have the same general form. More specifically, both have the form of $\Box(\tp \wedge \Box \tq)$, with different justification terms replacing the necessity operator. We call an expression like $\Box(\tp \wedge \Box \tq)$ a template and we work with the family of substitution instances of templates, rather than with individual formulas.

Here is another way of looking at the need for templates. In moving from modal to justification logics we can understand the usual modal operator  to be something like an existential quantifier over justification terms (this was made precise and formal in \cite{Fitting08}). Then asserting $\Box \varphi$ is akin to asserting $\lc{t}\varphi$ for some justification term $t$, and asserting $\neg \Box \varphi$ is like asserting $\neg \lc{t}\varphi$ for every justification term $t$. Both of these are infinitary in nature. In the conventional modal setting one talks about $\Gamma \cup \{\neg\Box \varphi\}$ being consistent, for a set $\Gamma$ and formula $\varphi$. The analog in the justification logic setting is talk of the consistency of $\Gamma \cup \{\neg \lc{t_{1}}\varphi, \, \neg \lc{t_{2}}\varphi, \, \dots \}$ where $t_{1}, \, t_{2}, \, \dots$ are all the justification terms. Templates are what we use for this --- we talk about the consistency of $\Gamma$ together with all `instances' of $\neg \Box \varphi$.

\begin{defn} [Template]
Let $\tp_{1}, \, \dots, \, \tp_{n}$ be distinct propositional letters. A \emph{template} on these letters is a propositional modal formula that is built up from $\tp_{1}, \, \dots, \, \tp_{n}$ with $\neg, \vee$ and $\wedge$ as connectives, with  $\Box$ as the only modal operator, and with no $\tp_{i}$ occurring more than once. We write $F(\tp_{1}, \dots , \tp_{n})$ to indicate a template in which the propositional letters are among $\tp_{1}, \dots , \tp_{n}$.  We write $\tvp$ to denote a sequence of propositional variables, so $F(\tp_{1}, \dots , \tp_{n})$ may appear as $F(\tvp)$.
\end{defn}

We emphasize that a template is a propositional modal formula, and not a formula of \folpb\, and hence has no individual variables, quantifiers, or first-order relation symbols.  Just as when we work with formulas, we can define the notion of \emph{degree} of a template as the number of occurrences of boolean connectives and modal operators). Then we can define some notions recursively based on the degree of templates  and prove corresponding facts by induction on the degree of templates.

We next define instantiation sets for templates; these are sets of first-order formulas, Henkin formulas as discussed in Section~\ref{langextsection}.  

\begin{defn}[Instantiation sets] 
Let $\tvp$ be an $n$-ary sequence of distinct propositional variables, $\vvarphi$ be an $n$-ary sequence of Henkin formulas (not necessarily distinct) and $F(\tvp)$ a template. We define the \emph{instantiation set} $ \llbracket F(\vvarphi)  \rrbracket$ recursively as follows: 
\begin{enumerate}[a)]

\item If $F(\tvp)$ is $\tp_{i}$, then $ \llbracket F(\vvarphi)  \rrbracket = \{ \varphi_{i}\}$.

\item If $F(\tvp)$ is $\neg G(\tvp)$, then  $ \llbracket F(\vvarphi)  \rrbracket = \{ \neg \psi \; | \; \psi \in   \llbracket G(\vvarphi)  \rrbracket   \}$.

\item If $F(\tvp)$ is $G(\tvp) \vee H(\tvp)$, then  $ \llbracket F(\vvarphi)  \rrbracket = \{ \psi\vee \theta \; | \; \psi \in   \llbracket G(\vvarphi)  \rrbracket \; \text{and} \; \theta \in   \llbracket H(\vvarphi)  \rrbracket  \}$.

\item If $F(\tvp)$ is $G(\tvp) \wedge H(\tvp)$, then   $ \llbracket F(\vvarphi)  \rrbracket = \{ \psi\wedge \theta \; |\; \psi \in   \llbracket G(\vvarphi)  \rrbracket \; \text{and} \; \theta \in   \llbracket H(\vvarphi)  \rrbracket  \}$.

\item If $F(\tvp)$ is  $\Box G(\tvp)$, then\\   $ \llbracket F(\vvarphi)  \rrbracket = \{ \lc{t}_{X}\psi \; | \; \psi \in   \llbracket G(\vvarphi)  \rrbracket,  t \in \Tj, \mbox{ and } X \mbox{ is the set of witness variables occurring in } \psi\}$.
\end{enumerate}
\end{defn}

Clearly, for every template $F(\tvp)$ and every sequence $\vvarphi$ of Henkin formulas, $\llbracket F(\vvarphi)  \rrbracket$ is a set of Henkin formulas.  

\begin{defn}
We say the template $F$ is \emph{positive} if the only boolean connectives in $F$ are $\land$ and $\lor$ (no $\lnot$), and $F$ is \emph{disjunctive} if the only boolean connective that occurs in $F$ is $\lor$ (no $\land$, no $\lnot$).          
\end{defn}

The rest of this section is devoted to the proof of some fundamental facts about templates. Throughout we assume that there is a fixed variant closed and axiomatically appropriate constant specification $\C$ for the basic language, and that $\Cv$ is its extension. To keep things simple, we will not refer to this assumption in every proposition and, in this section only, we shall write `$\turn$' to denote `$\turncv$', `consistent' to denote `$\Cv$-consistent', `inconsistent' to denote `$\Cv$-inconsistent' and `maximal-consistent' to denote `$\Cv$-maximal consistent'.

\begin{pro} [Semi-replacement]
\label{Semi}
Let $F(\tvp,\tq)$ be a positive template, $\chi$ and $\psi$ Henkin formulas, and $\vvarphi$ a sequence of Henkin formulas. If $\turn \chi \rightarrow \psi$, then for every $\phi \in  \llbracket F(\vvarphi,\chi)   \rrbracket$ there is a $\theta \in  \llbracket F(\vvarphi,\psi)   \rrbracket$ such that $\turn \phi \rightarrow \theta$.   
\end{pro}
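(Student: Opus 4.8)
The plan is to proceed by induction on the degree of the positive template $F(\tvp,\tq)$, since the instantiation set was defined recursively on template structure and the claim concerns an arbitrary $\phi \in \llbracket F(\vvarphi,\chi) \rrbracket$. The hypothesis $\turn \chi \rightarrow \psi$ stays fixed throughout, and the positivity of $F$ (no $\neg$) is essential: it guarantees that the distinguished argument $\tq$ always occurs under an even number of negations, so implication propagates in the correct direction at every stage.

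For the base cases, first suppose $F(\tvp,\tq)$ is one of the $\tp_i$. Then $\llbracket F(\vvarphi,\chi) \rrbracket = \{\varphi_i\} = \llbracket F(\vvarphi,\psi) \rrbracket$, so for $\phi = \varphi_i$ I take $\theta = \varphi_i$ and use $\turn \varphi_i \rightarrow \varphi_i$. The other base case is $F(\tvp,\tq)$ being $\tq$ itself, where $\llbracket F(\vvarphi,\chi) \rrbracket = \{\chi\}$ and $\llbracket F(\vvarphi,\psi) \rrbracket = \{\psi\}$; here $\phi = \chi$, I take $\theta = \psi$, and the conclusion $\turn \chi \rightarrow \psi$ is exactly the hypothesis.

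For the inductive step there are three cases, matching the three connectives allowed in a positive template. If $F = G \vee H$, then $\phi$ has the form $\psi_G \vee \psi_H$ with $\psi_G \in \llbracket G(\vvarphi,\chi) \rrbracket$ and $\psi_H \in \llbracket H(\vvarphi,\chi) \rrbracket$; by the induction hypothesis applied to $G$ and to $H$ I obtain $\theta_G \in \llbracket G(\vvarphi,\psi) \rrbracket$ and $\theta_H \in \llbracket H(\vvarphi,\psi) \rrbracket$ with $\turn \psi_G \rightarrow \theta_G$ and $\turn \psi_H \rightarrow \theta_H$, and then $\theta = \theta_G \vee \theta_H$ works by propositional reasoning. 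The case $F = G \wedge H$ is identical with $\wedge$ in place of $\vee$. The modal case $F = \Box G$ is where the justification machinery enters: $\phi$ has the form $\lc{t}_{X}\psi_G$ with $\psi_G \in \llbracket G(\vvarphi,\chi) \rrbracket$ and $X$ the set of witness variables in $\psi_G$. The induction hypothesis gives $\theta_G \in \llbracket G(\vvarphi,\psi) \rrbracket$ with $\turn \psi_G \rightarrow \theta_G$. I then want a term realizing $\lc{t}_{X}\psi_G \rightarrow \lc{t\p}_{Y}\theta_G$, where $Y$ is the set of witness variables in $\theta_G$.

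The main obstacle is precisely this modal step, and it is the heart of the proof. By the Internalization Theorem~\ref{internalizationtheorem} applied to the derivation witnessing $\turn \psi_G \rightarrow \theta_G$, there is a justification term $s$ with $\turn \lc{s}(\psi_G \rightarrow \theta_G)$; taking $X$ as the subscript set via axiom \textbf{A3}, I get $\turn \lc{s}_{X}(\psi_G \rightarrow \theta_G)$, and then axiom \textbf{B2} yields $\turn \lc{s}_{X}(\psi_G \rightarrow \theta_G) \rightarrow (\lc{t}_{X}\psi_G \rightarrow \lc{[s\cdot t]}_{X}\theta_G)$, so by modus ponens $\turn \lc{t}_{X}\psi_G \rightarrow \lc{[s\cdot t]}_{X}\theta_G$. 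The delicate point is the subscript set: the definition of $\llbracket \Box G(\vvarphi,\psi) \rrbracket$ forces the subscript of the witnessing $\theta$ to be exactly the set $Y$ of witness variables occurring in $\theta_G$, which need not equal $X$. I expect to reconcile this using axioms \textbf{A2} and \textbf{A3} to adjust the subscript from $X$ to $Y$: variables in $X \setminus Y$ do not occur free in $\theta_G$ and can be dropped via \textbf{A2}, while variables in $Y \setminus X$ can be added via \textbf{A3}; care is needed to check, using the fact that $\turn \psi_G \rightarrow \theta_G$ constrains which witness variables can appear, that these applications are legitimate. With the term $t\p = [s \cdot t]$ (after subscript adjustment) and $\theta = \lc{t\p}_{Y}\theta_G \in \llbracket \Box G(\vvarphi,\psi) \rrbracket$, the implication $\turn \phi \rightarrow \theta$ is established, completing the induction.
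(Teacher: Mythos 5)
Your proposal is correct and follows essentially the same route as the paper: induction on the degree of the template, with the modal case handled exactly as in the paper via Internalization, axiom \textbf{A3} to lift the subscript, axiom \textbf{B2} to obtain $\lc{[s\cdot t]}_{X}\theta\p$, and then axioms \textbf{A2} and \textbf{A3} to adjust the subscript to the set $Y$ of witness variables of $\theta\p$. The worry you flag about legitimacy of the subscript adjustment is immediately resolved by the definition of $Y$: any witness variable in $X\setminus Y$ does not occur in $\theta\p$ at all, so \textbf{A2} applies, and \textbf{A3} has no side condition.
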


\begin{proof} By induction on the degree of $F(\tvp,\tq)$.
\smallskip

Suppose first that $F(\tvp,\tq)$ is atomic. There are two cases to consider.

\noindent i) $F(\tvp,\tq) = \tp_{i}$ where $\tp_i$ is in $\vec{\tp}$. Then $ \llbracket F(\vvarphi,\chi)   \rrbracket =\llbracket F(\vvarphi,\psi)   \rrbracket = \{ \varphi_{i}\}$ so $\phi \rightarrow \theta$ will be $\varphi_i\rightarrow\varphi_i$.

\noindent ii) $F(\tvp,\tq) = \tq$. Then $\llbracket F(\vvarphi,\chi)   \rrbracket = \{ \chi\}$ and $\llbracket F(\vvarphi,\psi)   \rrbracket = \{ \psi\}$, so $\phi\rightarrow\theta$ will be $\chi\rightarrow\psi$. 
\smallskip

Next assume that $F(\tvp,\tq)$ is $G(\tvp, \tq)\vee H(\tvp, \tq)$ and the result is known for $G(\tvp, \tq)$ and $H(\tvp, \tq)$. Let $\phi \in  \llbracket F(\vvarphi,\chi)   \rrbracket$. Then $\phi$ is $\phi\p \vee \phi\pp$ where  $\phi\p \in  \llbracket G(\vvarphi,\chi)   \rrbracket$ and  $\phi\pp \in  \llbracket H(\vvarphi,\chi)   \rrbracket$. By the induction hypothesis, there are $\theta\p \in  \llbracket G(\vvarphi,\psi)   \rrbracket$ and  $\theta\pp \in  \llbracket H(\vvarphi,\psi)   \rrbracket$ such that $\turn \phi\p \rightarrow \theta\p$ and $\turn \phi\pp \rightarrow \theta\pp$ and hence $\turn \phi\p \vee \phi\pp  \rightarrow \theta\p \vee \theta\pp$.  Since $\theta\p \vee \theta\pp \in  \llbracket F(\vvarphi,\psi)   \rrbracket$, take $\theta$ as $\theta\p \vee \theta\pp$. 
\smallskip

If $F(\tvp,\tq)$ is $G(\tvp, \tq)\wedge H(\tvp, \tq)$, then the argument is similar to the previous one.
\smallskip

Suppose $F(\tvp,\tq)$ is $\Box G(\tvp, \tq)$ and the result is known for $G(\tvp, \tq)$. Let $\phi \in  \llbracket F(\vvarphi,\chi)   \rrbracket$; then $\phi$ is $\lc{t}_{X}\phi\p$ where $\phi\p \in  \llbracket G(\vvarphi,\chi)   \rrbracket$. By the induction hypothesis, there is some $\theta\p \in  \llbracket G(\vvarphi,\psi)   \rrbracket$ such that $\turn \phi\p \rightarrow \theta\p$. By Internalization, \ref{internalizationtheorem}, there is a justification term $s \in \Tj$ such that $\turn \lc{s}(\phi\p \rightarrow \theta\p)$.  By repeated use of axiom \textbf{A3} and classical reasoning $\turn \lc{s}_{X}(\phi\p \rightarrow \theta\p)$.  By axiom \textbf{B2} and modus ponens, $\turn \lc{t}_{X}\phi\p \rightarrow \lc{[s\cdot t]}_{X}\theta\p$.  Let $Y$ be the set of all witness variables that occur in $\theta\p$. By repeated use of axioms \textbf{A2} and \textbf{A3}, we have that $\turn \lc{[s\cdot t]}_{X}\theta\p \rightarrow \lc{[s\cdot t]}_{Y}\theta\p$, and hence $\turn \lc{t}_{X}\phi\p \rightarrow \lc{[s\cdot t]}_{Y}\theta\p$.   
Since $\lc{[s\cdot t]}_{Y} \theta\p \in  \llbracket F(\vvarphi,\psi)   \rrbracket$, take $\theta$ to be $\lc{[s\cdot t]}_{Y} \theta\p$. 
\end{proof}

\begin{coro}[Variable change]
\label{variable}
Let $\Gamma \subseteq \Fjv$, $F(\tvp,\tq)$ be a positive template, $\vvarphi$ be a sequence of Henkin formulas, $\forall x \varphi(x)$ be a  Henkin formula, and $y$ be a basic variable that does not occur free or bound in $\forall x \varphi(x)$. If $\Gamma \cup \llbracket \neg F(\vvarphi,\forall x\varphi(x))   \rrbracket$ is consistent, then $\Gamma \cup \llbracket \neg F(\vvarphi,\forall y\varphi(y)) \rrbracket$ is consistent.     
\end{coro}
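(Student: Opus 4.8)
The plan is to prove the contrapositive: assuming $\Gamma \cup \llbracket \neg F(\vvarphi,\forall y\varphi(y)) \rrbracket$ is inconsistent, I will show that $\Gamma \cup \llbracket \neg F(\vvarphi,\forall x\varphi(x)) \rrbracket$ is inconsistent. The whole argument runs on the Semi-replacement Proposition~\ref{Semi}, applied to the positive template $F$, together with the elementary classical fact that $\forall x\varphi(x)$ and $\forall y\varphi(y)$ are provably equivalent.

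First I would record that, since $y$ is a basic variable not occurring free or bound in $\forall x\varphi(x)$, the formula $\forall y\varphi(y)$ is an alphabetic variant of $\forall x\varphi(x)$, so the classical reasoning available through axiom scheme \textbf{A1} yields $\turn \forall y\varphi(y) \rightarrow \forall x\varphi(x)$ (both directions in fact hold). Next, unfolding inconsistency via the Deduction Lemma~\ref{deductionlemma}, there is a finite subset of $\Gamma \cup \llbracket \neg F(\vvarphi,\forall y\varphi(y)) \rrbracket$ that proves $\bot$; write it as $\{\gamma_1, \ldots, \gamma_m\} \cup \{\neg\phi_1, \ldots, \neg\phi_k\}$, where each $\gamma_j \in \Gamma$ and, by clause~(b) of the instantiation-set definition, each $\phi_i \in \llbracket F(\vvarphi,\forall y\varphi(y)) \rrbracket$.

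Because $F$ is \emph{positive}, Proposition~\ref{Semi} applies with $\chi = \forall y\varphi(y)$ and $\psi = \forall x\varphi(x)$: for each $\phi_i$ there is some $\theta_i \in \llbracket F(\vvarphi,\forall x\varphi(x)) \rrbracket$ with $\turn \phi_i \rightarrow \theta_i$, hence $\turn \neg\theta_i \rightarrow \neg\phi_i$ by contraposition, and each $\neg\theta_i$ lies in $\llbracket \neg F(\vvarphi,\forall x\varphi(x)) \rrbracket$. I would then replace each hypothesis $\neg\phi_i$ in the derivation of $\bot$ by $\neg\theta_i$: from $\gamma_1, \ldots, \gamma_m, \neg\phi_1, \ldots, \neg\phi_k \turn \bot$ and the implications $\turn \neg\theta_i \rightarrow \neg\phi_i$, pure classical propositional reasoning gives $\gamma_1, \ldots, \gamma_m, \neg\theta_1, \ldots, \neg\theta_k \turn \bot$. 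Since every $\gamma_j \in \Gamma$ and every $\neg\theta_i \in \llbracket \neg F(\vvarphi,\forall x\varphi(x)) \rrbracket$, this exhibits a finite inconsistent subset, so $\Gamma \cup \llbracket \neg F(\vvarphi,\forall x\varphi(x)) \rrbracket$ is inconsistent, establishing the contrapositive.

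The one point that must be handled with care — essentially the only place a slip could occur — is the direction of the implication: since the instantiation sets in question are those of $\neg F$, Semi-replacement must be fed $\turn \forall y\varphi(y) \rightarrow \forall x\varphi(x)$, so that the produced $\theta_i$ live in the $x$-indexed set and the contraposed implications transport the inconsistency the right way. Everything else is routine propositional bookkeeping via the Deduction Lemma, and the positivity hypothesis on $F$ is exactly what licenses the single appeal to Proposition~\ref{Semi}.
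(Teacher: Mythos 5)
Your proof is correct and follows essentially the same route as the paper's: both arguments extract a finite inconsistent subset, invoke the provable implication $\turn \forall y\varphi(y) \rightarrow \forall x\varphi(x)$, and feed it to Semi-replacement (Proposition~\ref{Semi}) in exactly that direction to transport the inconsistency from the $y$-indexed instantiation set to the $x$-indexed one. The only cosmetic difference is that the paper packages the finite subset as a derivable disjunction $\Gamma \turn \psi_1 \vee \dots \vee \psi_n$ while you replace the hypotheses $\neg\phi_i$ by $\neg\theta_i$ directly; these are propositionally equivalent formulations of the same step.
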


\begin{proof} Suppose $\Gamma \cup \llbracket \neg F(\vvarphi,\forall y\varphi(y)) \rrbracket$ is inconsistent. Then there are $\lnot\psi_{1}, \dots, \lnot\psi_{n} \in \llbracket \neg F(\vvarphi,\forall y\varphi(y)) \rrbracket$ such that $\Gamma \turn \psi_{1} \vee \, \dots\,\vee \psi_{n}$.  By classical logic, $\turn \forall y\varphi(y) \rightarrow \forall x\varphi(x)$.  Since $\psi_1, \ldots, \psi_n\in \llbracket  F(\vvarphi,\forall y\varphi(y))   \rrbracket$, by Proposition \ref{Semi}, for each $\psi_{i}$ there is some $\theta_{i} \in  \llbracket  F(\vvarphi,\forall x\varphi(x)) \rrbracket$ such that $\turn \psi_{i} \rightarrow \theta_{i}$, and hence $\Gamma \turn \theta_{1} \vee \, \dots\, \vee \theta_{n}$.    Since each $\neg \theta_{i} \in  \llbracket  \neg F(\vvarphi,\forall x\varphi(x)) \rrbracket$, $\Gamma \cup \llbracket  \neg F(\vvarphi,\forall x\varphi(x)) \rrbracket$ is inconsistent.
\end{proof}

\begin{pro}[Vacuous quantification]
\label{vacuous}
Let $F(\tvp)$ be a disjunctive template, and $\vvarphi$ be a sequence of Henkin formulas none of which contain free occurrences of the basic variable $y$. For each $\psi \in  \llbracket F(\vvarphi)  \rrbracket$ there is some $\theta \in  \llbracket F(\vvarphi)  \rrbracket$ such that $\turn \exists y\psi \rightarrow \theta$.
\end{pro}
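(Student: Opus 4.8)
The plan is to induct on the degree of the disjunctive template $F(\tvp)$, in the same spirit as Proposition~\ref{Semi}. Because $F$ is disjunctive its only constructors are $\lor$ and $\Box$, so exactly three cases arise --- $F$ atomic, $F = G \lor H$, and $F = \Box G$, with $G$ and $H$ again disjunctive --- and in each case, given $\psi \in \llbracket F(\vvarphi) \rrbracket$, I will exhibit some $\theta \in \llbracket F(\vvarphi) \rrbracket$ with $\turn \exists y \psi \rightarrow \theta$.

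For the base case $F = \tp_{i}$ we have $\llbracket F(\vvarphi) \rrbracket = \{\varphi_{i}\}$, and since $y$ does not occur free in $\varphi_{i}$ by hypothesis, the classical validity $\exists y \varphi_{i} \rightarrow \varphi_{i}$ lets me take $\theta = \varphi_{i}$. The disjunctive step is where the hypothesis on $F$ is genuinely used: writing $\psi = \psi' \lor \psi''$ with $\psi' \in \llbracket G(\vvarphi) \rrbracket$ and $\psi'' \in \llbracket H(\vvarphi) \rrbracket$, I apply the induction hypothesis separately to obtain $\theta', \theta''$ with $\turn \exists y \psi' \rightarrow \theta'$ and $\turn \exists y \psi'' \rightarrow \theta''$; then I use the classical distribution $\turn \exists y(\psi' \lor \psi'') \rightarrow (\exists y \psi' \lor \exists y \psi'')$ and chain the two implications, taking $\theta = \theta' \lor \theta'' \in \llbracket F(\vvarphi) \rrbracket$.

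For the box case, suppose $\psi = \lc{t}_{X} \psi'$. The instantiation-set clause for $\Box$ makes $X$ the set of \emph{witness} variables occurring in $\psi'$, whereas $y$ is a \emph{basic} variable; hence $y \notin X$, and by Definition~\ref{freevarocc} the free individual variables of $\lc{t}_{X} \psi'$ are exactly those in $X$, so $y$ does not occur free in $\psi$. Then $\exists y \psi \rightarrow \psi$ is again a classical validity and I take $\theta = \psi$.

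The step I expect to require the most care is the disjunctive one, and it is precisely here that the restriction to disjunctive templates is essential: the existential quantifier distributes over $\lor$ but not over $\land$, so for a template containing a conjunction the induction would only yield $\exists y(\psi' \land \psi'') \rightarrow (\exists y \psi' \land \exists y \psi'')$, whose consequent need not be provably implied by any element of $\llbracket F(\vvarphi) \rrbracket$. The other point to keep straight is the basic-versus-witness distinction exploited in the box case, since that is exactly what forces the quantifier to be vacuous there; because the atomic and box cases already deliver $\theta = \psi$, the work of the induction is to propagate this vacuity cleanly through the disjunctions.
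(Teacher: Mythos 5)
Your argument is correct; the atomic and disjunction cases are exactly the paper's, but your $\Box$ case takes a genuinely different and shorter route. The paper does not take $\theta=\psi$ there: it weakens the subscript with \textbf{A3} to get $\turn \exists y\,\lc{t}_{X}\phi \rightarrow \exists y\,\lc{t}_{Xy}\phi$, applies the explicit converse Buridan formula (item 2 of Theorem~\ref{cbarcanburidan}) to obtain $\turn \exists y\,\lc{t}_{Xy}\phi \rightarrow \lc{s(t)}_{X}\exists y\,\phi$, and then uses the induction hypothesis on $\exists y\,\phi$ together with Internalization, \textbf{B2}, \textbf{A2} and \textbf{A3}, ending with a witness of the form $\lc{[s'\cdot s(t)]}_{Y}\theta'$, whose justification term differs from $t$. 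Your shortcut instead rests on the observation that the set of free variables of $\lc{t}_{X}\psi'$ is exactly $X$ (Definition~\ref{freevarocc}) and that the $\Box$ clause of the instantiation-set definition forces $X$ to consist of witness variables only, so the basic variable $y$ has no free occurrence in $\lc{t}_{X}\psi'$ and $\exists y\,\psi \rightarrow \psi$ is a vacuous-quantification instance of classical logic (\textbf{A1}). That is legitimate under the paper's free-variable conventions, and it in fact shows the proposition is almost immediate: no member of $\llbracket F(\vvarphi) \rrbracket$ has a free occurrence of $y$ at all, since the base formulas lack one by hypothesis, disjunction preserves this, and each $\lc{t}_{X}$ layer resets the free variables to the witness variables in $X$ --- so one could take $\theta=\psi$ in every case and skip the induction entirely, and your disjunction step ends up doing no essential work once the $\Box$ case is handled your way. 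What the paper's longer derivation buys is uniformity with the neighbouring Propositions~\ref{Semi} and \ref{generalized}, where the existential genuinely must be pushed through the justification term and the templates rebuilt with new terms; what yours buys is economy, and a witness $\theta$ that keeps the original term $t$.
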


\begin{proof} By induction on the degree of $F(\tvp)$.
\smallskip

Suppose $F(\tvp)$ is $\tp_{i}$.  Then $\llbracket F(\vvarphi)  \rrbracket =\{ \varphi_{i}  \}$. Since $y$ does not occur free in $\varphi_{i}$, $\turn \exists y\varphi_{i} \rightarrow \varphi_{i}$, so we take $\theta$ to be $\varphi_{i}$.
\smallskip

Next assume that $F(\tvp)$ is $G(\tvp)\vee H(\tvp)$ and the result is known for $G(\tvp)$ and $ H(\tvp)$. Let $\psi \in  \llbracket F(\vvarphi)  \rrbracket$. Then $\psi$ is $\psi\p \vee \psi\pp$ where  $\psi\p \in  \llbracket G(\vvarphi)  \rrbracket$ and  $\psi\pp \in  \llbracket H(\vvarphi)  \rrbracket$. By the induction hypothesis there are $\theta\p \in  \llbracket G(\vvarphi)  \rrbracket$ and  $\theta\pp \in  \llbracket H(\vvarphi)  \rrbracket$ such that $\turn \exists y \psi\p \rightarrow \theta\p \; and \; \turn \exists y\psi\pp \rightarrow \theta\pp$.  By classical logic, $\turn \exists y (\psi\p \vee \psi\pp) \longleftrightarrow  (\exists y\psi\p \vee \exists y\psi\pp)$, hence $\turn \exists y (\psi\p \vee \psi\pp) \rightarrow \theta\p \vee \theta\pp$.      
Since $\theta\p \vee \theta\pp \in  \llbracket F(\vvarphi)  \rrbracket$, take $\theta$ to be $\theta\p \vee \theta\pp$.
\smallskip

Finally suppose $F(\tvp)$ is $\Box G(\tvp)$ and the result is known for $ G(\tvp)$. Let $\psi \in  \llbracket F(\vvarphi)  \rrbracket$. So $\psi$ is $\lc{t}_{X}\phi$ where $\phi \in  \llbracket G(\vvarphi)  \rrbracket$. By the axiom \textbf{A3}, $\turn \lc{t}_{X}\phi \rightarrow \lc{t}_{Xy}\phi$.  Then using classical logic, $\turn \exists y \lc{t}_{X}\phi \rightarrow \exists y \lc{t}_{Xy}\phi$.  By definition, $X$ is a set of witness variables and since $y$ is a basic variable we have that $y \notin X$; so by 2.\ of Theorem \ref{cbarcanburidan}, $\turn \exists y \lc{t}_{Xy}\phi \rightarrow \lc{s(t)}_{X}\exists y \phi$.

By the induction hypothesis, there is some $\theta\p \in  \llbracket G(\vvarphi)  \rrbracket$ such that $\turn \exists y \phi \rightarrow \theta\p$. By Internalization, \ref{internalizationtheorem}, there is a justification term $s\p \in \Tj$ such that $\turn \lc{s\p}_{X}(\exists y \phi \rightarrow \theta\p)$.  Using axiom \textbf{B2}, $\turn \lc{s(t)}_{X}\exists y \phi \rightarrow \lc{[s\p \cdot s(t)]}_{X}\theta\p$.  Let $Y$ be the set of all witness variables that occur in $\theta\p$. By repeated use of axioms \textbf{A2} and \textbf{A3}, we have $\turn \lc{[s\p \cdot s(t)]}_{X}\theta\p \rightarrow \lc{[s\p \cdot s(t)]}_{Y}\theta\p$, hence $\turn \exists y \lc{t}_{X}\phi \rightarrow \lc{[s\p \cdot s(t)]}_{Y}\theta\p$.  Since $\lc{[s\p \cdot s(t)]}_{Y}\theta\p \in  \llbracket F(\vvarphi)  \rrbracket$, we take $\theta$ to be $\lc{[s\p \cdot s(t)]}_{Y}\theta\p$.   
\end{proof}

\begin{pro}[Generalized Barcan]
\label{generalized}
Let $F(\tvp,\tq)$ be a disjunctive template, $y$ a basic variable, $\varphi(y)$ a Henkin formula, and  $\vvarphi$ a sequence of Henkin formulas none of which contain free occurrences of $y$. For each $\psi \in \llbracket F(\vvarphi,\varphi(y)) \rrbracket$ there is some $\theta \in \llbracket F(\vvarphi,\forall y \varphi(y)) \rrbracket$ such that $\turn \forall y\psi \rightarrow \theta$.
\end{pro}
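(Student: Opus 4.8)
The plan is to proceed by induction on the degree of the disjunctive template $F(\tvp,\tq)$, following the same pattern as the proof of Proposition~\ref{vacuous}, since the two statements are structurally parallel: there we pushed an $\exists y$ past a disjunctive template using the converse Buridan formula, and here we push a $\forall y$ past a disjunctive template, which should instead invoke axiom \textbf{Bb} (the explicit Barcan direction) at the modal clause. The base case is when $F(\tvp,\tq)$ is $\tq$: then $\llbracket F(\vvarphi,\varphi(y))\rrbracket = \{\varphi(y)\}$ and $\llbracket F(\vvarphi,\forall y\varphi(y))\rrbracket = \{\forall y\varphi(y)\}$, so $\turn \forall y\varphi(y) \rightarrow \forall y\varphi(y)$ trivially, and we take $\theta$ to be $\forall y\varphi(y)$. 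The other atomic case, $F(\tvp,\tq) = \tp_i$ with $\tp_i$ in $\tvp$, gives $\llbracket F(\vvarphi,\varphi(y))\rrbracket = \{\varphi_i\}$ with $\varphi_i$ not containing $y$ free (by hypothesis on $\vvarphi$), so $\turn \forall y\varphi_i \rightarrow \varphi_i$ and we take $\theta$ to be $\varphi_i$.

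For the disjunction case, suppose $F(\tvp,\tq)$ is $G(\tvp,\tq) \vee H(\tvp,\tq)$. Given $\psi \in \llbracket F(\vvarphi,\varphi(y))\rrbracket$, write $\psi$ as $\psi\p \vee \psi\pp$ with $\psi\p \in \llbracket G(\vvarphi,\varphi(y))\rrbracket$ and $\psi\pp \in \llbracket H(\vvarphi,\varphi(y))\rrbracket$. The induction hypothesis supplies $\theta\p$ and $\theta\pp$ with $\turn \forall y\psi\p \rightarrow \theta\p$ and $\turn \forall y\psi\pp \rightarrow \theta\pp$. Here I would use the classical validity $\turn \forall y(\psi\p \vee \psi\pp) \rightarrow (\forall y \psi\p \vee \forall y \psi\pp)$ --- note this is the direction that holds for $\forall$ over $\vee$ only because we may weaken; more carefully, $\turn \forall y(\psi\p\vee\psi\pp)\rightarrow \forall y\psi\p \vee \forall y\psi\pp$ is \emph{not} a classical validity in general, so the disjunction step is where the restriction to \emph{disjunctive} templates is genuinely used together with a more delicate argument. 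This is precisely where I expect the main obstacle to lie, and I return to it below.

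For the modal case, suppose $F(\tvp,\tq)$ is $\Box G(\tvp,\tq)$ and the result holds for $G$. Let $\psi \in \llbracket F(\vvarphi,\varphi(y))\rrbracket$, so $\psi$ is $\lc{t}_{X}\phi$ with $\phi \in \llbracket G(\vvarphi,\varphi(y))\rrbracket$ and $X$ the set of witness variables in $\phi$. Since $y$ is a basic variable, $y \notin X$. Starting from $\forall y \lc{t}_{X}\phi$, I would first use axiom \textbf{A3} to pass to $\lc{t}_{Xy}\phi$ inside the quantifier (justifying the subscript so that \textbf{Bb} applies), obtaining $\turn \forall y \lc{t}_{X}\phi \rightarrow \forall y \lc{t}_{Xy}\phi$, then apply axiom \textbf{Bb} to get $\turn \forall y \lc{t}_{Xy}\phi \rightarrow \lc{\jb(t)}_{X}\forall y\phi$. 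The induction hypothesis gives some $\theta\p \in \llbracket G(\vvarphi,\forall y\varphi(y))\rrbracket$ with $\turn \forall y\phi \rightarrow \theta\p$; by Internalization~\ref{internalizationtheorem} there is a term $s\p$ with $\turn \lc{s\p}_{X}(\forall y\phi \rightarrow \theta\p)$, so by \textbf{B2} and modus ponens $\turn \lc{\jb(t)}_{X}\forall y\phi \rightarrow \lc{[s\p\cdot \jb(t)]}_{X}\theta\p$. Finally, letting $Y$ be the witness variables occurring in $\theta\p$, repeated use of \textbf{A2} and \textbf{A3} yields $\turn \lc{[s\p\cdot \jb(t)]}_{X}\theta\p \rightarrow \lc{[s\p\cdot \jb(t)]}_{Y}\theta\p$, and chaining the implications gives $\turn \forall y\lc{t}_{X}\phi \rightarrow \lc{[s\p\cdot \jb(t)]}_{Y}\theta\p$; since $\lc{[s\p\cdot \jb(t)]}_{Y}\theta\p \in \llbracket F(\vvarphi,\forall y\varphi(y))\rrbracket$, we take $\theta$ to be this formula. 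The hard part, as flagged, is the disjunction clause: because $\forall y(\alpha \vee \beta) \rightarrow \forall y\alpha \vee \forall y\beta$ fails classically, the straightforward induction breaks, and the correct route is likely to apply Proposition~\ref{vacuous} (vacuous quantification) to the subformulas whose propositional slot does not actually carry the variable $y$, reducing a disjunct to a $y$-free formula before distributing; the disjunctive (no $\wedge$) restriction is exactly what guarantees that at each disjunction at most the relevant branch depends on $y$ so that this reduction goes through.
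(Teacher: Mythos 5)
Your proposal follows the paper's proof essentially step for step: the same induction on template degree, and your modal clause (axiom \textbf{A3} to pad the subscript with $y$ under the quantifier, then \textbf{Bb}, then the induction hypothesis plus Internalization and \textbf{B2}, then \textbf{A2}/\textbf{A3} to normalize the subscript to the witness variables of $\theta\p$) is exactly the paper's argument. The one clause you leave as a guess, the disjunction case, is resolved in the paper just as you suspect, but two points are worth pinning down. First, the reason at most one of $G$, $H$ involves $\tq$ is not the absence of $\wedge$: it is the clause in the definition of template requiring that no propositional letter occur more than once, which holds for all templates, disjunctive or not. Second, the precise classical step is $\turn \forall y(\phi\p\vee\phi\pp)\rightarrow(\forall y\,\phi\p\vee\exists y\,\phi\pp)$, which \emph{is} a genuine validity (unlike distribution of $\forall$ over $\vee$); the induction hypothesis then handles $\forall y\,\phi\p$, while Proposition~\ref{vacuous} converts $\exists y\,\phi\pp$ into some $\theta\pp\in\llbracket H(\vvarphi)\rrbracket$ --- this is your ``reduce the $y$-free disjunct'' idea made exact --- and one takes $\theta=\theta\p\vee\theta\pp$.
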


\begin{proof} By induction on the degree of $F(\tvp,\tq)$.
\smallskip

If $F(\tvp,\tq)$ is atomic, then the result is trivial.
\smallskip

Next assume that $F(\tvp,\tq)$ is $G(\tvp, \tq)\vee H(\tvp, \tq)$ and the result is known for $G(\tvp, \tq)$ and $H(\tvp, \tq)$. By the definition of template, the propositional variable $\tq$ can occur at most once in $F(\tvp,\tq)$, so either it does not occur in $G(\tvp,\tq)$ or it does not occur in  $H(\tvp,\tq)$. Assume that it does not occur in $H(\tvp,\tq)$ (the other case has a similar proof); then we can write $H(\tvp,\tq)$ as $H(\tvp)$.    

Let $\psi \in \llbracket F(\vvarphi,\varphi(y)) \rrbracket$. So $\psi$ is $\phi\p \vee \phi\pp$ where  $\phi\p \in \llbracket G(\vvarphi,\varphi(y)) \rrbracket$ and  $\phi\pp \in  \llbracket H(\vvarphi)  \rrbracket$. By classical logic we have $\turn \forall y (\phi\p \vee \phi\pp) \rightarrow (\forall y \phi\p \vee \exists y \phi\pp)$.  Since $y$ does not occur free in any formula of $\vvarphi$, then by Proposition \ref{vacuous} there is some $\theta\pp \in  \llbracket H(\vvarphi)  \rrbracket$ such that $\turn \exists y \phi\pp \rightarrow \theta\pp$.  By the induction hypothesis, there is some $\theta\p \in \llbracket G(\vvarphi,\forall y \varphi (y)) \rrbracket$ such that $\turn \forall y \psi\p \rightarrow \theta\p$ and hence $\turn \forall y (\phi\p \vee \phi\pp) \rightarrow \theta\p \vee \theta\pp$.  Then we can take $\theta$ as $\theta\p \vee \theta\pp$.
\smallskip

Suppose $F(\tvp,\tq)$ is $\Box G(\tvp,\tq)$ and the result is known for $G(\tvp,\tq)$. Let $\psi \in \llbracket F(\vvarphi, \varphi(y)) \rrbracket$. Then $\psi$ is $\lc{t}_{X}\phi$ where $\phi \in \llbracket G(\vvarphi,\varphi(y)) \rrbracket$. By axiom \textbf{Bb}, $\turn \forall y \lc{t}_{Xy}\phi \rightarrow \lc{\jb(t)}_{X}\forall y \phi$. By axiom \textbf{A3}, $\turn \lc{t}_{X}\phi \rightarrow \lc{t}_{Xy}\phi$ so by classical logic, $\turn \forall y\lc{t}_{X}\phi \rightarrow \forall y \lc{t}_{Xy}\phi$.  Then $\turn \forall y\lc{t}_{X}\phi \rightarrow \lc{\jb(t)}_{X}\forall y \phi$.
    
By the induction hypothesis, there is some $\theta\p \in \llbracket G(\vvarphi, \forall y \varphi(y)) \rrbracket$ such that $\turn \forall y \phi \rightarrow \theta\p$. By Internalization, \ref{internalizationtheorem}, and axiom \textbf{A3}, there is a justification term $s \in \Tj$ such that $\turn \lc{s}_{X}(\forall y \phi \rightarrow \theta\p)$.  Then by axiom \textbf{B2}, $\turn \lc{\jb(t)}_{X}\forall y \phi \rightarrow \lc{[s \cdot \jb(t)]}_{X}\theta\p$. 
Let $Y$ be the set of all witness variables that occur in $\theta\p$. By repeated use of axioms \textbf{A2} and \textbf{A3}, we have that $\turn \lc{[s \cdot \jb(t)]}_{X}\theta\p \rightarrow \lc{[s \cdot \jb(t)]}_{Y}\theta\p$ and hence $\turn \forall y\lc{t}_{X}\phi \rightarrow \lc{[s \cdot \jb(t)]}_{Y}\theta\p$.  Take $\theta$ to be $\lc{[s \cdot \jb(t)]}_{Y}\theta\p$.    
\end{proof}

\begin{pro}[Formula combining]
\label{combining}
Let $F(\tvp)$ be a disjunctive template, and $\vvarphi$ be a sequence of Henkin formulas. For any $\psi_{1}, \, \dots, \, \psi_{k}  \in  \llbracket F(\vvarphi)  \rrbracket$ there is a formula $\theta \in  \llbracket F(\vvarphi)  \rrbracket$ such that $\turn (\psi_{1} \vee \, \dots\, \vee \psi_{k}) \rightarrow \theta$.
\end{pro}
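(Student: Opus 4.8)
The plan is to induct on the degree of the disjunctive template $F(\tvp)$, in the same style as Propositions~\ref{vacuous} and \ref{generalized}; note that every subtemplate of a disjunctive template is again disjunctive, so the induction hypothesis applies to the immediate subtemplates. Since $F$ is disjunctive, the only cases are atomic, $\vee$, and $\Box$ (no $\wedge$, no negation). The atomic and $\vee$ cases are pure classical bookkeeping, and all the genuine content sits in the $\Box$ case, where combining several justification-labelled formulas into one requires the sum operator together with axiom \textbf{B3}.

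For the base case, if $F(\tvp)$ is $\tp_{i}$ then $\llbracket F(\vvarphi) \rrbracket = \{\varphi_{i}\}$ is a singleton, so each $\psi_{j}$ equals $\varphi_{i}$ and I take $\theta = \varphi_{i}$, with $\turn (\varphi_{i} \vee \dots \vee \varphi_{i}) \rightarrow \varphi_{i}$ holding classically. For $F = G \vee H$, each $\psi_{j}$ decomposes as $\psi_{j}\p \vee \psi_{j}\pp$ with $\psi_{j}\p \in \llbracket G(\vvarphi) \rrbracket$ and $\psi_{j}\pp \in \llbracket H(\vvarphi) \rrbracket$. Applying the induction hypothesis separately to $G$ and $H$ produces $\theta\p \in \llbracket G(\vvarphi) \rrbracket$ and $\theta\pp \in \llbracket H(\vvarphi) \rrbracket$ with $\turn (\psi_{1}\p \vee \dots \vee \psi_{k}\p) \rightarrow \theta\p$ and $\turn (\psi_{1}\pp \vee \dots \vee \psi_{k}\pp) \rightarrow \theta\pp$; regrouping the disjunction $\psi_{1} \vee \dots \vee \psi_{k}$ classically into a disjunction of primed parts and a disjunction of doubly-primed parts then yields $\turn (\psi_{1} \vee \dots \vee \psi_{k}) \rightarrow (\theta\p \vee \theta\pp)$, and $\theta\p \vee \theta\pp \in \llbracket F(\vvarphi) \rrbracket$, so I take $\theta = \theta\p \vee \theta\pp$.

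The $\Box$ case is where the work lies. Here $F = \Box G$ and each $\psi_{j}$ has the form $\lc{t_{j}}_{X_{j}}\phi_{j}$ with $\phi_{j} \in \llbracket G(\vvarphi) \rrbracket$ and $X_{j}$ the set of witness variables occurring in $\phi_{j}$. The induction hypothesis gives a single $\phi \in \llbracket G(\vvarphi) \rrbracket$ with $\turn (\phi_{1} \vee \dots \vee \phi_{k}) \rightarrow \phi$, hence $\turn \phi_{j} \rightarrow \phi$ for each $j$. For each $j$ I then internalize (Theorem~\ref{internalizationtheorem}) to obtain $s_{j}$ with $\turn \lc{s_{j}}(\phi_{j} \rightarrow \phi)$, raise the subscript to $X_{j}$ by repeated \textbf{A3} to get $\turn \lc{s_{j}}_{X_{j}}(\phi_{j} \rightarrow \phi)$, and apply \textbf{B2} with modus ponens to obtain $\turn \lc{t_{j}}_{X_{j}}\phi_{j} \rightarrow \lc{[s_{j} \cdot t_{j}]}_{X_{j}}\phi$. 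Because the consequent now carries the common formula $\phi$, I reconcile the subscript to the set $X$ of witness variables actually occurring in $\phi$ by repeated use of \textbf{A2} and \textbf{A3}, exactly as in the earlier template proofs; this is sound because any witness variable in $X_{j} \setminus X$ fails to occur in $\phi$ and so can be dropped by \textbf{A2}. This delivers $\turn \psi_{j} \rightarrow \lc{[s_{j} \cdot t_{j}]}_{X}\phi$ for every $j$.

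Finally I fuse the $k$ terms into one via the sum operator: with $t = [s_{1} \cdot t_{1}] + \dots + [s_{k} \cdot t_{k}]$ (suitably associated), repeated application of axiom \textbf{B3} gives $\turn \lc{[s_{j} \cdot t_{j}]}_{X}\phi \rightarrow \lc{t}_{X}\phi$ for each $j$, whence $\turn \psi_{j} \rightarrow \lc{t}_{X}\phi$ and therefore $\turn (\psi_{1} \vee \dots \vee \psi_{k}) \rightarrow \lc{t}_{X}\phi$. Since $\phi \in \llbracket G(\vvarphi) \rrbracket$, $t \in \Tj$, and $X$ is exactly the set of witness variables in $\phi$, the formula $\theta = \lc{t}_{X}\phi$ belongs to $\llbracket F(\vvarphi) \rrbracket$, closing the induction. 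I expect the main obstacle to be precisely the bookkeeping of this $\Box$ case --- lining up the differing subscripts $X_{j}$ against the single $X$ through \textbf{A2}/\textbf{A3} --- while the genuinely new ingredient, compared with Propositions~\ref{vacuous} and \ref{generalized}, is the appeal to \textbf{B3} to absorb all the $[s_{j} \cdot t_{j}]$ into one term $t$.
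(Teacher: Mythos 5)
Your proof is correct and follows essentially the same route as the paper's: induction on the degree of the template, with the $\Box$ case handled via Internalization, \textbf{A3}, \textbf{B2}, subscript adjustment by \textbf{A2}/\textbf{A3}, and \textbf{B3} to merge the terms $[s_j\cdot t_j]$ into a single sum. The only (immaterial) difference is that you normalize the subscripts to the common set $X$ before applying \textbf{B3}, whereas the paper applies \textbf{B3} at the subscripts $X_i$ and shifts to the common set $Y$ afterward.
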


\begin{proof} Induction on the degree of $F(\tvp)$.
\smallskip

If $F(\tvp)$ is atomic, the result is trivial.
\smallskip

Assume that $F(\tvp)$ is $G(\tvp)\vee H(\tvp)$ and the result is known for $G(\tvp)$ and $H(\tvp)$. Let $\psi_{1}, \, \dots, \, \psi_{k} \in  \llbracket F(\vvarphi)  \rrbracket$. Then there are $\phi_{1}^{\p}, \, \dots, \, \phi_{k}^{\p} \in  \llbracket G(\vvarphi)  \rrbracket$ and  $\phi_{1}^{\pp}, \, \dots, \, \phi_{k}^{\pp} \in  \llbracket H(\vvarphi)  \rrbracket$, such that $\psi_{i} = \phi_{i}^{\p} \vee \phi_{i}^{\pp}$. By the induction hypothesis, there are $\theta^{\p} \in  \llbracket G(\vvarphi)  \rrbracket$ and $\theta^{\pp} \in  \llbracket H(\vvarphi)  \rrbracket$ such that $\turn (\phi_{1}^{\p} \vee \, \dots \, \vee \phi_{k}^{\p}) \rightarrow \theta\p$ and $\turn (\phi_{1}^{\pp} \vee \, \dots \,\vee \phi_{k}^{\pp}) \rightarrow \theta\pp$.  Then $\turn ((\phi_{1}^{\p} \vee\, \dots \,\vee \phi_{k}^{\p}) \vee (\phi_{1}^{\pp} \vee\, \dots \,\vee \phi_{k}^{\pp})) \rightarrow \theta\p \vee \theta\pp$ and by rearranging, $\turn ((\phi_{1}^{\p} \vee \phi_{1}^{\pp})  \vee\, \dots \,\vee (\phi_{k}^{\p} \vee \phi_{k}^{\pp}))  \rightarrow \theta\p \vee \theta\pp$.  That is, $\turn (\psi_{1} \vee\, \dots \,\vee \psi_{k}) \rightarrow \theta\p \vee \theta\pp$, so take $\theta$ to be $\theta\p \vee \theta\pp$.
\smallskip

Assume that $F(\tvp)$ is $\Box G(\tvp)$ and the result is known for $G(\tvp)$. Let $\psi_{1}, \, \dots, \, \psi_{k} \in  \llbracket F(\vvarphi)  \rrbracket$. Then there are justification terms $t_{1},\, \dots, \, t_{k}$ and $\phi_{1},\, \dots, \, \phi_{k} \in  \llbracket G(\vvarphi)  \rrbracket$ such that $\psi_{i} = \lc{t_{i}}_{X_{i}}\phi_{i}$. By the induction hypothesis, there is some $\theta^{\p} \in  \llbracket G(\vvarphi)  \rrbracket$ such that $\turn (\phi_{1} \vee\, \dots \,\vee \phi_{k}) \rightarrow \theta\p$.

By classical reasoning, for each $i$, $\turn \phi_{i} \rightarrow \theta\p$.  Then by Internalization, \ref{internalizationtheorem}, and axiom \textbf{A3} there are  justification terms $s_{1}, \, \dots, \, s_{k} \in \Tj$ such that for each $i$, $\turn \lc{s_{i}}_{X_{i}}(\phi_{i} \rightarrow \theta\p)$.  Using axiom \textbf{B2}, $\turn \lc{t_{i}}_{X_{i}}\phi_{i} \rightarrow \lc{[s_{i} \cdot t_{i}]}_{X_{i}}\theta\p$, and then by axiom \textbf{B3} we have that for each $i$, $\turn \lc{[s_{i} \cdot t_{i}]}_{X_{i}}\theta\p \rightarrow \lc{[[s_{1} \cdot t_{1}] + \, \dots \, + [s_{k} \cdot t_{k}]]}_{X_{i}}\theta\p$.

Let $Y$ be the set of all witness variables that occur in $\theta\p$. By repeated use of axioms \textbf{A2} and \textbf{A3}, we have $\turn \lc{[[s_{1} \cdot t_{1}] + \, \dots \, + [s_{k} \cdot t_{k}]]}_{X_{i}}\theta\p \rightarrow \lc{[[s_{1} \cdot t_{1}] + \, \dots \, + [s_{k} \cdot t_{k}]]}_{Y}\theta\p$, and so for each $i$, $\turn \lc{t_{i}}_{X_{i}}\phi_{i} \rightarrow \lc{[[s_{1} \cdot t_{1}] + \, \dots \, + [s_{k} \cdot t_{k}]]}_{Y}\theta\p$.  Then $\turn (\lc{t_{1}}_{X_{1}}\phi_{1} \vee \,  \dots \,  \vee \lc{t_{k}}_{X_{k}}\phi_{k}) \rightarrow \lc{[[s_{1} \cdot t_{1}] + \, \dots \, + [s_{k} \cdot t_{k}]]}_{Y}\theta\p$.  Since $\lc{[[s_{1} \cdot t_{1}] + \, \dots \, + [s_{k} \cdot t_{k}]]}_{Y}\theta\p \in  \llbracket F(\vvarphi)  \rrbracket$, we can take $\theta$ to be $\lc{[[s_{1} \cdot t_{1}] + \, \dots \, + [s_{k} \cdot t_{k}]]}_{Y}\theta\p$.
\end{proof}

\begin{pro}[Existential Instantiation]
\label{existentialinstantiation}    
Let $F(\tvp,\tq)$ be a disjunctive template, $\Gamma \subseteq \Fj$, $\vec{\chi}$ be a sequence of Henkin formulas,  $\forall x \varphi(x)$ be a Henkin formula, and $a$ be a witness variable that does not occur free in $\forall x \varphi(x)$ or in any member of $\vec{\chi}$. If $\Gamma \cup  \llbracket \neg F(\vec{\chi},\forall x\varphi(x))  \rrbracket$ is consistent, then $\Gamma \cup  \llbracket \neg F(\vec{\chi},\varphi(a))  \rrbracket$ is consistent.    
\end{pro}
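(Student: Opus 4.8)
The plan is to argue by contraposition, converting the consistency claim into a provability claim and then driving a universal quantifier inside a justification subscript. So assume $\Gamma \cup \llbracket \neg F(\vec{\chi},\varphi(a)) \rrbracket$ is inconsistent; I want to deduce that $\Gamma \cup \llbracket \neg F(\vec{\chi},\forall x\varphi(x)) \rrbracket$ is inconsistent. A finite subset is already inconsistent, so there are $\psi_{1},\dots,\psi_{n} \in \llbracket F(\vec{\chi},\varphi(a)) \rrbracket$ with $\Gamma \turn \psi_{1}\vee\dots\vee\psi_{n}$ (if no $\neg\psi_{i}$ is needed then $\Gamma$ itself is inconsistent and the conclusion is immediate). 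Since $F$ is disjunctive, Proposition~\ref{combining} gives a single $\psi \in \llbracket F(\vec{\chi},\varphi(a)) \rrbracket$ with $\turn (\psi_{1}\vee\dots\vee\psi_{n})\rightarrow\psi$, whence $\Gamma \turn \psi$.

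Next I would generalize on the witness variable $a$. Because $\Gamma \subseteq \Fj$ contains no witness variables and $a$ occurs neither in $\vec{\chi}$ nor free in $\forall x\varphi(x)$, the only occurrences of $a$ in $\psi$ are inside its single copy of $\varphi(a)$ together with the subscripts lying above that copy. Collecting the finitely many members $\gamma_{1},\dots,\gamma_{m}$ of $\Gamma$ used, Lemma~\ref{deductionlemma} gives $\turn (\gamma_{1}\wedge\dots\wedge\gamma_{m})\rightarrow\psi$, a theorem in which $a$ occurs only in $\psi$. Corollary~\ref{gencorollary} then replaces $a$ by a fresh basic variable $y$ and prefixes $\forall y$; as $y$ is not free in the $\gamma_{i}$, classical logic yields $\Gamma \turn \forall y\,\psi(y)$, where $\psi(y)$ is $\psi$ with every occurrence of $a$ (in $\varphi(a)$ and in subscripts alike) rewritten to $y$.

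The remaining task is to absorb $\forall y$, and this is where I expect the main obstacle. Since the rewriting $a\mapsto y$ also acts on subscripts, the variable $y$ now occurs in every subscript on the path down to $\varphi(y)$, so $\psi(y)$ is \emph{not} an element of $\llbracket F(\vec{\chi},\varphi(y)) \rrbracket$ (whose subscripts hold only witness variables); hence Proposition~\ref{generalized} cannot be quoted verbatim, and one cannot strip $y$ from those subscripts using \textbf{A2} because $y$ is free in $\varphi(y)$. What saves the argument is that $\forall y\,\psi(y)$ already has the exact shape of the antecedent of axiom~\textbf{Bb}. I would therefore re-run the induction on the degree of $F$ that proves Proposition~\ref{generalized}, with the single modification that in the $\Box G$ case the quantified variable is already present in the subscript, so axiom~\textbf{Bb} applies directly and the preliminary \textbf{A3} step of that proof is omitted; the atomic and $\vee$ cases go through as there, with Proposition~\ref{vacuous} handling the $\tq$-free disjunct. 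This produces a $\theta \in \llbracket F(\vec{\chi},\forall y\varphi(y)) \rrbracket$ with $\turn \forall y\,\psi(y)\rightarrow\theta$, and so $\Gamma \turn \theta$.

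To close, $\neg\theta \in \llbracket \neg F(\vec{\chi},\forall y\varphi(y)) \rrbracket$ and $\Gamma,\neg\theta \turn \bot$, so $\Gamma \cup \llbracket \neg F(\vec{\chi},\forall y\varphi(y)) \rrbracket$ is inconsistent; because $y$ was chosen fresh for $\forall x\varphi(x)$, the contrapositive of Corollary~\ref{variable} transfers the inconsistency to $\Gamma \cup \llbracket \neg F(\vec{\chi},\forall x\varphi(x)) \rrbracket$, completing the contraposition. Throughout, the one genuinely delicate point is tracking the witness variable as it migrates through the justification subscripts; the rest is the disjunction-and-combining bookkeeping already developed for templates.
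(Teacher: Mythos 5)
Your proposal is correct and follows the paper's own route step for step: contraposition, Formula Combining (Proposition~\ref{combining}) to collapse the finitely many instances into one $\psi(a)$, Corollary~\ref{gencorollary} to trade the witness variable $a$ for a fresh basic variable $y$ and prefix $\forall y$, a Generalized-Barcan step to push $\forall y$ inside, and finally Corollary~\ref{variable} to return from $\forall y\varphi(y)$ to $\forall x\varphi(x)$. The one place you diverge is precisely the place where the paper is least careful. The paper asserts that ``it can be easily checked'' that $\psi(y)\in\llbracket F(\vec{\chi},\varphi(y))\rrbracket$ and then cites Proposition~\ref{generalized} as a black box; as you observe, under the letter of the instantiation-set definition this membership fails whenever $\tq$ lies under a $\Box$ in $F$ and $a$ actually occurs in $\varphi(a)$, because the substitution $a\mapsto y$ carries $y$ into the subscripts along the path to $\varphi(y)$, whereas members of $\llbracket F(\vec{\chi},\varphi(y))\rrbracket$ carry only witness variables in their subscripts (and \textbf{A2} cannot strip $y$ there, since $y$ is free in the body). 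Your repair --- re-running the induction of Proposition~\ref{generalized} on the substituted formula, so that in the $\Box$ case the formula already has the shape $\forall y\,\lc{t}_{X_0y}\phi(y)$ demanded by axiom \textbf{Bb} and the preliminary \textbf{A3} step is dropped --- is sound and is the natural way to make the paper's step rigorous; the atomic and $\vee$ cases, and the final $Y$-subscript adjustment via \textbf{A2}/\textbf{A3}, go through unchanged. One small point to tidy: if $a$ happens not to occur in some sub-body (e.g.\ when $x$ is not free in $\varphi(x)$, or a subformula absorbs no copy of $a$), the corresponding subscript does not acquire $y$, and there the original \textbf{A3}-then-\textbf{Bb} move of Proposition~\ref{generalized} is still the one to use; your $\Box$ case should be phrased to cover both alternatives. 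With that noted, your argument is complete and in fact slightly more careful than the published one.
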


\begin{proof} 
We show the contrapositive.  Suppose $\Gamma \cup  \llbracket \neg F(\vec{\chi},\varphi(a))  \rrbracket$ is inconsistent.  We show that  $\Gamma \cup  \llbracket \neg F(\vec{\chi},\forall x\varphi(x))  \rrbracket$ is inconsistent.

Assuming $\Gamma \cup  \llbracket \neg F(\vec{\chi},\varphi(a))  \rrbracket$ is inconsistent, there are $\psi_{1}, \, \dots, \, \psi_{n} \in \Gamma$ and $\neg \phi_{1}(a), \, \dots, \, \neg \phi_{k}(a) \in  \llbracket \neg F(\vec{\chi},\varphi(a))  \rrbracket$ such that $\turn(\psi_{1} \wedge \, \dots\, \wedge \psi_{n}) \wedge (\neg \phi_{1}(a) \wedge \, \dots \, \wedge \neg \phi_{k}(a)) \rightarrow \bot$.  It follows that $\turn(\psi_{1} \wedge \, \dots\, \wedge \psi_{n}) \rightarrow (\phi_{1}(a) \vee\, \dots \, \vee \phi_{k}(a))$.  By Proposition~\ref{combining} there is some $\psi(a) \in  \llbracket F(\vec{\chi},\varphi(a))  \rrbracket$ such that $\turn (\phi_{1}(a) \vee\, \dots \, \vee \phi_{k}(a)) \rightarrow \psi(a)$ and hence $\turn (\psi_{1} \wedge\, \dots \,\wedge \psi_{n}) \rightarrow \psi(a)$.  Note that since $\Gamma\subseteq \Fj$ is a set of basic formulas, $a$ does not occur in any formula of $\Gamma$.

By Corollary~\ref{gencorollary}, $\turn \forall y [(\psi_{1} \wedge\, \dots \,\wedge \psi_{n}) \rightarrow \psi(y)]$, where $y$ is a new basic variable replacing all (free) occurrences of witness variable $a$.  And since there were no occurrences of $a$ in any $\psi_i$, after substitution there are no occurrences of $y$, hence $\turn (\psi_{1} \wedge\, \dots \,\wedge \psi_{n}) \rightarrow \forall y \psi(y)$.
    
Since $a$ does not occur in any formula of $\vec{\chi}$, it can be easily checked that for every formula $\gamma(a)$: $\text{if} \; \gamma(a) \in  \llbracket F(\vec{\chi},\varphi(a))  \rrbracket, \, \text{then} \; \gamma(y) \in  \llbracket F(\vec{\chi},\varphi(y))  \rrbracket$.  Then $\psi(y) \in  \llbracket F(\vec{\chi},\varphi(y))  \rrbracket$, and since $y$ is new it does not occur in $\vec{\chi}$, by Proposition \ref{generalized} there is a $\theta \in  \llbracket F(\vec{\chi},\forall y \varphi(y))  \rrbracket$ such that $\turn \forall y\psi (y)\rightarrow \theta$, and thus $\turn  (\psi_{1} \wedge\, \dots\, \wedge \psi_{n})  \rightarrow \theta$.  Since $\neg\theta \in  \llbracket \neg F(\vec{\chi},\forall y \varphi(y))  \rrbracket$ but $\psi_1, \ldots, \psi_n\in\Gamma$, it follows that $\Gamma \cup \llbracket \neg F(\vec{\chi},\forall y\varphi(y))  \rrbracket$ is inconsistent, and by Corollary \ref{variable}, $\Gamma \cup  \llbracket \neg F(\vec{\chi},\forall x\varphi(x))  \rrbracket$ is inconsistent.
\end{proof}

In a propositional modal setting, a way of formulating the standard definition of the accessibility relation is this.  Possible worlds are maximally consistent sets.  For a possible world $\Gamma$, define $\Gamma^\# = \{\varphi \mid \Box\varphi\in\Gamma\}$.  Then, for two possible worlds, set $\Gamma\R\Delta$ if $\Gamma^\#\subseteq\Delta$.  The following will give us the appropriate analog for quantified justification logic models.

\begin{defn}
\label{sharp}
Suppose $\Gamma \subseteq \Fjv$.  Let $\Gamma^{\#}$ be the set of all formulas $\forall \vec{y} \varphi$ such that $\lc{t}_{X}\varphi \in \Gamma$ for some closed Henkin formula $\lc{t}_{X}\varphi$ in which $X$ is the set of witness variables in $\varphi $, and $\vec{y}$ are the free  basic variables of $\varphi$. 
\end{defn}

\begin{pro}[Up and Down Consistency]
\label{up}
Let $F(\tvp) = \Box G(\tvp)$ be a template, $\Gamma \subseteq \Fjv$,  and $\vvarphi$ a sequence of Henkin formulas.
\begin{enumerate}[1.]
\item Suppose $\Gamma$ is maximally consistent. If $\Gamma^{\#} \cup \llbracket  \neg G(\vvarphi) \rrbracket$ is consistent, so is $\Gamma \cup \llbracket \neg F(\vvarphi) \rrbracket$.
\item Suppose $G(\tvp)$ is a disjunctive template. If $\Gamma \cup \llbracket  \neg F(\vvarphi) \rrbracket$ is consistent, so is $\Gamma^{\#} \cup \llbracket  \neg G(\vvarphi) \rrbracket$.
\end{enumerate}    
\end{pro}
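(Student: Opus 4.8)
The plan is to prove both items by contraposition, following the propositional modal pattern for the $\#$-operation but replacing necessitation and the $K$ axiom by the Internalization Theorem~\ref{internalizationtheorem} together with axioms \textbf{B2}, \textbf{A2}, \textbf{A3}, and \textbf{B5}. Item~1 will lean on maximal consistency (to pass from a provable disjunction to one of its disjuncts lying in $\Gamma$), whereas item~2 will lean on $G(\tvp)$ being disjunctive, so that Proposition~\ref{combining} (Formula combining) becomes available.

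For item~1 I would assume $\Gamma \cup \llbracket \neg F(\vvarphi) \rrbracket$ is inconsistent and derive the inconsistency of $\Gamma^{\#} \cup \llbracket \neg G(\vvarphi) \rrbracket$. A finite subset witnesses the inconsistency, so $\Gamma \turn \lc{t_{1}}_{X_{1}}\psi_{1} \vee \cdots \vee \lc{t_{m}}_{X_{m}}\psi_{m}$ with each $\psi_{i} \in \llbracket G(\vvarphi) \rrbracket$ and $X_{i}$ the set of witness variables of $\psi_{i}$. Since $\Gamma$ is maximally consistent, the disjunction lies in $\Gamma$ and hence so does some disjunct $\lc{t_{i}}_{X_{i}}\psi_{i}$. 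As the only free variables of $\lc{t_{i}}_{X_{i}}\psi_{i}$ are the witness variables $X_{i}$, it is a closed Henkin formula of exactly the shape required by Definition~\ref{sharp}, so $\forall \vec{y}_{i}\psi_{i} \in \Gamma^{\#}$, where $\vec{y}_{i}$ lists the free basic variables of $\psi_{i}$. Classical instantiation gives $\Gamma^{\#} \turn \psi_{i}$, which together with $\neg \psi_{i} \in \llbracket \neg G(\vvarphi) \rrbracket$ yields the desired inconsistency.

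For item~2 I would again argue contrapositively: assume $\Gamma^{\#} \cup \llbracket \neg G(\vvarphi) \rrbracket$ is inconsistent. A finite subset gives $\turn (\forall \vec{y}_{1}\chi_{1} \wedge \cdots \wedge \forall \vec{y}_{n}\chi_{n}) \rightarrow (\phi_{1} \vee \cdots \vee \phi_{k})$, where each $\lc{r_{j}}_{Z_{j}}\chi_{j} \in \Gamma$ realises a member $\forall \vec{y}_{j}\chi_{j}$ of $\Gamma^{\#}$ (with $Z_{j}$ the witness variables of $\chi_{j}$) and each $\phi_{l} \in \llbracket G(\vvarphi) \rrbracket$. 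Since $G(\tvp)$ is disjunctive, Proposition~\ref{combining} collapses the consequent to a single $\theta \in \llbracket G(\vvarphi) \rrbracket$ with $\turn (\phi_{1} \vee \cdots \vee \phi_{k}) \rightarrow \theta$, so $\turn (\forall \vec{y}_{1}\chi_{1} \wedge \cdots \wedge \forall \vec{y}_{n}\chi_{n}) \rightarrow \theta$. Because $y_{j} \notin Z_{j}$, repeated use of \textbf{B5} turns $\lc{r_{j}}_{Z_{j}}\chi_{j} \in \Gamma$ into $\Gamma \turn \lc{\gen_{\vec{y}_{j}}(r_{j})}_{Z_{j}}\forall \vec{y}_{j}\chi_{j}$. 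Internalizing the implication (Theorem~\ref{internalizationtheorem}, then \textbf{A3} to install the subscript) and combining the pieces with \textbf{B2} in the familiar way produces a term $u$ with $\Gamma \turn \lc{u}_{Z}\theta$, where $Z = Z_{1} \cup \cdots \cup Z_{n}$; a final round of \textbf{A2} and \textbf{A3} rewrites the subscript to the set $W$ of witness variables actually occurring in $\theta$, giving $\Gamma \turn \lc{u}_{W}\theta$. Since $\lc{u}_{W}\theta \in \llbracket F(\vvarphi) \rrbracket$, its negation lies in $\llbracket \neg F(\vvarphi) \rrbracket$, so $\Gamma \cup \llbracket \neg F(\vvarphi) \rrbracket$ is inconsistent, as required.

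The routine boolean and finiteness steps are unproblematic; the real work is the subscript bookkeeping in item~2. The two places to watch are the \textbf{B5} step, which must be justified by $y_{j} \notin Z_{j}$ (valid since $Z_{j}$ contains only witness variables while the $\vec{y}_{j}$ are basic), and the closing \textbf{A2}/\textbf{A3} adjustment from $Z$ to $W$, where \textbf{A2} may drop only a witness variable that is genuinely absent from $\theta$. I also need the disjunctiveness of $G$ precisely to license Proposition~\ref{combining}; without it one could not compress the $\phi_{l}$ into a single member of $\llbracket G(\vvarphi) \rrbracket$, which is exactly what lands the witnessing formula back inside $\llbracket F(\vvarphi) \rrbracket$.
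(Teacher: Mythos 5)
Your proposal is correct and follows essentially the same route as the paper's own proof: both items are argued contrapositively, item 1 via maximal consistency selecting a disjunct $\lc{t_i}_{X_i}\psi_i \in \Gamma$ whose universal closure lands in $\Gamma^{\#}$, and item 2 via Proposition~\ref{combining}, axiom \textbf{B5} with $\gen$, Internalization, \textbf{B2}, and the final \textbf{A2}/\textbf{A3} subscript adjustment. The subscript bookkeeping you flag is exactly where the paper also takes care, so there is nothing to add.
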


\begin{proof} Part 1: Suppose $\Gamma$ is maximally consistent but $\Gamma \cup \llbracket  \neg F(\vvarphi) \rrbracket$ is inconsistent.  We show $\Gamma^{\#} \cup \llbracket \neg G(\vvarphi) \rrbracket$ is inconsistent.

Since $\Gamma \cup \llbracket  \neg F(\vvarphi) \rrbracket$ is inconsistent, for some $\neg \lc{t_{1}}_{X_{1}}\theta_{1}, \, \dots, \,  \neg \lc{t_{k}}_{X_{k}}\theta_{k} \in  \llbracket  \neg F(\vvarphi) \rrbracket$ we have that $\Gamma \cup\{\neg \lc{t_{1}}_{X_{1}}\theta_{1}, \ldots,  \neg \lc{t_{k}}_{X_{k}}\theta_{k}\}\turn  \bot$.  (Note that $\theta_{1}, \, \dots, \, \theta_{k} \in  \llbracket   G(\vvarphi) \rrbracket$.)  Using the Deduction Theorem, $\Gamma \turn (\neg \lc{t_{1}}_{X_{1}}\theta_{1} \wedge \, \dots \, \wedge  \neg \lc{t_{k}}_{X_{k}}\theta_{k}) \rightarrow \bot$, and hence $\Gamma \turn \lc{t_{1}}_{X_{1}}\theta_{1}  \vee \, \dots \, \vee   \lc{t_{k}}_{X_{k}}\theta_{k}$.

Since $\Gamma$ is a maximally consistent set, for some $i$, $t_{i}$$:_{X_{i}}$$\theta_{i} \in \Gamma$. And since $\lc{t_{i}}_{X_{i}}\theta_{i}$ is a closed Henkin formula, $\forall \vec{x} \theta_{i} \in \Gamma^{\#}$. By classical logic,   $\turn \neg \theta_{i} \rightarrow \neg \forall \vec{x} \theta_{i}$.    
Since $\neg \theta_{i} \in \llbracket \neg G(\vvarphi) \rrbracket$, we have that  $\Gamma^{\#} \cup \llbracket  \neg G(\vvarphi) \rrbracket$ is inconsistent.
\medskip

Part 2: Suppose $\Gamma^{\#} \cup \llbracket  \neg G(\vvarphi) \rrbracket$ is inconsistent.  We show that $\Gamma \cup \llbracket  \neg F(\vvarphi) \rrbracket$ is inconsistent.

Assume $\Gamma^{\#} \cup \llbracket  \neg G(\vvarphi) \rrbracket$ is inconsistent. Then there are $\forall \vec{x}_{1} \psi_{1}, \dots, \forall \vec{x}_{n} \psi_{n} \in \Gamma^{\#}$ and $\neg \theta_{1}, \dots, \neg \theta_{k} \in \llbracket   \neg G(\vvarphi) \rrbracket$ such that $\turn (\forall \vec{x}_{1} \psi_{1} \wedge \, \dots \, \wedge \forall \vec{x}_{n} \psi_{n}) \wedge  (\neg \theta_{1}\wedge \, \dots \, \wedge \neg \theta_{k}) \rightarrow \bot$, and so $\turn (\forall \vec{x}_{1} \psi_{1} \wedge \, \dots \, \wedge \forall \vec{x}_{n} \psi_{n}) \rightarrow  (\theta_{1} \vee \, \dots \, \vee  \theta_{k})$.  Note that since $\forall \vec{x}_{1} \psi_{1}, \dots, \forall \vec{x}_{n} \psi_{n} \in \Gamma^{\#}$, by definition of $\Gamma^\#$ there must be corresponding $\lc{t_{1}}_{X_{1}}\psi_{1}, \dots,  \lc{t_{n}}_{X_{n}}\psi_{n} \in  \Gamma$.

Since $\theta_{1}, \,  \dots, \,  \theta_{k} \in  \llbracket G(\vvarphi)  \rrbracket$ and $G(\tvp)$ is a disjunctive template, by Proposition \ref{combining} there is some $\theta \in  \llbracket G(\vvarphi)  \rrbracket$ such that $\turn (\theta_{1} \vee \, \dots \, \vee  \theta_{k}) \rightarrow \theta$.  Then by classical logic, $\turn \forall \vec{x}_{1} \psi_{1} \rightarrow \, \dots \, \rightarrow \forall \vec{x}_{n} \psi_{n} \rightarrow  \theta$.

For each $i$, no member of the sequence $\vec{x}_{i}$ may occur in the corresponding set $X_{i}$ of witness variables. Then by repeated use of axiom \textbf{B5} we have that for each $i$, $\turn  \lc{t_{i}}_{X_{i}}\psi_{i}\rightarrow \lc{\gen_{\vec{x}_{i}}(t)}_{X_{i}}\forall \vec{x_{i}}\psi_{i}$.  (We noted in Section~\ref{semantics} that $\gen_{\vec{x}_{i}}$ would abbreviate nested occurrences of $\gen$.)  If we let $X = X_{1}\cup \, \dots \, \cup X_{n}$, by using axiom \textbf{A3}, $\turn \lc{t_{i}}_{X_{i}}\psi_{i}\rightarrow \lc{\gen_{\vec{x}_{i}}(t)}_{X}\forall \vec{x_{i}}\psi_{i}$.

We have that $\turn \forall \vec{x}_{1} \psi_{1} \rightarrow \, \dots \, \rightarrow \forall \vec{x}_{n} \psi_{n} \rightarrow  \theta$.  Then by Internalization, \ref{internalizationtheorem}, and axiom \textbf{A3} there is a justification term $s \in \Tj$ such that $\turn \lc{s}_{X}(\forall \vec{x}_{1} \psi_{1} \rightarrow \, \dots \, \rightarrow \forall \vec{x}_{n} \psi_{n} \rightarrow  \theta)$, where $X$ is $X_1\cup\ldots\cup X_n$.  Then by repeated use of axiom \textbf{B2}, $\turn \lc{\gen_{\vec{x}_{1}}(t)}_{X}\forall \vec{x}_{1}\psi_{1}\rightarrow \, \dots \,\rightarrow \gen_{\vec{x}_{n}}(t)_{X} \forall \vec{x}_{n}\psi_{n}\rightarrow \lc{[s\cdot \gen_{\vec{x}_{1}}(t) \cdot \, \dots \, \cdot \gen_{\vec{x}_{n}}(t) ]}_{X}\theta$.  Let $Y$ be the set of those witness variables occurring in $\theta$.  Using axioms \textbf{A2} and \textbf{A3}, $\turn \lc{[s\cdot \gen_{\vec{x}_{1}}(t) \cdot \, \dots \, \cdot \gen_{\vec{x}_{n}}(t) ]}_{X}\theta \rightarrow \lc{[s\cdot \gen_{\vec{x}_{1}}(t) \cdot \, \dots \, \cdot \gen_{\vec{x}_{n}}(t) ]}_{Y}\theta$.  Combining, we have $\turn \lc{\gen_{\vec{x}_{1}}(t)}_{X}\forall \vec{x}_{1}\psi_{1}\rightarrow \, \dots \,\rightarrow \gen_{\vec{x}_{n}}(t)_{X} \forall \vec{x}_{n}\psi_{n}\rightarrow \lc{[s\cdot \gen_{\vec{x}_{1}}(t) \cdot \, \dots \, \cdot \gen_{\vec{x}_{n}}(t) ]}_{Y}\theta$, and hence $\turn [\lc{\gen_{\vec{x}_{1}}(t)}_{X}\forall \vec{x}_{1}\psi_{1}\land \, \dots \,\land \gen_{\vec{x}_{n}}(t)_{X} \forall \vec{x}_{n}\psi_{n}]\rightarrow \lc{[s\cdot \gen_{\vec{x}_{1}}(t) \cdot \, \dots \, \cdot \gen_{\vec{x}_{n}}(t) ]}_{Y}\theta$.  We showed above that for each $i$, $\turn \lc{t_{i}}_{X_{i}}\psi_{i}\rightarrow \lc{\gen_{\vec{x}_{i}}(t)}_{X}\forall \vec{x_{i}}\psi_{i}$.  Then by classical reasoning, $\turn ( \lc{t_{1}}_{X_{1}}\psi_{1} \wedge \, \dots \, \wedge \lc{t_{n}}_{X_{n}}\psi_{n}) \rightarrow \lc{[s\cdot \gen_{\vec{x}_{1}}(t) \cdot \, \dots \, \cdot \gen_{\vec{x}_{n}}(t) ]}_{Y} \theta$.

Finally, each $\lc{t_{i}}_{X_{i}}\psi_{i} \in \Gamma$ so $\Gamma\turn\lc{[s\cdot \gen_{\vec{x}_{1}}(t) \cdot \, \dots \, \cdot \gen_{\vec{x}_{n}}(t) ]}_{Y} \theta$.  Also $\theta \in  \llbracket G(\vvarphi)  \rrbracket$, and $F(\tvp) = \Box G(\tvp)$, so $\lc{[s\cdot \gen_{\vec{x}_{1}}(t) \cdot \, \dots \, \cdot \gen_{\vec{x}_{n}}(t) ]}_{Y} \theta\in \llbracket F(\vvarphi)  \rrbracket$, and hence $\lnot\lc{[s\cdot \gen_{\vec{x}_{1}}(t) \cdot \, \dots \, \cdot \gen_{\vec{x}_{n}}(t) ]}_{Y} \theta\in \llbracket \lnot F(\vvarphi)  \rrbracket$.  It follows that $\Gamma \cup \llbracket  \neg F(\vvarphi) \rrbracket$ is inconsistent.
\end{proof}

\begin{defn} [Admitting instantiation]
A set of formulas $\Gamma$ \emph{admits instantiation} provided that for each disjunctive template $F(\tvp,\tq)$, for each sequence $\vvarphi$ of Henkin formulas, and each universally quantified Henkin formula $\forall x \varphi (x)$, if $\Gamma \cup \llbracket \neg F(\vvarphi,\forall x\varphi(x)) \rrbracket$ is consistent, then for some witness variable $a$, $\Gamma \cup \llbracket \neg F(\vvarphi,\varphi(a)) \rrbracket$ is consistent.
\end{defn}

\begin{pro}
\label{instantiation}
Suppose $\Gamma$ is maximally consistent and admits instantiation. For every universal Henkin formula $\forall x \varphi(x)$, if $\neg \forall x \varphi(x) \in \Gamma$, then there is a witness variable $a$ such that $\neg \varphi(a) \in \Gamma$.          
\end{pro}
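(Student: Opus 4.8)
The plan is to instantiate the \emph{admits instantiation} hypothesis with the simplest possible disjunctive template, namely the atomic one. Take $F(\tq)$ to be the template $\tq$ itself, with an empty sequence $\vvarphi$. Since an atomic template contains no boolean connectives at all, it vacuously satisfies the requirement that $\lor$ be the only connective occurring, so it is disjunctive and the hypothesis applies to it.

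First I would record the relevant instantiation sets. By clause a) of the definition of instantiation sets, $\llbracket F(\forall x\varphi(x)) \rrbracket = \{\forall x\varphi(x)\}$, so by clause b), $\llbracket \neg F(\forall x\varphi(x)) \rrbracket = \{\neg\forall x\varphi(x)\}$; likewise $\llbracket \neg F(\varphi(a)) \rrbracket = \{\neg\varphi(a)\}$ for any witness variable $a$. Now observe that since $\neg\forall x\varphi(x) \in \Gamma$ by hypothesis, the set $\Gamma \cup \llbracket \neg F(\forall x\varphi(x)) \rrbracket$ is simply $\Gamma$, which is consistent because $\Gamma$ is maximally consistent.

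Applying the \emph{admits instantiation} property to this $F$, the empty $\vvarphi$, and $\forall x\varphi(x)$, we obtain a witness variable $a$ for which $\Gamma \cup \llbracket \neg F(\varphi(a)) \rrbracket = \Gamma \cup \{\neg\varphi(a)\}$ is consistent. Finally, maximal consistency of $\Gamma$ forces $\neg\varphi(a) \in \Gamma$: a maximally consistent set has no proper consistent extension, so any formula whose addition preserves consistency must already belong to it. This yields the desired witness and completes the argument.

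I do not expect a genuine obstacle here; the entire content of the proof is the choice of template. The point worth flagging is conceptual rather than technical: the \emph{admits instantiation} condition is stated for arbitrary disjunctive templates because that generality is what gets propagated through the surrounding constructions, but the present Proposition needs only its degenerate atomic instance, where the instantiation sets collapse to singletons and the template machinery disappears, leaving a direct application of maximal consistency.
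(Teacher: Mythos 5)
Your proposal is correct and matches the paper's own proof essentially verbatim: both instantiate the \emph{admits instantiation} hypothesis with the atomic disjunctive template $F(\tq)=\tq$, observe that the relevant instantiation sets are the singletons $\{\neg\forall x\varphi(x)\}$ and $\{\neg\varphi(a)\}$, and finish by maximal consistency of $\Gamma$. No gap.
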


\begin{proof} 
If $\neg \forall x \varphi (x) \in \Gamma $, then of course $\Gamma\cup \{ \neg \forall x \varphi (x) \}$ is consistent. Let $\tq$ be a propositional letter; $F(\tq) =\tq$ is trivially a disjunctive template. Since $\llbracket  \neg F(\forall x\varphi(x)) \rrbracket =  \{ \neg \forall x \varphi (x) \}$, then  $\Gamma \cup \llbracket  \neg F(\forall x\varphi(x)) \rrbracket$ is consistent. Since $\Gamma$ admits instantiation, there is a witness variable $a$ such that $\Gamma \cup \llbracket  \neg F(\varphi(a)) \rrbracket$ is consistent, i.e., $\Gamma \cup  \{ \neg \varphi (a) \}$ is consistent. By the maximality of $\Gamma$, $ \neg \varphi(a) \in \Gamma$.   
\end{proof}

\begin{pro}
\label{instantiationsharp}  
Let $\Gamma \subseteq \Fjv$. If $\Gamma$ is maximally consistent and admits instantiation, then $\Gamma^{\#}$ also admits instantiation.
\end{pro}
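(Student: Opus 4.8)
The plan is to build a ``sandwich'' around $\Gamma$, using the two halves of Proposition~\ref{up} (Up and Down Consistency) to transfer instantiation from $\Gamma$ up to $\Gamma^{\#}$. Fix a disjunctive template $F(\tvp,\tq)$, a sequence $\vvarphi$ of Henkin formulas, and a universally quantified Henkin formula $\forall x\varphi(x)$, and assume $\Gamma^{\#}\cup\llbracket\neg F(\vvarphi,\forall x\varphi(x))\rrbracket$ is consistent. I must produce a witness variable $a$ such that $\Gamma^{\#}\cup\llbracket\neg F(\vvarphi,\varphi(a))\rrbracket$ is consistent.

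First I would record the key structural observation that $\Box F(\tvp,\tq)$ is again a template, and in fact a \emph{disjunctive} one: prefixing $\Box$ introduces no boolean connective and repeats no propositional letter, so the only boolean connective occurring in $\Box F$ is still $\lor$, and no $\tp_i$ or $\tq$ occurs more than once. This is precisely what allows the same template to serve as ``$\Box G$ with $G$ disjunctive'' in both halves of Proposition~\ref{up} while also being a legal disjunctive template to feed into the instantiation property of $\Gamma$.

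Now carry out the three moves. \emph{(Up.)} Apply Part~1 of Proposition~\ref{up} with inner template $G=F$ and outer template $\Box F$, using that $\Gamma$ is maximally consistent; from consistency of $\Gamma^{\#}\cup\llbracket\neg F(\vvarphi,\forall x\varphi(x))\rrbracket$ we obtain consistency of $\Gamma\cup\llbracket\neg\Box F(\vvarphi,\forall x\varphi(x))\rrbracket$. \emph{(Instantiate.)} Since $\Box F$ is a disjunctive template and $\Gamma$ admits instantiation, there is a witness variable $a$ with $\Gamma\cup\llbracket\neg\Box F(\vvarphi,\varphi(a))\rrbracket$ consistent. \emph{(Down.)} Apply Part~2 of Proposition~\ref{up}, again with disjunctive inner template $G=F$ and outer template $\Box F$; from consistency of $\Gamma\cup\llbracket\neg\Box F(\vvarphi,\varphi(a))\rrbracket$ we conclude that $\Gamma^{\#}\cup\llbracket\neg F(\vvarphi,\varphi(a))\rrbracket$ is consistent. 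This $a$ is exactly the witness variable demanded by the definition, so $\Gamma^{\#}$ admits instantiation.

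Once Proposition~\ref{up} is in hand the argument is essentially bookkeeping, and the step I would flag as the main (though modest) obstacle is matching hypotheses precisely across the three applications. Part~2 of Proposition~\ref{up} insists its inner template be disjunctive, and the instantiation step insists the template fed to $\Gamma$ be disjunctive; both demands are satisfied \emph{simultaneously} only because $\Box F$ inherits disjunctiveness from $F$. I would additionally verify that the distinguished slot $\tq$, into which $\forall x\varphi(x)$ and later $\varphi(a)$ are substituted, is unaffected by the $\Box$-prefixing, so that instantiation performed at the $\Gamma$ level lands in the correct argument position upon descent to $\Gamma^{\#}$.
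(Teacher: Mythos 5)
Your argument is correct and is essentially identical to the paper's own proof: both go up via Part~1 of Proposition~\ref{up}, instantiate at $\Gamma$ using the disjunctive template $\Box F$, and come back down via Part~2. The observation that $\Box F$ remains disjunctive, which you rightly flag as the hinge of the argument, is exactly the point the paper makes as well.
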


\begin{proof} 
Suppose $\Gamma$ is maximally consistent, $\Gamma$ admits instantiation, $F(\tvp, \tq)$ is a disjunctive template, $\vvarphi$ is a sequence of Henkin formulas, $\forall x \varphi(x)$ is a Henkin formula, and $\Gamma^{\#}\cup \llbracket  \neg F(\vvarphi,\forall x\varphi(x)) \rrbracket$ is consistent. By item 1. of Proposition \ref{up}, $\Gamma\cup \llbracket  \neg\Box F(\vvarphi,\forall x\varphi(x)) \rrbracket$ is consistent.  $\Box F(\tvp, \tq)$ is also a disjunctive template. Then, since $\Gamma$ admits instantiation, for some witness variable $a$,  $\Gamma\cup \llbracket  \neg\Box F(\vvarphi,\varphi(a)) \rrbracket$ is consistent. By item 2. of Proposition \ref{up}, $\Gamma^{\#}\cup \llbracket  \neg F(\vvarphi,\varphi(a)) \rrbracket$ is consistent.       
\end{proof}

\section{Using Templates for Henkin-Like Constructions}\label{henkinsection}

The set of all disjunctives templates is countable, as are $\Fjv$, and the set of all finite sequences $\vvarphi$ of Henkin formulas. Hence the set of all pairs $\bl F(\tvp), \vvarphi \br$ is countable, where $F$ is a disjunctive template, $\tvp$ is an $n$-ary sequence of propositional variables and $\vvarphi$ is an $n$-ary sequence of Henkin formulas.  For the rest of this section we shall assume that the members of the set of all such disjunctive template, formula pairs $\bl F(\tvp), \vvarphi \br$ is arranged in a fixed sequence which we will call the \emph{enumerated sequence}.
\[
\bl F_{1}(\tvp_{1}), \vvarphi_{1} \br, \, \bl F_{2}(\tvp_{2}), \vvarphi_{2} \br, \, \bl F_{3}(\tvp_{3}), \vvarphi_{3} \br, \, \dots 
\]
The enumerated sequence determines a corresponding sequence of \emph{instantiation sets of the enumerated sequence}:
\[
\llbracket  F_{1}( \vvarphi_{1}) \rrbracket, \,  \llbracket  F_{2}( \vvarphi_{2}) \rrbracket, \,  \llbracket F_{3}( \vvarphi_{3}) \rrbracket , \,  \dots 
\]
It should be noted that for two different pairs $\bl F_{i}(\tvp_{i}), \vvarphi_{i} \br$, $\bl F_{j}(\tvp_{j}), \vvarphi_{j} \br$ in the enumerated sequence, the corresponding instantiation sets may be the same. For example, the pairs $\bl \tp_{0}, \bl \forall x \varphi (x)\br\br$ and $\bl \tp_{1}, \bl \forall x \varphi (x)\br \br$ determine the same set $\{\forall x \varphi(x)\}$. This is actually useful to us.  If $\llbracket  F_{i}( \vvarphi_{i}) \rrbracket$ is an instantiation set, instantiating $\llbracket  F_{i}( \vvarphi_{i}) \rrbracket$, we can always find a different member $\llbracket  F_{j}( \vvarphi_{j}) \rrbracket$ of the enumerated sequence \emph{with different propositional variables}, having the same instantiation set.

\begin{lemma}
\label{beforeBe}
Let $\C$ be a variant closed and axiomatically appropriate constant specification for the basic language, $\Cv$ its extension, and $\Gamma \subseteq \Fj$. For any finite union of instantiation sets of the enumerated sequence $\llbracket  F_{i_{1}}( \vvarphi_{i_{1}}) \rrbracket  \cup \, \dots \, \cup \llbracket F_{i_{n}}( \vvarphi_{i_{n}}) \rrbracket$, for any disjunctive template $G(\vec{\tq},\tr)$, and for any Henkin formulas $\vec{\psi}, \forall x \varphi(x)$ if   $\Gamma \cup \llbracket  \neg F_{i_{1}}( \vvarphi_{i_{1}}) \rrbracket  \cup \, \dots \, \cup \llbracket  \neg F_{i_{n}}( \vvarphi_{i_{n}}) \rrbracket \, \cup \llbracket  \neg G( \vec{\psi}, \forall x\varphi(x) \rrbracket $ is $\Cv$-consistent, then there is a witness variable $a$ such that $\Gamma \cup \llbracket  \neg F_{i_{1}}( \vvarphi_{i_{1}}) \rrbracket  \cup \, \dots \, \cup \llbracket  \neg F_{i_{n}}( \vvarphi_{i_{n}}) \rrbracket  \cup \llbracket  \neg G(\vec{\psi}, \varphi (a)) \rrbracket$  is $\Cv$-consistent.
\end{lemma}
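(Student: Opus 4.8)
The plan is to prove the contrapositive and to treat the statement as a strengthening of the Existential Instantiation Proposition~\ref{existentialinstantiation}: the genuinely new feature is that alongside the basic set $\Gamma \subseteq \Fj$ we now carry a finite union of negated instantiation sets $\llbracket \neg F_{i_1}(\vvarphi_{i_1}) \rrbracket \cup \dots \cup \llbracket \neg F_{i_n}(\vvarphi_{i_n}) \rrbracket$, and these do contain witness variables. The first key observation is that, although each such instantiation set is infinite (the justification terms $t$ range over all of $\Tj$), only finitely many witness variables actually occur in it: justification terms and the subscript sets $X$ introduce no witness variables, so by induction on template degree every member of $\llbracket F_{i_j}(\vvarphi_{i_j}) \rrbracket$ has its witness variables among those in the finite sequence $\vvarphi_{i_j}$. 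Hence across the whole finite union, together with $\vec{\psi}$ and $\forall x \varphi(x)$, only finitely many witness variables appear, and I can fix a witness variable $a$ occurring in none of them. Since $\Gamma \subseteq \Fj$ is basic, $a$ occurs in no formula of $\Gamma$ either. I then show, for this fixed fresh $a$, that if the $\varphi(a)$ version is $\Cv$-inconsistent then the $\forall x \varphi(x)$ version is $\Cv$-inconsistent; the contrapositive delivers the required witness variable.

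First I would assume the $\varphi(a)$ set is $\Cv$-inconsistent and extract a finite witnessing subset: formulas $\psi_1, \dots, \psi_m \in \Gamma$, finitely many $\neg\alpha_1, \dots, \neg\alpha_r$ drawn from the sets $\llbracket \neg F_{i_j}(\vvarphi_{i_j}) \rrbracket$, and $\neg\phi_1(a), \dots, \neg\phi_k(a) \in \llbracket \neg G(\vec{\psi}, \varphi(a)) \rrbracket$ whose conjunction proves $\bot$. Writing $\Delta$ for the conjunction of the $\psi_i$ and the $\neg\alpha_l$, this yields $\turncv \Delta \rightarrow (\phi_1(a) \vee \dots \vee \phi_k(a))$. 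The crucial point is that $a$ occurs nowhere in $\Delta$: not in the $\psi_i$ (they are basic) and not in the $\alpha_l$ (built from $\vvarphi_{i_j}$, which avoids $a$ by choice). Using that $G$ is disjunctive, Proposition~\ref{combining} (Formula combining) collapses the disjunction into a single $\phi(a) \in \llbracket G(\vec{\psi}, \varphi(a)) \rrbracket$ with $\turncv \Delta \rightarrow \phi(a)$.

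From here I would run exactly the tail of the Existential Instantiation argument, now with $\Delta$ in the role previously played by the conjunction over $\Gamma$ alone. Because $a$ does not occur in $\Delta$, Corollary~\ref{gencorollary} (Generalization) replaces $a$ by a new basic variable $y$ and universally quantifies, giving $\turncv \Delta \rightarrow \forall y \phi(y)$. Since $a$ does not occur in $\vec{\psi}$, we have $\phi(y) \in \llbracket G(\vec{\psi}, \varphi(y)) \rrbracket$, so Proposition~\ref{generalized} (Generalized Barcan) supplies $\theta \in \llbracket G(\vec{\psi}, \forall y \varphi(y)) \rrbracket$ with $\turncv \forall y \phi(y) \rightarrow \theta$, whence $\turncv \Delta \rightarrow \theta$. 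As $\neg\theta \in \llbracket \neg G(\vec{\psi}, \forall y \varphi(y)) \rrbracket$ while every conjunct of $\Delta$ lies in $\Gamma \cup \bigcup_j \llbracket \neg F_{i_j}(\vvarphi_{i_j}) \rrbracket$, this shows $\Gamma \cup \bigcup_j \llbracket \neg F_{i_j}(\vvarphi_{i_j}) \rrbracket \cup \llbracket \neg G(\vec{\psi}, \forall y \varphi(y)) \rrbracket$ is $\Cv$-inconsistent. Finally, since $y$ is new to $\forall x \varphi(x)$ and $G$ is positive, the contrapositive of Corollary~\ref{variable} (Variable change) transports this back from $\forall y \varphi(y)$ to $\forall x \varphi(x)$, completing the contrapositive.

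I expect the main obstacle to be bookkeeping rather than conceptual: pinning down that the carried instantiation sets contribute only finitely many witness variables so that a genuinely fresh $a$ exists, and then verifying that $a$ is absent from the entire context $\Delta$ so that the Generalization step does not illegitimately bind a variable occurring in the hypotheses. Everything else is a faithful transcription of the Existential Instantiation proof, invoking only the disjunctivity of $G$; the templates $F_{i_j}$ need no special property beyond the finiteness of their witness variables.
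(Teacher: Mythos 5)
Your proof is correct, but it takes a genuinely different route from the paper's. The paper's proof is a short reduction: using the fact that the enumerated sequence contains variable-disjoint copies of each template, it assumes the propositional variables $\tvp_{i_1},\dots,\tvp_{i_n},\vec{\tq},\tr$ do not overlap, forms the single disjunctive template $F_{i_1}(\tvp_{i_1})\vee\dots\vee F_{i_n}(\tvp_{i_n})\vee G(\vec{\tq},\tr)$, observes that $\Gamma\cup\llbracket\neg F_{i_1}(\vvarphi_{i_1})\rrbracket\cup\dots\cup\llbracket\neg G(\vec{\psi},\forall x\varphi(x))\rrbracket$ has the same consequences as $\Gamma\cup\llbracket\neg(F_{i_1}(\vvarphi_{i_1})\vee\dots\vee G(\vec{\psi},\forall x\varphi(x)))\rrbracket$, and then invokes Proposition~\ref{existentialinstantiation} as a black box (after which the freshness of $a$ need only be checked against the finitely many witness variables in $\vvarphi_{i_1},\dots,\vvarphi_{i_n},\vec{\psi},\forall x\varphi(x)$). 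You instead inline and strengthen the proof of Proposition~\ref{existentialinstantiation}, allowing the context to contain Henkin formulas so long as they avoid $a$; this forces you to make explicit the (correct, and worth recording) observation that every member of an instantiation set $\llbracket F(\vvarphi)\rrbracket$ has its witness variables among those of $\vvarphi$, so a fresh $a$ exists, and then to recheck that the only role of the basicness of $\Gamma$ in the original argument was to keep $a$ out of the hypotheses. The paper's reduction buys brevity and reuse of an already-proved proposition at the cost of the slightly delicate template-merging and ``same consequences'' step; your version avoids that merging entirely but duplicates the chain through Propositions~\ref{combining} and~\ref{generalized} and Corollaries~\ref{gencorollary} and~\ref{variable}, effectively proving a mild generalization of Proposition~\ref{existentialinstantiation} in which $\Gamma$ may contain witness variables other than $a$. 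Both arguments are sound.
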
   

\begin{proof}
We assume that $\llbracket  F_{i_{1}}( \vvarphi_{i_{1}}) \rrbracket$, \ldots, $\llbracket F_{i_{n}}( \vvarphi_{i_{n}}) \rrbracket$ instantiate $\bl F_{i_1}(\tvp_{i_1}), \vvarphi_{i_1} \br$, \ldots $\bl F_{i_n}(\tvp_{i_n}), \vvarphi_{i_n} \br$ and, making use of the remarks above, there is no repetition among the propositional variables $\tvp_{i_{1}}, \,  \dots, \, \tvp_{i_{n}}$, $\vec{\tq}$, $\tr$. Then $F_{i_{1}}(\tvp_{i_{1}})\vee \, \dots\, \vee F_{i_{n}}(\tvp_{i_{n}}) \vee G(\vec{\tq},\tr)$ is a disjunctive template.

From the definition of instantiation set, using classical reasoning, it can easily be checked that the following sets have the same consequences.
\begin{gather*} 
\Gamma \cup \llbracket \neg F_{i_{1}}( \vvarphi_{i_{1}}) \rrbracket  \cup \, \dots \, \cup \llbracket \neg F_{i_{n}}( \vvarphi_{i_{n}}) \rrbracket  \cup \llbracket  \neg G(\vec{\psi}, \forall x\varphi (x)) \rrbracket\; , \\ 
\Gamma \cup \llbracket  \neg F_{i_{1}}( \vvarphi_{i_{1}}) \wedge \, \dots \, \wedge \neg F_{i_{n}}( \vvarphi_{i_{n}}) \wedge \neg G(\vec{\psi}, \forall x\varphi (x)) \rrbracket\; , \\
\Gamma \cup \llbracket  \neg (F_{i_{1}}( \vvarphi_{i_{1}}) \vee \, \dots \, \vee F_{i_{n}}( \vvarphi_{i_{n}}) \vee  G(\vec{\psi}, \forall x\varphi (x))) \rrbracket\; .
\end{gather*}
Then $\Gamma \cup \llbracket  \neg (F_{i_{1}}( \vvarphi_{i_{1}}) \vee \, \dots \, \vee  F_{i_{n}}( \vvarphi_{i_{n}}) \vee  G(\vec{\psi}, \forall x\varphi (x))) \rrbracket$  is $\Cv$-consistent.

$\Gamma$ is a set of formulas from the basic language and so contains no witness variables.  Consequently there are only finitely many witness variables that occur in $\Gamma, \vvarphi_{i_{1}}, \, \dots, \,  \vvarphi_{i_{n}}, \vec{\psi}$ and $\forall x \varphi (x)$.  Let $a$ be the first witness variable that does not occur. Then, by Proposition \ref{existentialinstantiation}  , $\Gamma \cup \llbracket  \neg (F_{i_{1}}( \vvarphi_{i_{1}}) \vee \, \dots \, \vee  F_{i_{n}}( \vvarphi_{i_{k}}) \vee  G(\vec{\psi},\varphi (a))) \rrbracket$  is $\Cv$-consistent.  It follows that  $\Gamma \cup \llbracket  \neg F_{i_{1}}( \vvarphi_{i_{1}}) \rrbracket  \cup \, \dots \, \cup \llbracket  \neg F_{i_{n}}( \vvarphi_{i_{n}}) \rrbracket  \cup \llbracket  \neg G(\vec{\psi}, \varphi (a)) \rrbracket$ is $\Cv$-consistent.
\end{proof}

\begin{pro}[Basic expansion]
\label{basicexpansion}
Let $\C$ be a variant closed and axiomatically appropriate constant specification for the basic language, $\Cv$ be its extension, and let $\Gamma \subseteq \Fj$ be a $\C$-consistent set. Then there is some $\Gamma\p \subseteq \Fjv$ such that $\Gamma \subseteq \Gamma\p$, $\Gamma\p$ is a $\Cv$-maximally consistent set, and $\Gamma\p$ admits instantiation.  
\end{pro}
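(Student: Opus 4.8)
The plan is to build $\Gamma'$ in two phases: a Phase~1 recursion along the enumerated sequence that installs a witness for every universally quantified instantiation set that survives, followed by a Phase~2 Lindenbaum completion. First I would record the base case: since $\Gamma\subseteq\Fj$ is $\C$-consistent, the remark following Corollary~\ref{conservativ} gives that $\Gamma$ is $\Cv$-consistent, and I set $\Gamma_0=\Gamma$.

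For the recursion, fix the enumerated sequence $\langle F_1(\tvp_1),\vvarphi_1\rangle,\langle F_2(\tvp_2),\vvarphi_2\rangle,\dots$ of all disjunctive-template/Henkin-formula-sequence pairs. At stage $i$ I keep the invariant that $\Gamma_{i-1}$ is $\Gamma$ together with a finite union of instantiation sets of the enumerated sequence; this is what makes Lemma~\ref{beforeBe} applicable, its hypothesis that the base set lies in $\Fj$ being met because $\Gamma$ is basic. Considering $\langle F_i(\tvp_i),\vvarphi_i\rangle$: if $\Gamma_{i-1}\cup\llbracket\neg F_i(\vvarphi_i)\rrbracket$ is $\Cv$-inconsistent, set $\Gamma_i=\Gamma_{i-1}$; if it is consistent and the last entry of $\vvarphi_i$ is not a universally quantified formula, set $\Gamma_i=\Gamma_{i-1}\cup\llbracket\neg F_i(\vvarphi_i)\rrbracket$; and if it is consistent and the last entry of $\vvarphi_i$ is some $\forall x\varphi(x)$ (so, writing $F_i$ as $F_i(\tvp',\tq)$ and $\vvarphi_i=\langle\vvarphi_i',\forall x\varphi(x)\rangle$), I invoke Lemma~\ref{beforeBe} to obtain a witness variable $a$ and set $\Gamma_i=\Gamma_{i-1}\cup\llbracket\neg F_i(\vvarphi_i)\rrbracket\cup\llbracket\neg F_i(\vvarphi_i',\varphi(a))\rrbracket$. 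One technical wrinkle is that Lemma~\ref{beforeBe} \emph{replaces} the universal instantiation set by its witnessed version rather than keeping both; I would get around this exactly as the paragraph preceding Lemma~\ref{beforeBe} indicates, by taking the separated template $G$ of that lemma to be a fresh copy of $F_i$ with new propositional variables but the same instantiation set. Then $\llbracket\neg F_i(\vvarphi_i)\rrbracket$ is absorbed into the retained finite union while the witness $\llbracket\neg F_i(\vvarphi_i',\varphi(a))\rrbracket$ is produced. Each $\Gamma_i$ is $\Cv$-consistent and preserves the invariant form. Setting $\Gamma^{\circ}=\bigcup_i\Gamma_i$, finiteness of derivations gives $\Cv$-consistency, and Proposition~\ref{lind} extends it to a $\Cv$-maximally consistent $\Gamma'\supseteq\Gamma^{\circ}\supseteq\Gamma$.

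It remains to verify that $\Gamma'$ admits instantiation, and this is the step I expect to carry the weight. Given a disjunctive template $F(\tvp,\tq)$, a sequence $\vvarphi$, and a formula $\forall x\varphi(x)$ with $\Gamma'\cup\llbracket\neg F(\vvarphi,\forall x\varphi(x))\rrbracket$ consistent, I first use maximality of $\Gamma'$ to upgrade this to the membership statement $\llbracket\neg F(\vvarphi,\forall x\varphi(x))\rrbracket\subseteq\Gamma'$, since for a maximally consistent set a collection may be adjoined consistently iff it is already contained. The pair $\langle F,\langle\vvarphi,\forall x\varphi(x)\rangle\rangle$ occurs in the enumerated sequence, say at stage $i$, and its last entry is the universal $\forall x\varphi(x)$, so stage $i$ was a witnessing stage. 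As $\Gamma_{i-1}\subseteq\Gamma'$ and $\llbracket\neg F_i(\vvarphi_i)\rrbracket\subseteq\Gamma'$, the set tested at stage $i$ was a subset of $\Gamma'$, hence consistent; therefore the witnessing branch fired and committed $\llbracket\neg F(\vvarphi,\varphi(a))\rrbracket\subseteq\Gamma_i\subseteq\Gamma'$ for some witness $a$. Consequently $\Gamma'\cup\llbracket\neg F(\vvarphi,\varphi(a))\rrbracket=\Gamma'$ is consistent, exactly as admitting instantiation requires.

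The main obstacle, and the reason the two-phase structure is needed, is the interaction between maximality and witnessing: the instantiation sets are infinite, so they cannot be decided formula-by-formula in an ordinary Lindenbaum enumeration, and a witness committed merely for the consistency of $\Gamma^{\circ}$ need not stay a legitimate witness for the larger $\Gamma'$. The device that resolves this is that, for a maximally consistent set, ``consistent to adjoin'' collapses to ``already contained'', so the only triples on which admitting instantiation imposes an obligation are those whose universal instantiation set genuinely lies inside $\Gamma'$; and for precisely those, the monotonicity $\Gamma_{i-1}\subseteq\Gamma'$ forces the corresponding Phase~1 stage to have already fired its witnessing branch, so the witness is permanently present in $\Gamma'$.
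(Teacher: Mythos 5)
Your proposal is correct and follows essentially the same route as the paper's proof: enumerate the disjunctive-template/formula-sequence pairs, grow a chain of $\Cv$-consistent sets of the invariant form ``$\Gamma$ plus a finite union of negated instantiation sets,'' invoke Lemma~\ref{beforeBe} at each universal stage to install a witness, take the union, complete via Proposition~\ref{lind}, and verify admitting instantiation by observing that the stage-$i$ test set sits inside $\Gamma'$ so the witnessing branch must have fired and its witness set persists in $\Gamma'$. The only (harmless) deviations are that you also retain the un-witnessed sets $\llbracket\neg F_i(\vvarphi_i)\rrbracket$ via the fresh-copy trick --- the paper's $\Gamma_{n+1}$ simply adds the witnessed set alone, making that workaround unnecessary --- and that you route the final verification through ``maximality implies containment'' where ``a subset of a consistent set is consistent'' already suffices.
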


\begin{proof}
We define a sequence of sets of $\Fjv$ formulas $\Gamma_{1}, \Gamma_{2}, \Gamma_{3}, \dots $ so that:
\begin{itemize}
\item $\Gamma_{n}$ is $\Cv$-consistent.
\item $\Gamma_{n}$ is $\Gamma \cup \llbracket  \neg F_{i_{1}}( \vvarphi_{i_{1}}) \rrbracket  \cup \, \dots \, \cup \llbracket  \neg F_{i_{k}}( \vvarphi_{i_{k}}) \rrbracket $ (where $k\geq0$).    
\end{itemize}    

To begin, set $\Gamma_{1} =\Gamma$. By the remark at the end of Section~\ref{langextsection}, $\Gamma_{1}$ is $\Cv$-consistent because it is $\C$-consistent.  

Now, suppose $\Gamma_{n}$ has been constructed and it is of the form
$\Gamma \cup \llbracket \neg F_{i_{1}}( \vvarphi_{i_{1}}) \rrbracket  \cup \, \dots \, \cup \llbracket  \neg F_{i_{k}}( \vvarphi_{i_{k}}) \rrbracket$. Let $\bl F_{n}(\tvp_{n}), \vvarphi_{n} \br$ be the $n$\textsuperscript{th} member of the enumerated sequence. If the last formula in $\vvarphi_{n}$ is not a universal formula, let $\Gamma_{n+1} = \Gamma_{n}$. Otherwise we proceed as follows. Assume $\vvarphi_{n}$ is $\vec{\psi}, \forall x \varphi (x)$.  Say $F_{n}(\tvp_{n})$ is the disjunctive template $G(\vec{\tq},\tr)$, and so  $\llbracket \neg  F_{n}( \vvarphi_{n}) \rrbracket = \llbracket \neg G(\vec{\psi}, \forall x \varphi (x)) \rrbracket $.

If $\Gamma_{n}\cup \llbracket  \neg G(\vec{\psi}, \forall x \varphi (x)) \rrbracket$ is not $\Cv$-consistent, then take $\Gamma_{n+1}$ to be $\Gamma_{n}$.

If $\Gamma_{n}\cup \llbracket  \neg G(\vec{\psi}, \forall x \varphi (x)) \rrbracket$ is $\Cv$-consistent, then by Lemma \ref{beforeBe} there is a witness variable $a$ such that the set $\Gamma_{n} \cup \llbracket  \neg G(\vec{\psi}, \varphi (a)) \rrbracket$ is consistent. We take $\Gamma_{n+1}$ to be $\Gamma_{n} \cup \llbracket  \neg G(\vec{\psi}, \varphi (a)) \rrbracket$.

It can easily be checked that $\bigcup_{n\in \omega} \Gamma_{n}$ is $\Cv$-consistent. Then by Proposition \ref{lind} there is a set $\Gamma\p$ such that  $\bigcup_{n\in \omega} \Gamma_{n} \subseteq\Gamma\p$ and $\Gamma\p$ is $\Cv$-maximal consistent. Clearly, $\Gamma\subseteq\bigcup_{n\in \omega} \Gamma_{n} \subseteq\Gamma\p$. Now we show that $\Gamma\p$ admits instantiation.

Let $\vvarphi$ be a sequence of Henkin formulas, $\forall x \varphi (x)$ be a Henkin formula and $F(\tvp,\tq)$ be a disjunctive template. Suppose that $\Gamma\p \cup  \llbracket  \neg F(\vvarphi, \forall x\varphi (x)) \rrbracket $ is $\Cv$-consistent. For some $k \in \omega$, $\bl F(\tvp,\tq) , \bl\vvarphi,\forall x \varphi (x)  \br\br$ is the $k$\textsuperscript{th} term of the enumerated sequence. Since $\Gamma_{k}\subseteq\bigcup_{n\in \omega} \Gamma_{n} \subseteq\Gamma\p$, $\Gamma_{k} \cup  \llbracket  \neg F(\vvarphi, \forall x\varphi (x)) \rrbracket$ is $\Cv$-consistent. By construction, for some witness variable $a$, $\Gamma_{k+1}=\Gamma_{k} \cup  \llbracket  \neg F(\vvarphi, \varphi (a)) \rrbracket$ is $\Cv$-consistent. Thus $\llbracket \neg F(\vvarphi, \varphi (a)) \rrbracket \subseteq \Gamma\p$ and hence $\Gamma\p \cup \llbracket  \neg F(\vvarphi, \varphi (a)) \rrbracket $ is $\Cv$-consistent. 
\end{proof}

\begin{lemma}
\label{betweenlema}
Suppose $\Gamma$ is a set of formulas that admits instantiation, $F(\tvp)$ is a disjunctive template, and $\vvarphi$ is a sequence of Henkin formulas. Then, $\Gamma \cup \llbracket \neg F(\vvarphi) \rrbracket$ also admits instantiation.
\end{lemma}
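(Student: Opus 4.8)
The plan is to verify the \emph{admits instantiation} condition for $\Gamma \cup \llbracket \neg F(\vvarphi) \rrbracket$ directly, reducing it to the hypothesis that $\Gamma$ itself admits instantiation by fusing the two templates into one. Concretely, I fix an arbitrary disjunctive template $G(\vec{\tq},\tr)$, a sequence $\vec{\chi}$ of Henkin formulas, and a universal Henkin formula $\forall x\varphi(x)$, and I suppose $\Gamma \cup \llbracket \neg F(\vvarphi) \rrbracket \cup \llbracket \neg G(\vec{\chi},\forall x\varphi(x)) \rrbracket$ is consistent; the goal is to produce a witness variable $a$ for which $\Gamma \cup \llbracket \neg F(\vvarphi) \rrbracket \cup \llbracket \neg G(\vec{\chi},\varphi(a)) \rrbracket$ is consistent.

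First I would arrange, by renaming propositional letters (which leaves instantiation sets unchanged, exactly as in the remarks preceding Lemma~\ref{beforeBe}), that the letters of $F$ and of $G$ are disjoint, so that $H := F(\tvp) \vee G(\vec{\tq},\tr)$ is again a disjunctive template, with $\tr$ in the distinguished last position. The crucial bookkeeping observation --- the same classical computation displayed in the proof of Lemma~\ref{beforeBe} --- is that $\llbracket \neg H(\vvarphi,\vec{\chi},\forall x\varphi(x)) \rrbracket$ consists exactly of the formulas $\neg(\psi \vee \theta)$ with $\psi \in \llbracket F(\vvarphi) \rrbracket$ and $\theta \in \llbracket G(\vec{\chi},\forall x\varphi(x)) \rrbracket$. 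Since each instantiation set is nonempty, the classical equivalence $\neg(\psi\vee\theta) \equiv \neg\psi \wedge \neg\theta$ shows that $\Gamma \cup \llbracket \neg H(\vvarphi,\vec{\chi},\forall x\varphi(x)) \rrbracket$ and $\Gamma \cup \llbracket \neg F(\vvarphi) \rrbracket \cup \llbracket \neg G(\vec{\chi},\forall x\varphi(x)) \rrbracket$ have the same consequences, hence the same consistency status.

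With that equivalence in hand, the assumed consistency of the given set yields consistency of $\Gamma \cup \llbracket \neg H(\vvarphi,\vec{\chi},\forall x\varphi(x)) \rrbracket$. Now I apply the hypothesis that $\Gamma$ admits instantiation to the disjunctive template $H$, the concatenated sequence $\vvarphi,\vec{\chi}$, and the universal formula $\forall x\varphi(x)$: this produces a witness variable $a$ with $\Gamma \cup \llbracket \neg H(\vvarphi,\vec{\chi},\varphi(a)) \rrbracket$ consistent. Finally I run the same-consequences computation in reverse, now with $\varphi(a)$ in place of $\forall x\varphi(x)$ (which again uses nothing about $a$), to conclude that $\Gamma \cup \llbracket \neg F(\vvarphi) \rrbracket \cup \llbracket \neg G(\vec{\chi},\varphi(a)) \rrbracket$ is consistent, which is exactly the required instance of admitting instantiation.

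The point I expect to need the most care is the choice of hypothesis: one cannot invoke Lemma~\ref{beforeBe} here, since that lemma requires $\Gamma \subseteq \Fj$ whereas the $\Gamma$ of this lemma may already contain witness variables, so the entire argument must be routed through the assumption that $\Gamma$ admits instantiation. The only other thing to check carefully is that the fused template $H$ is genuinely disjunctive (no $\neg$ and no $\wedge$ are introduced) and obeys the at-most-once condition on propositional letters, which is precisely why the disjointness renaming is performed; notably, the witness variable $a$ returned by the hypothesis need not be new, because both directions of the same-consequences step are purely propositional.
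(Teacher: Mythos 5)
Your proposal is correct and follows essentially the same route as the paper: both fuse $F$ and $G$ (after making their propositional letters disjoint) into a single disjunctive template $F(\tvp)\vee G(\vec{\tq},\tr)$, use the same-consequences observation to pass between the union of the two instantiation sets and the instantiation set of the fused template, and then invoke the hypothesis that $\Gamma$ itself admits instantiation. Your closing remark that Lemma~\ref{beforeBe} is unavailable here because $\Gamma$ need not lie in $\Fj$ is exactly the right point of care, and the paper's proof implicitly makes the same move by appealing directly to the admits-instantiation property of $\Gamma$.
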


\begin{proof}
Let $\vec{\psi}$ be a sequence of Henkin formulas, $\forall x \varphi (x)$ a Henkin formula and $G(\vec{\tq}, \tr)$ a disjunctive template. Suppose $(\Gamma \cup \llbracket \neg F(\vvarphi) \rrbracket) \cup \llbracket \neg G(\vec{\psi}, \forall x \varphi (x)) \rrbracket$ is $\Cv$-consistent.

As before, we can assume that there is no overlap between the propositional variables $\tvp, \vec{\tq}$ and  $\tr$, so $F(\tvp) \vee G(\vec{\tq}, \tr)$ is a disjunctive template. And as before, the sets
\begin{gather*} 
(\Gamma \cup \llbracket \neg F(\vvarphi) \rrbracket) \cup \llbracket \neg G(\vec{\psi}, \forall x \varphi (x)) \rrbracket\; ,\\ 
\Gamma \cup \llbracket \neg F(\vvarphi) \wedge \neg G(\vec{\psi}, \forall x \varphi (x)) \rrbracket\; ,\\
\Gamma \cup \llbracket  \neg (F(\vvarphi) \vee  G(\vec{\psi}, \forall x \varphi (x))) \rrbracket
\end{gather*}
have the same consequences. Then $\Gamma \cup \llbracket  \neg (F(\vvarphi) \vee  G(\vec{\psi}, \forall x \varphi (x))) \rrbracket$ is $\Cv$-consistent. Since $\Gamma$ admits instantiation, there is a witness variable $a$ such that $\Gamma \cup \llbracket \neg (F(\vvarphi) \vee  G(\vec{\psi}, \varphi (a))) \rrbracket$ is $\Cv$-consistent. Hence, $(\Gamma \cup \llbracket \neg F(\vvarphi) \rrbracket) \cup \llbracket \neg G(\vec{\psi},\varphi (a)) \rrbracket$ is $\Cv$-consistent.
\end{proof}

\begin{pro} [Secondary expansion]
\label{secondaryexpansion}
Let $\C$ be a variant closed and axiomatically appropriate constant specification for the basic language, $\Cv$ be its extension, and $\Gamma \subseteq \Fjv$ be a $\Cv$-consistent set that admits instantiation. Then there is some $\Gamma\p \subseteq \Fjv$ such that $\Gamma \subseteq \Gamma\p$, $\Gamma\p$ is $\Cv$-maximally consistent, and $\Gamma\p$ admits instantiation.  
\end{pro}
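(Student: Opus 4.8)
The plan is to reproduce the Henkin construction used for Proposition~\ref{basicexpansion}: build an increasing chain $\Gamma_{1} \subseteq \Gamma_{2} \subseteq \cdots$ of $\Cv$-consistent sets, each of the shape $\Gamma \cup \llbracket \neg F_{i_{1}}(\vvarphi_{i_{1}}) \rrbracket \cup \cdots \cup \llbracket \neg F_{i_{k}}(\vvarphi_{i_{k}}) \rrbracket$ with every $F_{i_{j}}$ disjunctive, take the union, and apply Lindenbaum (Proposition~\ref{lind}) to obtain a $\Cv$-maximally consistent $\Gamma\p$ with $\Gamma = \Gamma_{1} \subseteq \bigcup_{n} \Gamma_{n} \subseteq \Gamma\p$. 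Setting $\Gamma_{1} = \Gamma$ is immediate because $\Gamma$ is already $\Cv$-consistent by hypothesis, so, unlike in the basic case, no appeal to conservativity is needed at the base.

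The one essential difference from Proposition~\ref{basicexpansion} is that $\Gamma$ may already contain witness variables, so at a given stage there need not be a fresh witness variable available and Lemma~\ref{beforeBe} is unavailable. Instead I would propagate the admits-instantiation property along the chain. The base set $\Gamma_{1} = \Gamma$ admits instantiation by hypothesis; and each successor is obtained by adjoining a set of the form $\llbracket \neg G(\vec{\psi}, \varphi(a)) \rrbracket$, which is a negated instantiation set of the disjunctive template $G$, so by Lemma~\ref{betweenlema} each $\Gamma_{n}$ again admits instantiation. With this invariant the step for the $n$-th pair $\bl F_{n}(\tvp_{n}), \vvarphi_{n} \br$ runs exactly as before: if the last entry of $\vvarphi_{n}$ is not universal, put $\Gamma_{n+1} = \Gamma_{n}$; otherwise write $\vvarphi_{n}$ as $\vec{\psi}, \forall x \varphi(x)$ and $F_{n}$ as $G(\vec{\tq}, \tr)$, keep $\Gamma_{n+1} = \Gamma_{n}$ when $\Gamma_{n} \cup \llbracket \neg G(\vec{\psi}, \forall x \varphi(x)) \rrbracket$ is inconsistent, and when it is consistent use the admits-instantiation property of $\Gamma_{n}$ to obtain a witness variable $a$ with $\Gamma_{n} \cup \llbracket \neg G(\vec{\psi}, \varphi(a)) \rrbracket$ consistent, taking this to be $\Gamma_{n+1}$. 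Consistency of $\bigcup_{n} \Gamma_{n}$ is routine from the finitary character of derivability, and the verification that $\Gamma\p$ admits instantiation is verbatim the corresponding paragraph of Proposition~\ref{basicexpansion}: given consistent $\Gamma\p \cup \llbracket \neg F(\vvarphi, \forall x \varphi(x)) \rrbracket$, locate the index $k$ at which the pair $\bl F(\tvp, \tq), \bl \vvarphi, \forall x \varphi(x) \br \br$ is enumerated, observe that $\Gamma_{k} \cup \llbracket \neg F(\vvarphi, \forall x \varphi(x)) \rrbracket$ is consistent, and read off from the construction the witness variable $a$ with $\llbracket \neg F(\vvarphi, \varphi(a)) \rrbracket \subseteq \Gamma\p$.

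I do not expect a serious obstacle. The only point needing care is the inductive maintenance of the admits-instantiation property along the chain, which is precisely where Lemma~\ref{betweenlema} is meant to be applied; the small thing to confirm is that adjoining $\llbracket \neg G(\vec{\psi}, \varphi(a)) \rrbracket$ genuinely has the form $\llbracket \neg G(\cdot) \rrbracket$ for a disjunctive template, so that the lemma applies and the chain stays within the prescribed shape. This replaces, and is simpler than, the fresh-variable reasoning of Lemma~\ref{beforeBe} used in the basic expansion.
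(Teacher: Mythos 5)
Your proposal matches the paper's own proof: the paper builds the same chain with $\Gamma_{1}=\Gamma$, uses the admits-instantiation property of $\Gamma_{n}$ (rather than Lemma~\ref{beforeBe}) to find the witness variable at each successor stage, invokes Lemma~\ref{betweenlema} to propagate admits-instantiation along the chain, and finishes with Lindenbaum exactly as you describe. No gaps; this is essentially the same argument.
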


\begin{proof}
The proof is very similar to the proof of Proposition \ref{basicexpansion}.

We define a sequence $\Gamma_{1},\Gamma_{2}, \dots$ of $\Cv$-consistent sets that admit instantiation. To begin, $\Gamma_{1} = \Gamma$.

Next, suppose $\Gamma_{n}$ has been constructed. Let $\bl F_{n}(\tvp_{n}), \vvarphi_{n} \br$ be the $n$\textsuperscript{th} pair of the enumerated sequence. If the last term of the sequence $\vvarphi_{n}$ is not a universal formula, let $\Gamma_{n+1} = \Gamma_{n}$. Otherwise, proceed as follows. $\vvarphi_{n}$ is of the form $\vec{\psi}, \forall x \varphi (x)$. And $F_{n}(\tvp_{n})$ is the disjunctive template $G(\vec{\tq},\tr)$ and so  $\llbracket  \neg F_{n}( \vvarphi_{n}) \rrbracket  = \llbracket  \neg G(\vec{\psi}, \forall x \varphi (x)) \rrbracket$. If $\Gamma_{n}\cup \llbracket  \neg G(\vec{\psi}, \forall x \varphi (x)) \rrbracket$ is not $\Cv$-consistent, then take $\Gamma_{n+1}$ as $\Gamma_{n}$.
If $\Gamma_{n}\cup \llbracket  \neg G(\vec{\psi}, \forall x \varphi (x)) \rrbracket$ is $\Cv$-consistent then, since $\Gamma_{n}$ admits instantiation, there is a witness variable $a$ such that $\Gamma_{n}\cup \llbracket  \neg G(\vec{\psi}, \varphi (a)) \rrbracket$ is $\Cv$-consistent. By Lemma \ref{betweenlema},  $\Gamma_{n}\cup \llbracket  \neg G(\vec{\psi}, \varphi (a)) \rrbracket$ admits instantiation. So, take $\Gamma_{n+1}$ as $\Gamma_{n}\cup \llbracket  \neg G(\vec{\psi}, \varphi (a)) \rrbracket$.

As before, it can be checked that $\bigcup_{n\in \omega}\Gamma _{n}$ is a  $\Cv$-consistent set that admits instantiation. By Proposition \ref{lind} there is a set $\Gamma\p$ such that  $\bigcup_{n\in \omega} \Gamma_{n} \subseteq\Gamma\p$ and $\Gamma\p$ is $\Cv$-maximal consistent. It is easy to see that  $\Gamma\p$  admits instantiation. 
\end{proof}

\section{The Canonical Model}\label{canonicalmodelsection}

We now define the canonical model and show it behaves as expected.  The domain will, of course, be the set of witness variables.

\begin{defn}[Canonical model]
\label{canonical}
A \emph{canonical model} $\M = \model$, using constant specification $\C$, is defined as follows.
\begin{itemize}
\item $\W$ is the set of all $\C(\textbf{V})$-maximally consistent sets that admit instantiation.
\item Let $\Gamma, \Delta \in \W$. $\Gamma\R\Delta$ iff $\Gamma^{\#} \subseteq  \Delta$ (as defined in \ref{sharp}).
\item $\D = \textbf{V}$.
\item For an $n$-place relation symbol $P$ and for $\Gamma \in \W$, let $\I(P,\Gamma)$ be the set of all $\vec{a}$ where $\vec{a} \in \textbf{V}$ and $P(\vec{a}) \in \Gamma$.
\item For $\Gamma \in \W$, set $\Gamma \in \E(t,\varphi)$ iff $\lc{t}_{X}\varphi \in \Gamma$, where $\lc{t}_{X}\varphi$ is a closed Henkin formula and $X$ is the set of witness variables in $\varphi$.
\end{itemize}
\end{defn}

First we need to check that $\M$ is indeed a Fitting model meeting $\C$. Since the argument is similar to the one presented in \cite[pp. 13-14]{Fitting14} we are only going to consider a few of the cases.
\smallskip

\textbf{$\R$ is reflexive}. Let $\Gamma \in \W$ and suppose $\forall \vec{y}\varphi(\vec{y})\in\Gamma^\#$.  Then there is a closed Henkin formula $\lc{t}_{X}\varphi(\vec{y})\in\Gamma$ such that $\vec{y}$ is an $n$-ary sequence of basic variables none of which occur in $X$. By repeated use of axiom \textbf{B5} and classical reasoning, $\turn_{\C(\textbf{V})}\lc{t}_{X}\varphi(\vec{y}) \rightarrow \lc{\gen_{\vec{y}}}_{X}\forall \vec{y} \varphi(\vec{y})$.  By axiom \textbf{B1}, $\turn_{\C(\textbf{V})}\lc{\gen_{\vec{y}}}_{X}\forall \vec{y} \varphi(\vec{y}) \rightarrow \forall \vec{y} \varphi(\vec{y})$
hence, by the maximal consistency of $\Gamma$, $\forall \vec{y} \varphi(\vec{y}) \in \Gamma$. Thus $\Gamma^{\#} \subseteq \Gamma$, i.e., $\Gamma\R\Gamma$.
\smallskip

\textbf{$\R$ is transitive}. Let $\Gamma, \Delta, \Theta \in \W$ such that $\Gamma\R\Delta$ and $\Delta\R\Theta$.  Assume $\forall \vec{y} \psi(\vec{a},\vec{y}) \in \Gamma^{\#}$, where $\vec{a}$ is a sequence of witness variables and $\vec{y}$ is a sequence of basic variables.  Then for some justification term $t$, $ \lc{t}_{\{\vec{a}\}}\psi(\vec{a},\vec{y}) \in \Gamma$.

By axiom \textbf{B4} and by the maximal consistency of $\Gamma$,  $\lc{!t}_{\{\vec{a}\}} \lc{t}_{\{\vec{a}\}}\psi(\vec{a},\vec{y}) \in \Gamma$. Since $\lc{t}_{\{\vec{a}\}}\psi(\vec{a},\vec{y})$ has no free basic variables and $\Gamma \R \Delta$, then $\lc{t}_{\{\vec{a}\}}\psi(\vec{a},\vec{y}) \in \Delta$. And since $\Delta\R\Theta$, then $\forall \vec{y} \psi(\vec{a},\vec{y}) \in \Theta$. Thus, $\Gamma^{\#} \subseteq \Theta$, i.e., $\Gamma\R\Theta$.
\smallskip

\textbf{$!$ Condition}. Let $\Gamma \in \W$. Suppose that $\Gamma \in \E(t, \varphi (\vec{a}))$ and $X$ is a set of witness variables such that $\{\vec{a}\} \subseteq X$. Since $\Gamma \in \E(t, \varphi (\vec{a}))$, $\lc{t}_{\{\vec{a}\}}\varphi (\vec{a}) \in \Gamma$. Using axiom \textbf{A3} and the maximality of $\Gamma$, $\lc{t}_X\varphi (\vec{a}) \in \Gamma$. Then by axiom \textbf{B4}, $\lc{!t}_X\lc{t}_X\varphi (\vec{a}) \in \Gamma$, i.e., $\Gamma \in \E(!t, \lc{t}_X\varphi (\vec{a}))$. 
\smallskip

\textbf{$\jb$ Condition}. The proof is by contradiction.  Suppose $\Gamma \in \W$ and, for every $d \in \textbf{V}$, $\Gamma \in \E(t, \varphi (\vec{a}, d))$ but $\Gamma \notin \E(\jb(t), \forall y\varphi (\vec{a}, y))$.  Equivalently using the definition of $\E$ in the canonical model, $\lc{t}_{\{\vec{a}, d\}} \varphi (\vec{a}, d) \in \Gamma$ for every $d \in \textbf{V}$, but $\lc{\jb(t)}_{\{\vec{a}\}} \forall y \varphi (\vec{a}, y) \notin \Gamma$.  We show this leads to contradiction.

By the maximality of $\Gamma$, $\neg \lc{\jb(t)}_{\{\vec{a}\}} \forall y \varphi (\vec{a}, y) \in \Gamma$.  By axiom \textbf{Bb} and classical reasoning, $\turn_{\C(\textbf{V})} \neg \lc{\jb(t)}_{\{\vec{a}\}} \forall y \varphi(\vec{a},y) \rightarrow \neg \forall y \lc{t}_{\{\vec{a},y\}}\varphi(\vec{a},y)$ and thus, $\neg \forall y \lc{t}_{\{\vec{a},y\}}\varphi(\vec{a},y) \in \Gamma$. By Proposition~\ref{instantiation} there is some $a^{*} \in \textbf{V}$ such that $\neg  \lc{t}_{\{\vec{a},a^{*}\}}\varphi(\vec{a},a^{*}) \in \Gamma$. But this contradicts the fact that $\lc{t}_{\{\vec{a}, d\}} \varphi (\vec{a}, d) \in \Gamma$ for every $d \in \textbf{V}$.
\smallskip

\textbf{$\C$ Condition}.
Recall that a model $\M$ meets constant specification $\C$ provided, whenever $\lc{c}\varphi \in \C$, then $\E(c,\varphi) = \W$. Let $\lc{c}\varphi$ be a member of $\C$ and $\Gamma$ a member of $\W$. Since $\Gamma$ is a $\Cv$-maximally consistent set, $\C \subseteq \Cv \subseteq \Gamma$. Hence, $\lc{c}\varphi \in \Gamma$ and from this it follows that $\Gamma \in \E(c,\varphi)$.
\smallskip

We have shown that the canonical model is a Fitting model meeting $\C$. Now, we have a version of the usual Truth Lemma.

\begin{lemma}[Truth lemma]
\label{truthlemma}
Let $\M=\model$ be a canonical model. For each $\Gamma \in \W$ and for each closed Henkin formula $\varphi$,
\[
\M,\Gamma \Vdash  \varphi \; \text{iff} \;\varphi \in \Gamma\; .
\]
\end{lemma}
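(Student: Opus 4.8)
The plan is to argue by induction on the complexity (the number of logical and justification operators) of the closed Henkin formula $\varphi$, establishing both directions of the biconditional at each stage. The base and propositional cases are routine. For atomic $P(\vec{a})$, the equivalence $\M,\Gamma \Vdash P(\vec{a})$ iff $P(\vec{a})\in\Gamma$ is immediate from the definition of $\I$ in the canonical model; for $\bot$, both $\M,\Gamma \nmodels \bot$ and $\bot\notin\Gamma$ hold (the latter by consistency), so the sides agree; and for $\psi\rightarrow\theta$ the induction hypothesis together with standard maximal-consistency facts gives that $\M,\Gamma \nmodels \psi$ or $\M,\Gamma \Vdash \theta$ is equivalent to $\psi\rightarrow\theta\in\Gamma$. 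Throughout I would note that substituting domain members for free basic variables leaves complexity unchanged and preserves closedness, so the induction hypothesis applies to every instance $\psi(\vec{a})$ of a subformula.

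For $\forall x \psi(x)$ I would unfold the truth clause to ``$\M,\Gamma\Vdash\psi(a)$ for every $a\in\D$'' and convert this, by the induction hypothesis, to ``$\psi(a)\in\Gamma$ for every witness variable $a$''. The passage from $\forall x\psi(x)\in\Gamma$ to all instances is just maximal consistency applied to the provable $\forall x\psi(x)\rightarrow\psi(a)$. The converse is where the Henkin construction is used: if $\forall x\psi(x)\notin\Gamma$ then $\neg\forall x\psi(x)\in\Gamma$ by maximality, and since $\Gamma$ admits instantiation, Proposition~\ref{instantiation} supplies a witness variable $a^{*}$ with $\neg\psi(a^{*})\in\Gamma$, contradicting $\psi(a^{*})\in\Gamma$.

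The justification case $\lc{t}_{X}\psi$ is the heart of the argument and the only place I expect real difficulty. Here I would write $\psi=\psi(\vec{a},\vec{x})$ with $\vec{a}$ the free witness variables (so $X=\{\vec{a}\}$) and $\vec{x}$ the free basic variables. The backward direction is immediate: if $\M,\Gamma\Vdash\lc{t}_{X}\psi$ then clause (a) gives $\Gamma\in\E(t,\psi)$, which by the definition of $\E$ in the canonical model means exactly $\lc{t}_{X}\psi\in\Gamma$. For the forward direction, assume $\lc{t}_{X}\psi\in\Gamma$; clause (a) holds again by the definition of $\E$. For clause (b), the decisive observation is that Definition~\ref{sharp} places $\forall\vec{x}\,\psi(\vec{a},\vec{x})$ into $\Gamma^{\#}$ --- it universally generalizes precisely the free basic variables while leaving the witness variables $\vec{a}$ fixed as placeholders. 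Hence for every $\Delta$ with $\Gamma\R\Delta$ we have $\forall\vec{x}\,\psi(\vec{a},\vec{x})\in\Delta$, so by maximal consistency of $\Delta$ and classical logic $\psi(\vec{a},\vec{b})\in\Delta$ for every choice of domain members $\vec{b}$; the induction hypothesis, applicable since each $\psi(\vec{a},\vec{b})$ has strictly lower complexity, then gives $\M,\Delta\Vdash\psi(\vec{a},\vec{b})$, which is exactly clause (b).

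The main obstacle is getting the bookkeeping of the two kinds of free variables to line up: the subscript $X$ and the definition of $\Gamma^{\#}$ must be read so that witness variables behave as fixed placeholders while free basic variables are the ones universally generalized, matching the ``$\psi(\vec{a})$ for every $\vec{a}\in\D$'' appearing in the truth clause. Once this is organized correctly, the fact that $\M$ is a genuine Fitting model meeting $\C$ (verified above) ensures all semantic clauses cohere, and in particular no separate substitution lemma for witness variables is required, since the $\Gamma^{\#}$ mechanism delivers the needed quantified formula directly into every accessible world.
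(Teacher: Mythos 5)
Your proposal is correct and follows essentially the same route as the paper: induction on degree, with the $\forall x\psi(x)$ case handled by maximal consistency in one direction and by admitting instantiation via Proposition~\ref{instantiation} in the other, and the $\lc{t}_{X}\psi$ case handled by the canonical definition of $\E$ for clause (a) and by $\Gamma^{\#}\subseteq\Delta$ plus the induction hypothesis for clause (b). The one bookkeeping point the paper treats that your reduction ``$X=\{\vec{a}\}$'' glosses over is that the subscript $X$ of a closed $\lc{t}_{X}\psi$ may properly contain witness variables not occurring in $\psi$, while the canonical definitions of $\E$ and $\Gamma^{\#}$ are stated for the exact set $X'$ of witness variables occurring in $\psi$; the paper bridges this by passing between $\lc{t}_{X}\psi$ and $\lc{t}_{X'}\psi$ with axioms \textbf{A2} and \textbf{A3} and the maximality of $\Gamma$.
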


\begin{proof}
By induction on the degree of $\varphi$. The crucial cases are when $\varphi$ is $\lc{t}_{X}\psi$ and when $\varphi$ is $\forall x \psi(x)$, and these are the only ones we discuss.  Assume the induction hypothesis holds for formulas less complex than $\varphi$.
\smallskip

Assume that $\varphi$ is $\lc{t}_{X}\psi$.

($\Rightarrow$) Suppose $\lc{t}_{X}\psi \notin \Gamma$. Let $X\p \subseteq X$ be the set containing exactly the witness variables that occur in $\psi$. It is not the case that $\lc{t}_{X\p}\psi\in \Gamma$, because otherwise, by axiom \textbf{A3} and by the maximal consistency of $\Gamma$,  $\lc{t}_{X}\psi \in \Gamma$. So by the definition of $\E$, $\Gamma \notin \E(t,\psi)$, and thus $\M,\Gamma \nmodels \lc{t}_{X}\psi$. 

($\Leftarrow$) Suppose $\lc{t}_{X}\psi \in \Gamma$. Let $X\p \subseteq X$ be as above. Then by the axiom \textbf{A2} and the maximal consistency of $\Gamma$, $\lc{t}_{X\p}\psi \in \Gamma$, and hence $\Gamma \in \E(t,\psi)$. Now, let $\Delta \in \W$ such that $\Gamma \R \Delta$. Since $\Gamma^{\#} \subseteq\Delta$, we have that  $\forall \vec{y}\psi \in \Delta$ where $\vec{y}$ are the free basic variables of $\psi$. Then by classical logic and the maximal consistency of $\Delta$, for every $\vec{a} \in \textbf{V}$,  $\psi(\vec{a}) \in \Delta$. By the induction hypothesis, for every $\vec{a} \in \textbf{V}$, $\M, \Delta \Vdash  \psi(\vec{a})$. Therefore, $\M,\Gamma \Vdash  \lc{t}_{X\p}\psi$, and so $M,\Gamma \Vdash  \lc{t}_{X}\psi$.
\medskip

Assume that $\varphi$ is $\forall x \psi(x)$.

($\Rightarrow$) Suppose $\forall x \psi(x) \notin \Gamma$. By the maximal consistency of $\Gamma$, $\neg \forall x \psi(x) \in \Gamma$. Since $\Gamma$ admits instantiation, then by Proposition \ref{instantiation} there is an $a \in \textbf{V}$ such that $\neg \psi(a) \in \Gamma$. By the consistency of $\Gamma$, $\psi(a) \notin \Gamma$. By the induction hypothesis, $\M, \Gamma \nmodels \psi(a)$, thus $\M, \Gamma \nmodels \forall x \psi(x)$.    

($\Leftarrow$) Suppose  $\forall x \psi(x) \in \Gamma$. By the classical axioms and the maximal consistency of $\Gamma$, for every $a \in \textbf{V}$,  $\psi(a) \in \Gamma$. By the induction hypothesis, $\M, \Gamma \Vdash  \psi(a)$, for every $a \in \textbf{V}$. Therefore,  $\M, \Gamma \Vdash  \forall x \psi(x)$.  
\end{proof}

\begin{theor}[Completeness for \folpb]
\label{compleLP}
Let $\C$ be a variant closed and axiomatically appropriate constant specification. For every closed formula $\varphi \in \Fj$, if $\Vdash _{\C} \varphi$, then $\turn_{\C}\varphi$.
\end{theor}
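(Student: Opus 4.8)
The plan is to prove the contrapositive: assuming $\not\turn_{\C}\varphi$, I will exhibit a Fitting model for \folpb\ meeting $\C$ together with a world at which $\varphi$ fails, so that $\varphi$ is not valid. First I would observe that if $\not\turn_{\C}\varphi$ then the singleton $\{\neg\varphi\}$ is $\C$-consistent: were it $\C$-inconsistent, we would have $\neg\varphi\turn_{\C}\bot$, so the Deduction Lemma~\ref{deductionlemma} would give $\turn_{\C}\neg\varphi\rightarrow\bot$, hence $\turn_{\C}\varphi$ by classical reasoning, contradicting the assumption. Since $\varphi\in\Fj$ is a basic formula, so is $\neg\varphi$, and therefore by the remark following Corollary~\ref{conservativ} the set $\{\neg\varphi\}$ is also $\Cv$-consistent.

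Next I would feed $\{\neg\varphi\}$ into the Basic Expansion Proposition~\ref{basicexpansion} to obtain a set $\Gamma\subseteq\Fjv$ with $\{\neg\varphi\}\subseteq\Gamma$ that is $\Cv$-maximally consistent and admits instantiation. By Definition~\ref{canonical} this $\Gamma$ is precisely a world of the canonical model $\M=\model$ built over $\C$. The verifications carried out just before Lemma~\ref{truthlemma} show that $\M$ is a Fitting model for \folpb\ meeting $\C$ --- in particular $\R$ is reflexive and transitive and all the evidence-function conditions are satisfied --- so $\M$ is among the models over which $\Vdash_{\C}$ quantifies.

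It then remains to evaluate $\varphi$ at $\Gamma$. Since $\varphi$ is closed and contains no witness variables, both $\varphi$ and $\neg\varphi$ are closed Henkin formulas, so the Truth Lemma~\ref{truthlemma} applies to them. From $\neg\varphi\in\Gamma$ and the consistency of $\Gamma$ we get $\varphi\notin\Gamma$, whence $\M,\Gamma\nmodels\varphi$ by the Truth Lemma. Thus $\varphi$ is not valid in $\M$, and since $\M$ is a Fitting model for \folpb\ meeting $\C$, we conclude $\not\Vdash_{\C}\varphi$, which is the contrapositive of the claim.

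I expect that the step carrying real weight is not this theorem itself but the machinery already assembled --- chiefly the Basic Expansion Proposition~\ref{basicexpansion} and the Truth Lemma~\ref{truthlemma}, whose $\lc{t}_{X}\psi$ case rests on the template apparatus and on the Converse Barcan results of Theorem~\ref{cbarcanburidan}. Within the present argument the only points needing care are confirming that $\neg\varphi$, a basic closed formula, qualifies as a closed Henkin formula so that the Truth Lemma is genuinely applicable, and that $\C$-consistency transfers to $\Cv$-consistency for basic sets; both are immediate from the cited results, so the proof is essentially an assembly of the preceding lemmas.
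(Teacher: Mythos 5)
Your proposal is correct and follows essentially the same route as the paper: show $\{\neg\varphi\}$ is $\C$-consistent, extend it via the Basic Expansion Proposition~\ref{basicexpansion} to a maximally consistent, instantiation-admitting $\Gamma$, and read off $\M,\Gamma\nmodels\varphi$ in the canonical model from the Truth Lemma~\ref{truthlemma}. The extra details you spell out (the transfer from $\C$- to $\Cv$-consistency and the verification that the canonical model is a Fitting model meeting $\C$) are left implicit in the paper but are exactly the right points to check.
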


\begin{proof}
Suppose $\not\turn_{\C}\varphi$. Then $\{\neg \varphi\}$ is $\C$-consistent. By Proposition \ref{basicexpansion}, there is a $\C(\textbf{V})$-maximally consistent $\Gamma$ such that $\Gamma$ admits instantiation and  $\{\neg \varphi\} \subseteq \Gamma$. By the Truth lemma, $\M,\Gamma \Vdash  \neg \varphi$, so  $\M,\Gamma \nmodels \varphi$. Hence, $\nmodels_{\C} \varphi$.     
\end{proof}

\begin{defn}[Fully explanatory] 
A model $\M = \model$ is \emph{fully explanatory} if the following condition is fulfilled. Let $\varphi$ be a formula with no free individual variables, but with constants from the domain of the model. Let $w \in \W$. If for every $v \in \W$ such that $w\R v$, $\M, v \Vdash  \varphi$, then there is a justification term $t \in \Tj$ such that $\M, w \Vdash   \lc{t}_{X}\varphi$, where $X$ is the set of domain constants appearing in $\varphi$. 
\end{defn}

\begin{theor}
Let $\C$ be a variant closed and axiomatically appropriate constant specification. The canonical model of \folpb\ using $\C$ is fully explanatory.
\end{theor}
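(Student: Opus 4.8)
The plan is to combine the Truth Lemma with the ``Up and Down'' machinery of Proposition~\ref{up}, carrying the bookkeeping about admitting instantiation through the canonical construction. Fix a closed Henkin formula $\varphi$ whose witness variables (the relevant domain constants) form the set $X$, and fix $\Gamma \in \W$ satisfying the hypothesis: for every $\Delta\in\W$ with $\Gamma\R\Delta$ we have $\M,\Delta\Vdash\varphi$. By the Truth Lemma~\ref{truthlemma} this says exactly that $\varphi\in\Delta$ for every $\Delta\in\W$ with $\Gamma^{\#}\subseteq\Delta$. What must be produced is a term $t$ with $\M,\Gamma\Vdash\lc{t}_{X}\varphi$; since $\varphi$ is closed, clause (b) of the truth definition for $\lc{t}_{X}\varphi$ is already supplied by the hypothesis, so it suffices to find $t$ with $\lc{t}_{X}\varphi\in\Gamma$, i.e.\ with $\Gamma\in\E(t,\varphi)$. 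This is the justification-logic analogue of the usual modal existence lemma for fully explanatory canonical models.

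The key step is to show that $\Gamma^{\#}\cup\{\neg\varphi\}$ is \emph{inconsistent}, which I would argue by contraposition: were it consistent, I would extend it to an accessible world falsifying $\varphi$, contradicting the hypothesis. The delicate point is that the extension must land in $\W$, so the set must admit instantiation before Secondary expansion (Proposition~\ref{secondaryexpansion}) applies. Here I use that $\Gamma^{\#}$ admits instantiation, which is Proposition~\ref{instantiationsharp} (as $\Gamma$ is maximally consistent and admits instantiation), together with the observation that for a single propositional letter the template $F=\tp$ is disjunctive and $\llbracket\neg F(\varphi)\rrbracket=\{\neg\varphi\}$; Lemma~\ref{betweenlema} then yields that $\Gamma^{\#}\cup\{\neg\varphi\}$ admits instantiation. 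Secondary expansion now gives some $\Delta\in\W$ with $\Gamma^{\#}\cup\{\neg\varphi\}\subseteq\Delta$, whence $\Gamma\R\Delta$ and $\neg\varphi\in\Delta$, so $\varphi\notin\Delta$ --- contradicting the hypothesis. Hence $\Gamma^{\#}\cup\{\neg\varphi\}$ is inconsistent.

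With inconsistency established, the term is delivered essentially for free by Proposition~\ref{up}. Taking $G=\tp$ (disjunctive) and $F=\Box\tp$ with the one-element sequence $\langle\varphi\rangle$, we have $\llbracket\neg G(\varphi)\rrbracket=\{\neg\varphi\}$, so $\Gamma^{\#}\cup\llbracket\neg G(\varphi)\rrbracket$ is inconsistent; by part~2 of Proposition~\ref{up}, $\Gamma\cup\llbracket\neg F(\varphi)\rrbracket$ is inconsistent. By the definition of instantiation set, $\llbracket\neg F(\varphi)\rrbracket=\{\neg\lc{t}_{X}\varphi\mid t\in\Tj\}$ with $X$ the witness variables of $\varphi$, so a finite subset $\{\neg\lc{t_1}_{X}\varphi,\dots,\neg\lc{t_m}_{X}\varphi\}$ is inconsistent with $\Gamma$; that is, $\Gamma\turncv \lc{t_1}_{X}\varphi\vee\dots\vee\lc{t_m}_{X}\varphi$. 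By the maximal consistency of $\Gamma$ some $\lc{t_i}_{X}\varphi\in\Gamma$, i.e.\ $\Gamma\in\E(t_i,\varphi)$; combined with clause (b) from the hypothesis this gives $\M,\Gamma\Vdash\lc{t_i}_{X}\varphi$, as required.

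The main obstacle is the consistency argument of the second paragraph: one must ensure the witnessing world $\Delta$ genuinely belongs to $\W$, which rests entirely on checking that $\Gamma^{\#}\cup\{\neg\varphi\}$ admits instantiation. This is precisely where the template apparatus pays off --- recognizing $\{\neg\varphi\}$ as the instantiation set $\llbracket\neg\tp(\varphi)\rrbracket$ lets Lemma~\ref{betweenlema} and Proposition~\ref{instantiationsharp} handle it. Once that is secured, Proposition~\ref{up} converts the resulting inconsistency straight into the desired justification term, so no further term surgery with \textbf{B5}, \textbf{B2}, and Internalization is needed beyond what is already packaged there.
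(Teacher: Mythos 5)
Your proof is correct and follows essentially the same route as the paper's: both hinge on the Truth Lemma, Proposition~\ref{up}(2) with the template $F=\Box\tp$, and the chain Proposition~\ref{instantiationsharp} $\to$ Lemma~\ref{betweenlema} $\to$ Proposition~\ref{secondaryexpansion} to build the accessible world. The only difference is that the paper argues the contrapositive directly (no term works $\Rightarrow$ an accessible $\Delta$ refutes $\varphi$), whereas you run the implication forwards and so need the extra, but entirely routine, final step of extracting a finite disjunction from the inconsistency of $\Gamma\cup\llbracket\neg F(\varphi)\rrbracket$ and appealing to maximal consistency to land some $\lc{t_i}_{X}\varphi$ in $\Gamma$.
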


\begin{proof}
Let $\M = \model$ be a canonical model, $\Gamma \in \W$, $\varphi$ a closed Henkin formula and $X$ the set of the witness variables occurring $\varphi$. We shall show that if $\M, \Gamma \nmodels \lc{t}_{X}\varphi$ for every justification term $t \in \Tj$, then there is a $\Delta \in \W$ such that $\Gamma \R\Delta$ and $\M, \Delta \nmodels \varphi$.

If $\M, \Gamma \nmodels \lc{t}_{X}\varphi$ for every justification term $t \in \Tj$ then by the Truth Lemma, $\neg \lc{t}_{X}\varphi \in \Gamma$ for every justification term $t \in \Tj$. The template $G(\tp) =\tp$ is a disjunctive template. Let $F(\tp) = \Box G(\tp)$. Then $\llbracket \neg F(\varphi)\rrbracket \subseteq \Gamma$, and so, trivially, $\Gamma \cup \llbracket \neg F(\vvarphi)\rrbracket$ is $\Cv$-consistent. By item 2.\ of Proposition \ref{up}, $\Gamma^{\#} \cup \llbracket \neg G(\varphi)\rrbracket$ is $\Cv$-consistent, i.e.,  $\Gamma^{\#} \cup \{  \neg \varphi \} $ is $\Cv$-consistent. By Proposition \ref{instantiationsharp}, $\Gamma^{\#}$ admits instantiation, and then by Lemma \ref{betweenlema},  $\Gamma^{\#} \cup \{  \neg \varphi \} $ admits instantiation. By Proposition \ref{secondaryexpansion}, there is a $\Cv$-maximal consistent set $\Delta$ such that $\Delta$ admits instantiation and $\Gamma^{\#} \cup \{  \neg \varphi \}\subseteq\Delta$. Since $\Gamma^{\#} \subseteq \Delta$, $\Gamma \R \Delta$. And since $\neg \varphi \in \Delta$, by the Truth Lemma,  $\M, \Delta \nmodels \varphi$.
\end{proof}

\section{The Case Of First-Order \jtff} 

The definitions and results for first-order \jtff\ (\fojtff) follow with little change from the ones related to \folpb. There are only a few minor details  that need to be addressed.

On the syntactic side there are two differences.  First, propositional \jtff\ expands the language of \jlp\ by adding the unary justification term $?$, and this also happens in the first-order case; an axiom \textbf{B6} for ? is added. Second, $\jb$ is not a primitive justification term of \fojtff, and consequently the axiom \textbf{Bb} is dropped.   Here is the axiom that is added.   
\begin{center}
\begin{tabular}{c c}    
\textbf{B6} & $\neg \lc{t}_{X}\varphi \rightarrow \lc{?t}_{X} \neg\lc{t}_{X}\varphi\; .$    
\end{tabular}
\end{center}

In this section, the provability predicates $\turn$ and $\turn_{\C}$ will refer to \fojtff. It is easy to verify that the Deduction Lemma, the Internalization Theorem and Theorem \ref{cbarcanburidan} hold for \fojtff.

By an argument due to Prior, one can derive the Barcan Formula in first-order \msfi. The situation in \fojtff\ is analogous. Fix an axiomatically appropriate constant specification. Then for every term $t$ a justification term that we may call $\jb(t)$ can be constructed, playing the role that our primitive term $\jb(t)$ played for \folpb.  That is, the \textbf{Bb} formula, $\forall y \lc{t}_{Xy}\varphi(y) \rightarrow \lc{\jb(t)}_{X} \forall y \varphi(y)$, is provable in \fojtff\ for this defined term $\jb(t)$.

\begin{pro}[Explicit counterpart of the Barcan Formula]
Let $\C$ be an axiomatically appropriate constant specification and $y$ be an individual variable. For every finite set of individual variables $X$ such that $y \notin X$, for every formula $\varphi(y)$ and every justification term $t$, there is a justification term $\jb(t)$ such that $\turn_{\C} \forall y \lc{t}_{Xy}\varphi(y) \rightarrow \lc{\jb(t)}_{X} \forall y \varphi(y)$.
\end{pro}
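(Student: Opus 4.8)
The plan is to mimic Prior's derivation of the Barcan Formula in first-order \msfi, realizing each modal step with explicit justification terms. The only essential appeal to the symmetry of \msfi\ is negative introspection, which is available here as axiom \textbf{B6}; moreover Internalization (Theorem~\ref{internalizationtheorem}) and both parts of Theorem~\ref{cbarcanburidan} already hold for \fojtff, so they may be used freely. The underlying modal skeleton I would follow is: from $\forall y \nec \varphi(y)$ derive $\nec\varphi(y)$ by universal instantiation; use the \msfi-theorem $\Diamond\nec\varphi(y)\rightarrow\varphi(y)$ to obtain $\Diamond\forall y\nec\varphi(y)\rightarrow\varphi(y)$; apply Generalization (legitimate because $y$ is not free in the antecedent) to get $\Diamond\forall y\nec\varphi(y)\rightarrow\forall y\varphi(y)$; necessitate and distribute to reach $\nec\Diamond\forall y\nec\varphi(y)\rightarrow\nec\forall y\varphi(y)$; and finally close with the instance $\forall y\nec\varphi(y)\rightarrow\nec\Diamond\forall y\nec\varphi(y)$ of negative introspection.

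The realization I would carry out keeps the inner box as the given term $t$ and turns each diamond prefix $\Diamond\nec$ into the explicit string $\neg\lc{?t}\,\neg\lc{t}$ furnished by \textbf{B6}. Concretely: (i) instantiating $\forall y\lc{t}_{Xy}\varphi(y)$ gives $\lc{t}_{Xy}\varphi(y)$ with $y$ free; (ii) from \textbf{B6}, $\neg\lc{t}_{Xy}\varphi(y)\rightarrow\lc{?t}_{Xy}\neg\lc{t}_{Xy}\varphi(y)$, whose contrapositive composed with \textbf{B1} yields the explicit counterpart of $\Diamond\nec\varphi(y)\rightarrow\varphi(y)$, namely $\neg\lc{?t}_{Xy}\neg\lc{t}_{Xy}\varphi(y)\rightarrow\varphi(y)$; (iii) Generalization over $y$ (allowed since $y\notin X$, matching the side condition of \textbf{B5}) moves the universal quantifier onto $\varphi$; (iv) Internalization together with \textbf{B2} then supplies a concrete term that carries the outer box across the implication, and axioms \textbf{A2} and \textbf{A3} are used to contract the subscript set down to exactly $X$. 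Reading off the composite term built from $t$, $?t$, the constants provided by Internalization, and the operators $\cdot$, $+$, $\gen$, one obtains the term to be named $\jb(t)$, so that the resulting theorem is precisely $\forall y\lc{t}_{Xy}\varphi(y)\rightarrow\lc{\jb(t)}_{X}\forall y\varphi(y)$.

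The main obstacle is the absence of a primitive diamond: every step of Prior's argument that passes through $\Diamond$ must be re-expressed using negated justification assertions and the $?$-term of \textbf{B6}, and one must check that these substitutions really produce a theorem rather than merely a scheme, i.e.\ that the terms chosen for the outer (positive) boxes via Internalization are compatible with the terms $?t$ forced on the inner (negative) boxes. The second delicate point is purely bookkeeping: threading the index sets correctly so that repeated use of \textbf{A2} and \textbf{A3} collapses the subscript to $X$ exactly, and verifying the $y\notin X$ side condition at the Generalization/\textbf{B5} step. Once the modal skeleton is fixed these are the only places where something can go wrong, and assembling the final term $\jb(t)$ explicitly is the deliverable that makes the \textbf{Bb} formula provable in \fojtff, after which the canonical-model verification of the $\jb$ Condition goes through exactly as in the \folpb\ case.
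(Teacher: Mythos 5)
Your proposal follows essentially the same route as the paper: Prior's derivation of the Barcan Formula in \msfi, realized step by step with \textbf{B6} supplying the $?$-terms for the diamond prefixes (each $\Diamond\theta$ becoming $\neg\lc{s}_{X}\neg\theta$ for a term $s$ built by chaining \textbf{B2} from $?t$), Internalization and \textbf{B2} carrying the outer box across the implication, \textbf{A2}/\textbf{A3} contracting subscripts, and the $y \notin X$ condition licensing the quantifier shift. The two delicate points you flag --- compatibility of the terms forced on negative occurrences with those chosen by Internalization, and the subscript bookkeeping --- are exactly the ones the paper's explicit $21$-step derivation resolves, ending with the term $\jb(t) = [r \cdot ?[[c_{2}\cdot c_{1}] \cdot ?t]]$.
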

\begin{proof}\ \\
\begin{center}  
\begin{longtable}{ll}
1.  $ \forall y \lc{t}_{Xy}\varphi(y) \rightarrow \lc{t}_{Xy}\varphi(y)$ &  \emph{classical axiom.} \\[0.3cm]
2.  $\lc{c_{1}}(\forall y \lc{t}_{Xy}\varphi(y) \rightarrow \lc{t}_{Xy}\varphi(y))$  &  \emph{constant specification.} \\[0.3cm]
3. $\lc{c_{1}}_{Xy}(\forall y \lc{t}_{Xy}\varphi(y) \rightarrow \lc{t}_{Xy}\varphi(y))$ &  \emph{from 2 by \textbf{A3}.}  \\[0.3cm]
4. $ (\forall y \lc{t}_{Xy}\varphi(y) \rightarrow \lc{t}_{Xy}\varphi(y)) \rightarrow (\neg \lc{t}_{Xy}\varphi(y) \rightarrow \neg \forall y \lc{t}_{Xy}\varphi(y) )$ &  \emph{classical axiom.} \\[0.3cm]
5.  $\lc{c_{2}}((\forall y \lc{t}_{Xy}\varphi(y) \rightarrow \lc{t}_{Xy}\varphi(y)) \rightarrow (\neg \lc{t}_{Xy}\varphi(y) \rightarrow \neg \forall y \lc{t}_{Xy}\varphi(y) ))$ &  \emph{constant specification.} \\[0.3cm]        
6.  $\lc{c_{2}}_{Xy}((\forall y \lc{t}_{Xy}\varphi(y) \rightarrow \lc{t}_{Xy}\varphi(y)) \rightarrow (\neg \lc{t}_{Xy}\varphi(y) \rightarrow \neg \forall y \lc{t}_{Xy}\varphi(y) ))$ &  \emph{from 5 by \textbf{A3}.} \\[0.3cm]
7.  $\lc{[c_{2}\cdot c_{1}]}_{Xy}(\neg \lc{t}_{Xy}\varphi(y) \rightarrow \neg \forall y \lc{t}_{Xy}\varphi(y) )$  &  \emph{from 6 and 3 by \textbf{B2}.} \\[0.3cm]
8.  $\lc{?t}_{Xy}\neg \lc{t}_{Xy}\varphi(y) \rightarrow \lc{[[c_{2}\cdot c_{1}] \cdot ?t]}_{Xy}\neg \forall y \lc{t}_{Xy}\varphi(y) $  &  \emph{from 7 by \textbf{B2}.} \\[0.3cm]   
9.  $\neg\lc{[[c_{2}\cdot c_{1}] \cdot ?t]}_{Xy}\neg \forall y \lc{t}_{Xy}\varphi(y)  \rightarrow \neg\lc{?t}_{Xy}\neg \lc{t}_{Xy}\varphi(y)$  &  \emph{from 8 by classical reasoning.} \\[0.3cm]           
10. $\neg \lc{?t}_{Xy}\neg \lc{t}_{Xy}\varphi(y) \rightarrow \varphi(y) $ &  \emph{\jtff\ theorem.} \\[0.3cm]               
11. $\neg\lc{[[c_{2}\cdot c_{1}] \cdot ?t]}_{Xy}\neg \forall y \lc{t}_{Xy}\varphi(y)  \rightarrow \varphi(y) $ &  \emph{from 9 and 10.} \\[0.3cm]       
12. $\lc{[[c_{2}\cdot c_{1}] \cdot ?t]}_{Xy}\neg \forall y \lc{t}_{Xy}\varphi(y)  \rightarrow \lc{[[c_{2}\cdot c_{1}] \cdot ?t]}_{X}\neg \forall y \lc{t}_{Xy}\varphi(y)  $ &  \emph{\textbf{A2}.} \\[0.3cm]
13. $\neg \lc{[[c_{2}\cdot c_{1}] \cdot ?t]}_{X}\neg \forall y \lc{t}_{Xy}\varphi(y)  \rightarrow \neg\lc{[[c_{2}\cdot c_{1}] \cdot ?t]}_{Xy}\neg \forall y \lc{t}_{Xy}\varphi(y)  $ &  \emph{from 12 by classical reasoning.} \\[0.3cm]
14. $\neg\lc{[[c_{2}\cdot c_{1}] \cdot ?t]}_{X}\neg \forall y \lc{t}_{Xy}\varphi(y)  \rightarrow \varphi(y) $ &  \emph{from 11 and 13.} \\[0.3cm]               
15. $\forall y(\neg\lc{[[c_{2}\cdot c_{1}] \cdot ?t]}_{X}\neg \forall y \lc{t}_{Xy}\varphi(y)  \rightarrow \varphi(y) )$ &  \emph{generalization.} \\[0.3cm]                
16. $\neg\lc{[[c_{2}\cdot c_{1}] \cdot ?t]}_{X}\neg \forall y \lc{t}_{Xy}\varphi(y)   \rightarrow \forall y\varphi(y) $ &  \emph{$y \notin X$ and classical reasoning.} \\[0.3cm]
17. $\lc{r}(\neg\lc{[[c_{2}\cdot c_{1}] \cdot ?t]}_{X}\neg \forall y \lc{t}_{Xy}\varphi(y)   \rightarrow \forall y\varphi(y) )$ &  \emph{internalization.} \\[0.3cm]
18. $\lc{r}_{X}(\neg\lc{[[c_{2}\cdot c_{1}] \cdot ?t]}_{X}\neg \forall y \lc{t}_{Xy}\varphi(y)   \rightarrow \forall y\varphi(y) )$ &  \emph{from 17 by \textbf{A3}.} \\[0.3cm]     
19. $\lc{?[[c_{2}\cdot c_{1}] \cdot ?t]}_{X}\neg\lc{[[c_{2}\cdot c_{1}] \cdot ?t]}_{X}\neg \forall y \lc{t}_{Xy}\varphi(y)   \rightarrow \lc{[r \cdot ?[[c_{2}\cdot c_{1}] \cdot ?t]]}_{X}\forall y\varphi(y) $ &  \emph{from 18 by \textbf{B2}.} \\[0.3cm]
20.  $\forall y \lc{t}_{Xy}\varphi(y) \rightarrow\lc{?[[c_{2}\cdot c_{1}] \cdot ?t]}_{X}\neg\lc{[[c_{2}\cdot c_{1}] \cdot ?t]}_{X}\neg \forall y \lc{t}_{Xy}\varphi(y)$ &  \emph{\jtff\ theorem.}  \\[0.3cm]            
21.  $\forall y \lc{t}_{Xy}\varphi(y) \rightarrow \lc{[r \cdot ?[[c_{2}\cdot c_{1}] \cdot ?t]]}_{X}\forall y\varphi(y)$ &  \emph{from 19 and 20.}  \\[0.3cm]                                            
\end{longtable}
\end{center}
\end{proof}

Before defining the models for \fojtff\ we impose one more condition on the evidence function $\E$.

\begin{defn}[Strong evidence] 
Let  $\M = \model$ be a Fitting model. We say that $\E$ is a \emph{strong evidence function} if for every term $t$ and Henkin formula $\varphi$,  $\E(t,\varphi) \subseteq \{w \in \W \; | \; \M,w \Vdash \lc{t}_{X}\varphi\}$; where $X$ is the set of constants occurring in $\varphi$.
\end{defn}  

\begin{defn}
A \emph{Fitting model for \fojtff} is a Fitting model $\M = \model$ where $\bl \W, \R, \D \br$ is an \fojtff-skeleton (as defined in \ref{defskeleton}); $\E$ is a strong evidence function; and the following holds:
\begin{itemize}
\item[] \textbf{$?$ Condition} $\W  \backslash \E (t, \varphi) \subseteq \E(?t,\neg \lc{t}_{X} \varphi)$, where $X$ is the set of constants occurring in $\varphi$. 
\end{itemize}       
\end{defn}  

\begin{theor}[Soundness]
Let $\C$ be a constant specification. For every formula $\varphi \in \Fj$, if $\turn_{\C} \varphi$, then $\Vdash _{\C}\varphi$.
\end{theor}    

\begin{proof}   
We will check only the new axiom of \fojtff. Suppose $\varphi$ is an instance of \textbf{B6}, i.e., $\varphi$ is $\neg \lc{t}_{X}\psi \rightarrow \lc{?t}_{X}\neg\lc{t}_{X}\psi$. For notational simplicity, assume $X= \{x\}$ and $\psi = \psi(x,y)$. Then we have that $\turn_{\C}\neg\lc{t}_{\{x\}}\psi(x,y) \rightarrow \lc{?t}_{\{x\}}\neg\lc{t}_{\{x\}}\psi(x,y)$.    

Let $\M = \model$ be a Fitting model for \fojtff\ meeting $\C$, $w \in \W$ and $a \in \D$. Suppose $\M, w \Vdash  \neg \lc{t}_{\{a\}}\psi(a,y)$. Then, $\M, w \nmodels \lc{t}_{\{a\}}\psi(a,y)$. By the definition of the strong evidence function, $w \notin \E (t, \psi(a,y))$. By the ? condition, $w \in \E(?t,\neg \lc{t}_{\{a\}}\psi(a,y))$. Again, by the strong evidence function $\M, w \Vdash  \lc{?t}_{\{a\}}\neg \lc{t}_{\{a\}}\psi(a,y)$.  
\end{proof}     

The argument for the Completeness Theorem is developed essentially as before. Let a canonical model $\M = \model$ be defined essentially as in Definition~\ref{canonical}, except that now the underlying logic is \fojtff. There are only a few new features of $\M$ that we need to show:  $\R$ should be an equivalence relation and $\E$ should be a strong evidence function that satisfies the $?$ Condition. 
\smallskip

\textbf{$\R$ is symmetric}. Let $\Gamma, \Delta \in \W$. Suppose that $\Gamma \R \Delta$ but it is not the case that $\Delta \R \Gamma$.  We derive a contradiction.

By our assumptions, $\Gamma^\#\subseteq\Delta$ but $\Delta^{\#} \nsubseteq \Gamma$.  From the latter, for some term $t$, set of witness variables $X$ and Henkin formula $\varphi(\vec{y})$,  $\lc{t}_{X}\varphi(\vec{y}) \in \Delta$ but $\forall \vec{y} \varphi(\vec{y}) \notin \Gamma$.

Suppose we had that $\lc{t}_{X}\varphi(\vec{y}) \in \Gamma$. Then by repeated use of axiom \textbf{B5}, $ \lc{\gen_{\vec{y}}}_{X}\forall \vec{y} \varphi(\vec{y})\in \Gamma$, and by axiom \textbf{B1}, $\forall \vec{y} \varphi(\vec{y})\in \Gamma$, a contradiction. Hence, $\lc{t}_{X}\varphi(\vec{y}) \notin \Gamma$.

By what we have just shown, and the maximal consistency of $\Gamma$, $\neg \lc{t}_{X}\varphi(\vec{y}) \in \Gamma$. By axiom \textbf{B6}, $\lc{?t}_{X}\neg \lc{t}_{X}\varphi(\vec{y}) \in \Gamma$. Since $\Gamma^{\#} \subseteq \Delta$, then $\neg \lc{t}_{X}\varphi(\vec{y}) \in \Delta$, and we have a contradiction. Therefore, if $\Gamma \R \Delta$, then $\Delta \R \Gamma$.
\smallskip

\textbf{$?$ Condition}. Suppose $\Gamma \in \W \backslash \E(t,\varphi)$.  Then by definition of $\E$, $\lc{t}_{X}\varphi \notin \Gamma$, where $X$ is the set of all witness variables occurring in $\varphi$.  By the maximal consistency of $\Gamma$,  $\neg \lc{t}_{X}\varphi \in \Gamma$, and by the axiom \textbf{B6}, $\lc{?t}_{X}\neg \lc{t}_{X}\varphi \in \Gamma$. Hence, $\Gamma \in \E(?t,\neg \lc{t}_{X}\varphi)$. 
\smallskip

\textbf{Strong Evidence}. Using the Truth Lemma, we have the following chain of implications:
\[
\Gamma \in \E(t,\varphi) \Rightarrow \lc{t}_{X}\varphi \in \Gamma \Rightarrow \M,\Gamma \Vdash\lc{t}_{X}\varphi \Rightarrow \Gamma \in \{\Delta \in \W \; |\; \M,\Delta \Vdash \lc{t}_{X}\varphi\}\;.
\]
Thus, $\E$ is an strong evidence function.
\smallskip

And so the theorems of the previous section can be carried over to \fojtff.

\begin{theor}[Completeness for \fojtff]
Let $\C$ be a variant closed and axiomatically appropriate constant specification. For every closed formula $\varphi \in \Fj$, if $\Vdash _{\C} \varphi$, then $\turn_{\C}\varphi$.
\end{theor}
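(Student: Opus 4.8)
The plan is to follow the completeness argument for \folpb\ (Theorem~\ref{compleLP}) essentially verbatim, now that the canonical model for \fojtff\ has just been shown to be a genuine Fitting model for \fojtff\ meeting $\C$. I would argue by contraposition. Suppose $\not\turn_{\C}\varphi$; then $\{\neg\varphi\}$ is $\C$-consistent, and since $\varphi$ is a basic formula it is also $\Cv$-consistent by the remark following Corollary~\ref{conservativ}. Applying the analog of the Basic expansion Proposition~\ref{basicexpansion}, there is a $\Cv$-maximally consistent set $\Gamma$ that admits instantiation and contains $\neg\varphi$; by the definition of the canonical model this $\Gamma$ is a world of $\M$.

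The key step is to invoke the Truth Lemma (Lemma~\ref{truthlemma}), which I claim carries over to the \fojtff\ canonical model without change: its inductive cases for $\lc{t}_{X}\psi$ and $\forall x\psi$ depend only on the definition of $\E$ and $\R$ in the canonical model together with axioms \textbf{A2}, \textbf{A3}, \textbf{B1}, \textbf{B5} and Proposition~\ref{instantiation}, none of which is disturbed by adding symmetry to $\R$ or by the $?$ machinery. Granting the Truth Lemma, $\neg\varphi\in\Gamma$ yields $\M,\Gamma\Vdash\neg\varphi$, hence $\M,\Gamma\nmodels\varphi$, so $\varphi$ fails at a world of a Fitting model for \fojtff\ meeting $\C$, i.e.\ $\nmodels_{\C}\varphi$, which is the contrapositive of the claim.

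The real content is not in this final argument but in confirming that the supporting apparatus transfers to the \fojtff\ setting, and this is where I expect the main obstacle to lie. The verifications just above the statement already handle the structural side---that $\R$ is an equivalence relation (reflexivity and transitivity as in the \folpb\ canonical model, symmetry from axiom \textbf{B6}), that $\E$ is a strong evidence function, and that the $?$ Condition holds. What remains to be checked is that the Templates-section propositions and the expansion Propositions~\ref{basicexpansion} and~\ref{secondaryexpansion} remain valid when the underlying provability relation is that of \fojtff. Here the one delicate point is the Generalized Barcan Proposition~\ref{generalized}, since it is the sole place where axiom \textbf{Bb} was used: because \textbf{Bb} is now a \emph{derived} schema via the defined term $\jb(t)$ (the explicit Barcan formula being provable for it), the proof of Proposition~\ref{generalized} goes through with the defined $\jb(t)$ in place of the primitive $\jb$, and every other Template lemma uses only \textbf{A2}, \textbf{A3}, \textbf{B2}, \textbf{B3}, \textbf{B5} and Internalization, all common to both systems. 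Once this is in place, Proposition~\ref{up} and the whole Henkin construction transfer unchanged, and the completeness proof closes exactly as above.
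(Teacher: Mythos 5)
Your proposal matches the paper's own treatment: the paper likewise reduces everything to the \folpb\ argument, verifies only the new canonical-model features (symmetry of $\R$ via \textbf{B6}, the $?$ Condition, strong evidence), and relies on the derived $\jb(t)$ term to make axiom \textbf{Bb} available so that Proposition~\ref{generalized} and the whole Henkin construction carry over. Your explicit identification of Proposition~\ref{generalized} as the one place where the primitive/derived status of $\jb$ matters is exactly the right point, and the rest of your argument is the paper's argument.
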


\begin{theor}
Let $\C$ be a variant closed and axiomatically appropriate constant specification. The canonical model of \fojtff\ using $\C$ is fully explanatory.
\end{theor}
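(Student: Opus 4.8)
The plan is to replay, essentially word for word, the proof that the \folpb\ canonical model is fully explanatory, since every ingredient of that argument survives the passage to \fojtff. First I would record that the Truth Lemma holds for the \fojtff\ canonical model: this is the exact analog of Lemma~\ref{truthlemma}, and its proof consults only the clauses defining $\E$, $\R$ and $\I$ together with axioms \textbf{A2}, \textbf{A3}, \textbf{B1}, \textbf{B5} and the admits-instantiation property. All of these are now available, since we have already verified that $\R$ is an equivalence relation and that $\E$ is a strong evidence function meeting the $?$ Condition.

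With the Truth Lemma in hand, the argument runs as follows. Fix a canonical model $\M = \model$ for \fojtff, take $\Gamma \in \W$, let $\varphi$ be a closed Henkin formula, and let $X$ be the set of witness variables occurring in $\varphi$. Assuming $\M,\Gamma \nmodels \lc{t}_{X}\varphi$ for every $t \in \Tj$, I must produce $\Delta \in \W$ with $\Gamma \R \Delta$ and $\M,\Delta \nmodels \varphi$. By the Truth Lemma, $\neg\lc{t}_{X}\varphi \in \Gamma$ for every $t$. I would set $G(\tp) = \tp$, which is a disjunctive template, and $F(\tp) = \Box G(\tp)$; then $\llbracket \neg F(\varphi) \rrbracket \subseteq \Gamma$, so $\Gamma \cup \llbracket \neg F(\varphi) \rrbracket$ is consistent. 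Item~2 of Proposition~\ref{up} yields that $\Gamma^{\#} \cup \llbracket \neg G(\varphi) \rrbracket = \Gamma^{\#} \cup \{\neg\varphi\}$ is consistent; Proposition~\ref{instantiationsharp} together with Lemma~\ref{betweenlema} shows this set admits instantiation; and Proposition~\ref{secondaryexpansion} extends it to a maximally consistent $\Delta$ that admits instantiation, with $\Gamma^{\#} \cup \{\neg\varphi\} \subseteq \Delta$. Then $\Gamma^{\#} \subseteq \Delta$ gives $\Gamma \R \Delta$, and $\neg\varphi \in \Delta$ gives $\M,\Delta \nmodels \varphi$ by the Truth Lemma, completing the proof.

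The real content — and the only place where \fojtff\ genuinely differs from \folpb — is justifying that the template machinery invoked above (Propositions~\ref{up}, \ref{instantiationsharp} and \ref{secondaryexpansion}, and Lemma~\ref{betweenlema}) remains available. Tracing the dependencies, Proposition~\ref{up} and Proposition~\ref{instantiationsharp} rest only on \textbf{A2}, \textbf{A3}, \textbf{B2}, \textbf{B5}, \textbf{B3}, Internalization~\ref{internalizationtheorem} and classical reasoning, all of which hold in \fojtff\ unchanged. Proposition~\ref{secondaryexpansion}, however, flows through Lemma~\ref{beforeBe} and Proposition~\ref{existentialinstantiation} back to the Generalized Barcan Proposition~\ref{generalized}, whose \folpb\ proof used the \emph{primitive} axiom \textbf{Bb}. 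Since $\jb$ is not primitive in \fojtff\ and \textbf{Bb} is dropped, I expect this to be the main obstacle.

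The resolution is that \textbf{Bb} is now a \emph{derived} implication rather than a missing one. The Explicit counterpart of the Barcan Formula, proved just above, supplies for each $t$ a term $\jb(t)$ with $\turn_{\C} \forall y \lc{t}_{Xy}\varphi(y) \rightarrow \lc{\jb(t)}_{X}\forall y \varphi(y)$. Substituting this derived implication for the single appeal to the \textbf{Bb} axiom in the modal case of Proposition~\ref{generalized} leaves every other step of that proof — the uses of \textbf{A2}, \textbf{A3}, \textbf{B2}, Internalization~\ref{internalizationtheorem} and classical reasoning — completely untouched. Hence Proposition~\ref{generalized}, and therefore all of its consumers down to Proposition~\ref{secondaryexpansion}, go through verbatim. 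Once this observation is made, no further calculation is required, and the theorem follows by exactly the chain of citations used in the \folpb\ case.
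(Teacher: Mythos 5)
Your proof is correct and follows essentially the same route the paper intends: the paper offers no separate argument for this theorem, simply remarking that the \folpb\ results carry over once $\R$ is shown to be an equivalence relation, $\E$ a strong evidence function satisfying the $?$ Condition, and the \textbf{Bb} formula is derived for the defined term $\jb(t)$ --- which is precisely the substitution you make in the modal case of Proposition~\ref{generalized}. One minor quibble on dependency tracing: Lemma~\ref{beforeBe} and Proposition~\ref{existentialinstantiation} feed Proposition~\ref{basicexpansion} (hence Completeness) rather than Proposition~\ref{secondaryexpansion}, so the derived Barcan formula is not strictly required by the chain of citations in this particular theorem, though your resolution is sound and is indeed what keeps the surrounding canonical-model framework intact.
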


\section{Future Work}
There is both a constructive and a non-constructive proof of the Realization Theorem for first-order \jlp. The constructive one uses a cut-free sequent calculus \cite{Artemov11}, and the non-constructive one uses the Model Existence Theorem \cite{Fitting13}. How to adapt these methods successfully to \folpb\ and \fojtff\ will be the subject of further exploration. The Completeness Theorem presented here should be a useful tool in this endeavor.

One more remark is worth mentioning. In \cite{Fine79} Kit Fine showed that the Interpolation Theorem fails for constant domain first-order modal logic. The reason for this failure is understood to be a lack of expressiveness of the quantified modal logic. With hybrid logic we have an example of an extension of modal logic that can be used to repair this theorem \cite{Areces03}, and thus it is natural to ask \emph{if the novelties introduced by justification logic are expressive enough to restore the Interpolation Theorem}.

Craig's Interpolation Theorem could be formulated for \folpb, perhaps as follows. 
\begin{itemize}
\item[] \emph{The Interpolation Theorem} holds for \folpb\ iff for every constant specification $\C$ and sentences $\varphi$ and $\psi$, if $\turn_{\C} \varphi \rightarrow \psi$, then there is a formula $\theta$ such that $\turn_{\C} \varphi \rightarrow \theta$, $\turn_{\C} \theta \rightarrow \psi$ and the non-logical symbols and the justification terms that occur in $\theta$ occur both in $\varphi$ and $\psi$.
\end{itemize}

Among the systems studied by Fine is constant domain first-order \msf\ (\fomsf). As the reader can easily verify, \emph{if the Realization Theorem holds between \folpb\ and \fomsf, then it follows that the Interpolation Theorem fails for \folpb.} 
Investigation of the Realization Theorem for \folpb\ can also reveal something of the expressive power of justification logic; hence this topic is not only a subject of interest to the researchers involved in justification logic, but is also a topic of interest for the broader modal logic community.


\begin{thebibliography}{100}

\bibitem{Areces03} C. Areces, P. Blackburn and M. Marx. 2003 Repairing the Interpolation Theorem in quantified modal logic. Annals of Pure and Applied Logic 124, p. 278-299.

\bibitem{Artemov01} S. Artemov. 2001. Explicit provability and constructive semantics. The Bulletin of Symbolic Logic 7, p. 1-36. 

\bibitem{Artemov11} S. Artemov and T. Yavorskaya (Sidon). September 2011. First-order logic of proofs. Technical Report TR-20111005, CUNY Ph.D. Program in Computer Science.

\bibitem{Fine79} K. Fine. 1979. Failures of the interpolation lemma in quantified modal logic. The Journal of Symbolic Logic 44, p. 201-206. 

\bibitem{Fitting07}  M. Fitting. 2007. Realizations and LP. In S. N. Artemov and A. Nerode, editors, Logical Foundations of Computer Science, International Symposium, LFCS 2007, New York, NY, USA, June 4–7, 2007, Proceedings. Lecture Notes in Computer Science 4514, 212–223. Springer. 

\bibitem{Fitting08} M. Fitting. 2008. A quantified logic of evidence. Annals of Pure and Applied Logic 152, 67–83.

\bibitem{Fitting13} M. Fitting. 2013. Realization using the model existence theorem. Journal of Logic and Computation (online).

\bibitem{Fitting14} M. Fitting. 2014. Possible world semantics for first-order LP. Annals of Pure and Applied Logic 165, 225-240.

\bibitem{Fitting16} M. Fitting. 2016. Modal logics, justification logics, and realization. Annals of Pure and Applied Logic 167, 615–648.

\bibitem{Hughes96} G. E. Hughes and M. Cresswell. 1996. A new introduction to modal logic. Routledge. 

\end{thebibliography}
\end{document}